\documentclass{article}
\usepackage[sort&compress,numbers]{natbib}
\usepackage[left=1.1in, right=1.1in, top=1.2in, bottom=1.2in]{geometry}
\usepackage{amsmath,amssymb,latexsym, amsopn, amsfonts}
\usepackage[ruled,vlined,linesnumbered]{algorithm2e}
\usepackage{framed}
\usepackage[usenames, dvipsnames]{color}
\usepackage{xcolor}
\usepackage{array}
\usepackage{graphicx}
\usepackage{setspace}
\usepackage[hidelinks]{hyperref} 

\newcommand{\bigO}{\mathrm{O}}
\newcommand{\remove}[1]{}

\newcommand{\algorithmFontSize}{footnotesize}
\newcommand{\macroFontSize}{footnotesize}

\newtheorem{remark}{Remark}[section]
\newtheorem{theorem}{Theorem}[section]
\newtheorem{lemma}[theorem]{Lemma}

\newtheorem{definition}{Definition}[section]
\newtheorem{argument}{Lemma}[section]

\newtheorem{corollary}[theorem]{Corollary}
\newtheorem{claim}[theorem]{Claim}

\SetKwInOut{Input}{input}
\SetKwInOut{Output}{output}

\newenvironment{proof}{\noindent{\bf Proof.}}{\hfill$\blacksquare$}


\newcommand{\Paragraph}[1]{\paragraph{\bf\emph{#1}}}

\newcommand{\bl}[1]{\textcolor{black}{#1}}
\newcommand{\trnsfr}[1]{\textcolor{black}{#1}}
\newcommand{\pend}[1]{\color{black}{#1}\color{black}}

\newcommand{\process}{maintainCntrs}

\newcommand{\seemCrd}{seemCrd}
\newcommand{\valCrd}{valCrd}
\newcommand{\majority}{maj}

\makeatletter
\newcommand{\RemoveAlgoNumber}{\renewcommand{\fnum@algocf}{\AlCapSty{\AlCapFnt\algorithmcfname}}}
\newcommand{\RevertAlgoNumber}{\algocf@resetfnum}
\makeatother


\date{}

\begin{document}


%
%
%
%
%



%

\title{Practically-Self-Stabilizing Virtual Synchrony\footnote{A preliminary version of this work has appeared in the proceedings of the 17th International Symposium on Stabilization, Safety, and Security of Distributed Systems (\emph{SSS'15}).}}

\author{Shlomi Dolev~\footnote{Department of Computer Science, Ben-Gurion University of the Negev, Beer-Sheva, Israel. Email {\tt dolev@cs.bgu.ac.il} Partially supported by Rita Altura Trust Chair in Computer Sciences, Lynne and William Frankel Center for Computer Sciences and Israel
Science Foundation (grant number 428/11).} \and Chryssis Georgiou~\footnote{ Department of Computer Science, University of Cyprus, Nicosia, Cyprus. Email {\tt {\char '173}chryssis, imarco01{\char '175} @cs.ucy.ac.cy}. The work of Ioannis Macoullis is supported by a Doctoral Scholarship program of the University of Cyprus.} \and~Ioannis Marcoullis$^\ddag$ \and Elad M.\ Schiller~\footnote{Department of Engineering and Computer Science, Chalmers University of Technology, Gothenburg, SE-412 96, Sweden, Email {\tt elad@chalmers.se}.}}

\maketitle

%
%
%
%
%
%

%
%
%
%



\begin{abstract}
\remove{ 
Virtual synchrony is an important abstraction that is proven to be extremely useful when implemented over asynchronous, typically large, message-passing distributed systems. Fault tolerant design is a key criterion for the success of such implementations. This is because large distributed systems can be highly available as long as they do not depend on the full operational status of every system participant. Namely, they employ redundancy in numbers to overcome non-optimal behavior of participants and to gain global robustness and high availability. 

Self-stabilizing systems can tolerate transient faults that drive the system to an arbitrary unpredicted configuration. Such systems automatically regain consistency from any such arbitrary configuration, and then produce the desired system behavior. Practically self-stabilizing systems ensure the desired system behavior for practically infinite number of successive steps e.g., $2^{64}$ steps.

We present the first practically self-stabilizing virtual synchrony algorithm. The algorithm is a combination of several new techniques that may be of independent interest. In particular, we present a new counter algorithm that establishes an efficient practically unbounded counter, that in turn can be directly used to implement a self-stabilizing Multiple-Writer Multiple-Reader (MWMR) register emulation. Other components include self-stabilizing group membership, self-stabilizing multicast, and self-stabilizing emulation of replicated state machine. As we base the replicated state machine implementation on virtual synchrony, rather than consensus, the system progresses in more extreme asynchronous executions in relation to consensus-based replicated state machine.
}

The virtual synchrony abstraction was proven to be extremely useful for asynchronous, large-scale, message-passing distributed systems. Self-stabilizing systems can automatically regain consistency after the occurrence of transient faults.

We present the first \bl{practically-}self-stabilizing virtual synchrony algorithm that uses a new counter algorithm that establishes an efficient practically unbounded counter, which in turn can be directly used for emulating a self-stabilizing Multiple-Writer Multiple-Reader (MWMR). Other self-stabilizing services include membership, multicast, and replicated state machine (RSM) emulation. As we base the latter on virtual synchrony, rather than consensus, the system can progress in more extreme asynchronous executions than consensus-based RSM emulations.
\end{abstract}


%

\section{Introduction} 

\bl{\emph{Virtual Synchrony} (VS) is an important property provided by several Group Communication Systems (GCSs) that has proved to be valuable in the scope of fault-tolerant distributed systems where communicating processors are organized in process groups with changing membership~\cite{DBLP:conf/replication/Birman10}. During the computation, groups change allowing an outside observer to track the history (and order) of the groups, as well as the messages exchanged within each group.
The VS property guarantees that any two processors that both participate in two consecutive such groups, should deliver the same messages in their respective group.}
Systems that support the VS abstraction are designed to operate in the presence of fail-stop failures of a minority of the participants. Such a design fits large computer clusters, data-centers and cloud computing, where at any given time some of the processing units are non-operational. 
Systems that cannot tolerate such failures degrade their functionality and availability to the degree of unuseful systems.   

Group communication systems that realize the VS abstraction provide services, such as {\em group membership} and {\em reliable group multicast}. 
The group membership service is responsible for providing the current {\em group view} of the recently live and connected group members, i.e., a processor set and a unique {\em view identifier}, which is a sequence number of the view installation. 
The reliable group multicast allows the service clients to exchange messages with the group members as if it was a single communication endpoint with a single network address and to which messages are delivered in an atomic fashion, thus any message is either delivered to all recently live and connected group members prior to the next message, or is not delivered to any member. 
The challenges related to VS consist of the need to maintain atomic message delivery in the presence of asynchrony and crash failures. 
VS facilitates the implementation of a replicated state machine~\cite{DBLP:conf/replication/Birman10} that is more efficient than classical consensus-based implementations that start every multicast round with an agreement on the set of recently live and connected processors.
It is also usually easier to implement~\cite{DBLP:conf/replication/Birman10}. 

\Paragraph{Transient faults} Transient violations of design assumptions can lead a system to an arbitrary state.
For example, the assumption that error detection ensures the arrival of correct messages and the discarding of corrupted messages, might be violated since error detection is a probabilistic mechanism that may not detect a corrupt message. 
As a result, the message can be regarded as legitimate, driving the system to an arbitrary state after which, availability and functionality may be damaged forever, requiring human intervention. 
In the presence of transient faults, large multicomputer systems providing VS-based services can prove hard to manage and control. 
One key problem, not restricted to virtually synchronous systems, is catering for counters (such as view identifiers) reaching an arbitrary value. 
How can we deal with the fact that transient faults may force counters to wrap around to the zero value and violate important system assumptions and correctness invariants, such as the ordering of events? 
A self-stabilizing algorithm~\cite{D2K} can automatically recover from such unexpected failures, possibly as part of after-disaster recovery or even after benign temporal violations of the assumptions made in the design of the system. 
To the best of our knowledge, no {\em stabilizing virtual synchrony} solution exists.
We tackle this issue in our work.

\Paragraph{Practically-self-stabilization} \sloppy A relatively new self-stabilization paradigm is \emph{practically-self-stabilization}~\cite{Alon2014,DolevKS2010ConsMeetsSS,DBLP:conf/netys/BlanchardDBD14}.
Consider an asynchronous system with bounded memory and data link capacity in which corrupt pieces of data (stale information) exist due to a transient fault. (Recall that transient faults can result in the appearance of corrupted information, which the system tends to spread and thus reach an arbitrary state.)
These corrupted data may appear \emph{unexpectedly} at any processor as they lie in communication links, or may (indefinitely) remain ``hidden'' in some processor's local memory until they are added to the communication links as a response to some other processor's input.
Whilst these pieces of corrupted data are bounded in number due to the boundedness of the links and local memory, they can eventually force the system to lose its safety guarantees. 
%
%
Such corrupt information may repeatedly drive the system to an undesired state of non-functionality. This is true for all systems and self-stabilizing systems are required to eradicate the presence of all corrupted information.
In fact, whenever they appear, the self-stabilizing system is required to regain consistency and in some sense stabilize. 
One can consider this as an adversary with a limited number of chances to interrupt the system, but only itself knows \emph{when} it will do this.

In this perspective, self-stabilization, as it was proposed by Dijkstra~\cite{Dijkstra74}, is not the best design criteria for asynchronous systems for which we cannot specifically define \emph{when} stabilization is expected to finish (in some metric like asynchronous cycles, for example).
%
The newer criterion of practically-stabilizing systems is closely related to pseudo-self-stabilizing systems~\cite{DBLP:journals/dc/BurnsGM93}, as we explain next.
%
Burns, Gouda and Miller~\cite{DBLP:journals/dc/BurnsGM93} deal with the above challenge by proposing the design criteria of {\em pseudo-self-stabilization}, which merely bounds the number of possible safety violations. Namely, their approach is to abandon Dijkstra's seminal proposal~\cite{Dijkstra74} to bound the period in which such violations occur (using some metric like asynchronous cycles). We consider a variation on the design criteria for pseudo-self-stabilization systems that can address additional challenges that appear when implementing a decentralized shared counter that uses a constant number of bits. 

Self-stabilizing systems can face an additional challenge due to the fact that a single transient fault can cause the counter to store its maximum possible value and still (it is often the case that) the system needs to be able to increment the counter for an unbounded number of times. The challenge becomes greater when there is no elegant way to show that the system can always maintain an order among the different values of the counter by, say, wrapping to zero in such integer overflow events. Arora, Kulkarni and Demirbas~\cite{DBLP:journals/jpdc/AroraKD06} overcome the challenge of integer overflow by using non-blocking resets in the absence of faults described~\cite{DBLP:journals/jpdc/AroraKD06}. In case faults occur, the system recovery requires a blocking operation, which performs a distributed global reset. This work considers a design criteria for message passing systems that perform in a wait-free manner also when recovering from transient faults. 

Note that, from the theoretical point of view, systems that take an extraordinary large number of steps (that accedes the counter maximum value, or even an infinite number of steps) are bound to violate any ordering constraints. This is because of the asynchronous nature of the studied system, which could arbitrarily delay a node from taking steps or defer the arrival of a message until such violations occur after, say, a counter wraps around to zero. Having practical systems in mind, we consider systems for which the number of sequential steps that they can take throughout their lifetime is not greater than an integer that can be represented using a constant number of bits. For example, Dolev, Kat and Schiller~\cite{DolevKS2010ConsMeetsSS} assume that counting from zero to $2^{64}-1$ using sequential steps is not possible in any practical system and thus consider only a \textit{practically infinite period}, of $2^{64}$ sequential steps, that the system takes when demonstrating that safety is not violated. 
The design criteria of practically-self-stabilizing systems~\cite{SalemSchiller2017,Alon2014,DBLP:conf/netys/BlanchardDBD14} requires that there is a bounded number of possible safety violations during any practically infinite period of the system execution. For such (message passing) systems, we provide a decentralized shared counter algorithm that performs in a wait-free manner also when recovering from transient faults.

\remove{
Let us employ an example. 
If our task would require an infinite integer counter, e.g., for implementing atomic register emulation counter where we would like to distinguish written values based on their timestamp over an infinitely long period (i.e., we would not employ a bounded counter), then we would practically use a very large upper value for our counter based on our system's capabilities.
Some $2^\tau$-bit counter with $\tau=64$ would obviously do the job, since, when initiated at $0$, it can be projected to outlast our system's lifetime several times (see Section~\ref{sec:Labels} for more on this). 
I.e., in a long enough execution, this counter would be \emph{practically} infinite.

Letting transient faults into the picture, a flick of a bit can drive our counter straight (or near) to its maximal $2^\tau$ limit. 
The need for stabilizing algorithms seems like an obvious fault-tolerance solution to the above threat.
We suggest a practically stabilizing counter that within a practically infinite execution will provide a practically infinite \emph{stabilizing} counter, i.e., that includes $2^\tau$ counter increments. 
Our contributions follow the above line of thought.
In the sequel, we will use $2^\tau$ to imply a number large enough to be considered practically infinite in the context described here.
We provide a more formal definition to practical self-stabilization in the following section.
}

\Paragraph{Contributions} 
We present the first 
\bl{practically-self-stabilizing (or \emph{practically-stabilizing})} virtual synchrony solution. 
Specifically: \vspace{-.4em}
\begin{itemize}\setlength\itemsep{-0.3em}
\item We provide a \bl{practically-self-stabilizing} counter algorithm using 
bounded memory 
and communication bandwidth, where many writers can increment the counter for an unbounded number of times in the presence of 
processor crashes and unbounded communication delays. 
Our counter algorithm is modular with a simple interface for increasing and reading the counter, as well as providing the identifier of the 
processor that has incremented it. 
\item At the heart of our counter algorithm is the underlying labeling algorithm that extends the label scheme of Alon et al.~\cite{Alon2014} to support {\em multiple} writers, whilst the algorithm specifies how the processors exchange their label information in the asynchronous system  and how they maintain proper label bookkeeping so as to ``discover'' the greatest label and discard all obsolete ones.
\item An immediate application of our counter algorithm is a \bl{practically-}self-stabilizing MWMR register emulation. 

\item 
The \bl{practically-}self-stabilizing counter algorithm, together with 
implementations of a \bl{practically-}self-stabilizing reliable multicast service and membership service that we propose, are composed to yield a \bl{practically-}self-stabilizing coordinator-based Virtual Synchrony solution. 
\item Our Virtual Synchrony solution yields a \bl{practically-self-stabilizing} State Machine Replication (SMR) implementation. As this implementation is
based on virtual synchrony rather than consensus, the system can progress in more extreme asynchronous executions than consensus-based SMR implementations. 
\end{itemize}

\Paragraph{Related Work} 
Leslie Lamport was the first to introduce SMR, presenting it  as an example in~\cite{DBLP:dblp_journals/cacm/Lamport78}.
Schneider~\cite{DBLP:journals/csur/Schneider90} gave a more generalized approach to the design and implementation of SMR protocols. 
Group communication services can implement SMR by providing reliable multicast that guarantees VS~\cite{DBLP:journals/jsa/Bartoli04}.
Birman et al. were the first to present VS 
and a series of improvements in the efficiency of ordering protocols
~\cite{birmanbook}.
Birman gives a concise account of the evolution of the VS model for SMR 
in~\cite{DBLP:conf/replication/Birman10}.
 
Research during the last recent decades resulted in an extensive literature on ways to implement VS 
and SMR, as well as industrial construction of such systems. A recent research line on 
stabilizing versions of replicated state machines~\cite{DolevKS2010ConsMeetsSS,DBLP:dblp_conf/sss/DolevLLN05,Alon2014,DBLP:conf/netys/BlanchardDBD14} obtains self-stabilizing replicated state machines in shared memory as well as in synchronous and asynchronous message passing systems.

\sloppy{The bounded labeling scheme and the use of \bl{practically} unbounded sequence numbers proposed in~\cite{Alon2014}, allow the creation of \bl{practically-}stabilizing bounded-size solutions to the never-exhausted counter problem in the restricted case of a single writer.}
In~\cite{DBLP:conf/netys/BlanchardDBD14} a \bl{practically-}self-stabilizing version of Paxos was developed, which led to a \bl{practically}-self-stabilizing consensus-based SMR implementation. 
To this end, they extended the labeling scheme of \cite{Alon2014} to allow for multiple counter writers, since unbounded counters are required for ballot numbers.
Extracting this scheme for other uses 
does not seem intuitive. 
We present a simpler and significantly more communication efficient 
practically infinite counter that also supports many writers, 
where only a pair of labels rather than a vector of labels needs to be communicated. 
Our solution is {\em highly modular} and can  be easily used in any similar setting requiring such counters.
\bl{We also note that with~\cite{Alon2014}'s single writer atomic register emulation, a quorum read of the value could return without a value if the writer did not before perform a write to establish a maximal tag. 
An emulation based on our multiple-writer version guarantees that reads may always terminate with a value, since our labeling algorithm continuously maintains a maximal tag.}

\bl{In what follows, Section~\ref{s:sys} presents the system settings and the necessary definitions. 
Section~\ref{sec:Labels} details the practically-self-stabilizing Labeling Scheme and Increment Counter algorithms. 
In Section~\ref{sec:VS} we present the practically-self-stabilizing Virtual Synchrony algorithm and the resulting replicate state machine emulation. 
We conclude with Section~\ref{s:concl}.
}


\section{\bl{System Settings and Definitions}}
\label{s:sys}

We consider an asynchronous message-passing system. 
The system includes a set $P$ of $n$ communicating processors; we refer to the processor with identifier $i$, as $p_i$. 
\bl{At most $n/2-1$ processors may fail by crashing and these may sometimes be referred to as \emph{inactive} in contrast to \emph{active} processors that are not crashed.}
We assume that the system runs on top of a stabilizing data-link layer that provides reliable FIFO communication over unreliable bounded capacity channels as the ones of~\cite{DBLP:journals/ipl/DolevDPT11,DBLP:conf/sss/DolevHSS12}. 
The network topology is of a fully connected graph where every two processors exchange (low-level messages called) {\em packets} to 
enable a reliable delivery of (high level) messages. When no confusion is possible we use the term messages for packets. 

\Paragraph{\bl{Communication and data link implementation}} 
The communication links have bounded capacity, so that the number of messages in every given instance is bounded by a constant  \bl{$cap$, which is known to the processors}. 
When processor $p_i$ sends a packet, $\pi$, to processor $p_j$, the operation $send$ inserts a copy of $\pi$ to the 
FIFO queue that represents the communication channel from $p_i$ to $p_j$, while respecting an upper bound on the number of packets in the channel, possibly omitting the new packet or one of the already sent packets. 
When $p_j$ receives $\pi$ from $p_j$, $\pi$ is dequeued from the queue representing the channel. We assume that packets can be spontaneously  omitted (lost) from the channel, however, a packet that is sent infinitely often is received infinitely often. 

\trnsfr{
One version of a self-stabilizing FIFO data link implementation that we can use, is based on the fact that communication links have bounded capacity. Packets are retransmitted until more than the total capacity acknowledgments arrive; while acknowledgments are sent only when a packet arrives (not spontaneously)~\cite{DBLP:journals/ipl/DolevDPT11,DBLP:conf/sss/DolevHSS12}.
Over this data-link, the two connected processors can constantly exchange a ``token". Specifically, the sender (possibly the processor with the
highest identifier among the two) constantly sends  packet $\pi_1$ until it receives enough acknowledgments (more than the capacity). 
Then, it constantly sends packet $\pi_2$, and so on and so forth. This assures that the receiver has received packet $\pi_1$ before the sender starts
sending packet $\pi_2$. 
This can be viewed as a token exchange. We use the abstraction of the token carrying messages back and forth between any two communication entities. 
a {\em heartbeat} to (imperfectly) detect whether a processor is active or not; when a processor in no longer active, the token will not be returned back to the other processor.
}


\Paragraph{\bl{Definitions and complexity measures}}Every processor, $p_i$, executes a program that is a sequence of {\em (atomic) steps}, where a \emph{step} starts with local computations and ends with a single communication operation, which is either $send$ or $receive$ of a packet. For ease of description, we assume the interleaving model, where steps are executed atomically, a single step at any given time. An input event can be either the receipt of a packet or a periodic timer triggering $p_i$ to (re)send. Note that the system is asynchronous and the rate of the timer is totally unknown. 

The {\em state}, $s_i$, of a node $p_i$ consists of the value of all the variables of the node including the set of all incoming communication channels. The execution of an algorithm step can change the node's state. The term {\em (system) configuration} is used for a tuple of the form $(s_1, s_2, \cdots, s_n)$, where each $s_i$ is the state of node $p_i$ (including messages in transit to $p_i$). We define an {\em execution (or run)} $R={c_0,a_0,c_1,a_1,\ldots}$ as an alternating sequence of system configurations $c_x$ and steps $a_x$, such that each configuration $c_{x+1}$, except the initial configuration $c_0$, is obtained from the preceding configuration $c_x$ by the execution of the step $a_x$.

\pend{
An execution $R_p$ is \emph{practically infinite execution} if it contains a chain of steps ordered according to Lamport's happened-before relation~\cite{DBLP:dblp_journals/cacm/Lamport78} that are longer than $2^\tau$ ($\tau$ being, for example, $64$), namely they are practically infinite for any given system~\cite{DolevKS2010ConsMeetsSS}. 
Similar to an infinite execution, a processor that fails by crashing stops taking steps, and any processor that does not crash eventually takes a practically infinite number of steps.
The code of self-stabilizing algorithms reflects the requirement for non-termination in that it usually consists of a $\tt do - forever$ loop that contains communication operations with the neighbors and validation that the system is in a consistent state as part of the transition decision. 
An {\em iteration} of an algorithm formed as a  $\tt do - forever$ loop is a complete run of the algorithm starting in the loop's first line and ending at the last line, regardless of whether it enters branches. (Note that an iteration may contain many steps).%
%
\remove{
\trnsfr{A practically infinite execution is an execution with many steps, \textcolor{red}{[[I have deleted (and iterations) from here]]} 
where many is defined to be proportional to the time it takes to execute a step and the life-span time of a system. \textcolor{red}{[[Can we sharpen up this sentence? It is provoking...]]}}}
}

We define the system's task by a set of executions called {\em legal executions} ($LE$) in which the task's requirements hold,
we use the term {\em safe configuration} for any configuration in any execution in $LE$.
\bl{As defined by Dijkstra in~\cite{Dijkstra74}, an algorithm is {\em self-stabilizing} with relation to the task $LE$ when every (unbounded) execution of the algorithm reaches a safe configuration with relation to the algorithm and the task.} \pend{We define the system's abstract task ${\cal T}$ by a set of variables (of the processor states) and constraints, which we call the system requirements, in a way that implies the desired system behavior~\cite{D2K}.
	 Note that an execution $R$ can satisfy the abstract task and still not belong to $LE$, because $R$ considers only a subset of variables, whereas the configurations of executions that are in $LE$ consider every variable in the processor states and message fields.
	 An algorithm is {\em practically-self-stabilizing} (or just \emph{practically-stabilizing})  with relation to the task ${\cal T}$ if in any practically infinite execution has a bounded number of deviations ${\cal T}$~\cite{SalemSchiller2017}.

This defines a measure for complexity.
The asynchrony of the system makes it hard, if not impossible to infer anything on stabilization \emph{time}, since we cannot predict \emph{when} an element from the corrupt state of the system will reach a processor, (cf. self-stabilizing solutions that give time complexity in asynchronous rounds). 
Based on the above definition of practically-stabilizing algorithms, a bounded number of corrupt elements that might force the system to \emph{deviate} from its task even if these may or may not (due to asynchrony) appear. 
Whenever a deviation happens, a number of algorithmic operations are required to satisfy ${\cal T}$ once again.
As a complexity measure, we bound the total of these operations throughout an execution. 
These operations differ by algorithm, i.e., it is label creations in the labeling scheme, counter increments for the counter increment algorithm and view creations in the virtual synchrony algorithm. 
}

\begin{figure}[t]
\begin{center}
\includegraphics[scale=0.55]{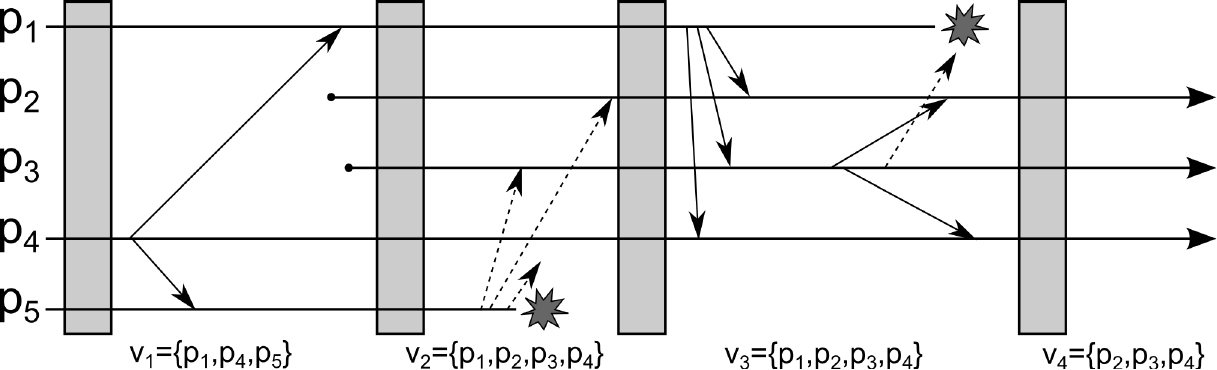}
\end{center}
\caption{\small An execution satisfying the VS property. 
The grey boxes indicate a new view installation, and the example shows four views.
View $v_1$ initially with membership $\{p_1, p_4, p_5\}$.
The reliable multicast reaches all members of the group.
Two new processors $p_2$ and $p_3$ join the group, forming view $v_2$.
In this view, $p_5$ crashes before completing its multicast which is ignored (dashed lines). 
The new view $v_3$ is formed to exclude $p_5$, and in it, $p_1$ manages a successful multicast before crashing.
The multicast of $p_3$ is reliable and guaranteed to be delivered to all non-crashed within the view, that is excluding $p_1$ which might or might not have received it (dotted line).
A new view is then formed to encapture the failure of $p_1$.}
\label{fig:VSexec}
\end{figure}
\bl{The {\em virtual synchrony task} uses the notion of a \emph{view}, a group of processors that perform multicast within the group and is uniquely identified, to ensure that any two processors that belong to two views that are consecutive according to their identifier, {\em deliver}  identical message sets in these views.} The legal execution of virtual synchrony is defined in terms of the input and output sequences of the system with the environment. When a majority of processors are continuously active every external input (and only the external inputs) should be atomically accepted and processed by the majority of the active processors. \bl{The system works in the primary component, i.e., it does not deal with partitions and requires that a view contains a majority of the system's processors, i.e., its membership size is always greater than $n/2$. Therefore,}  
there is no delivery and processing guarantee in executions in which there is no majority, still in these executions any delivery and processing is due to a received environment input.  Figure~\ref{fig:VSexec} is an example of a virtually synchronous execution.\\
%
\textbf{Notation.} Throughout the paper we use the following notation. Let $y$ and $y'$ be two objects that both include the field $x$. We denote  $(y =_{x}y')$ $\equiv$ $(y.x$ $=$ $y'.x)$.

\section{\bl{Practically-}Self-Stabilizing Labeling Scheme and 
Counter Algorithm} 
\label{sec:Labels}

\bl{\pend{Many system like the ones performing replication (e.g. GCSs requiring group identifiers, of Paxos implementations requiring ballot numbers) assume access to an infinite (unbounded) counter.
We proceed to give a practically-stabilizing, practically infinite counter based on a bounded labeling scheme.
Note that by a \emph{practically infinite (\emph{or} unbounded) counter} we imply that a $\tau$-bit counter (e.g., $64$-bit) is not truly infinite (since this is anyway not implementable on hardware), but it is large enough to provide counters for the lifetime of most conceivable systems when started at $0$.
We refer the reader to the example provided by Blanchard et al.~\cite{Blanchard2013SSPaxos}, where a $64$-bit counter initialized at $0$ and incremented per nanosecond is calculated to last for around 500 years, essentially an infinity for most of today's running systems.
}}

\bl{\pend{
The task of a practically-self-stabilizing labeling scheme is for every processor that takes an infinite 
number of steps to reach to a label that is maximal for all active processors in the system.
The task of maintaining a practically infinite counter, is for every processor that takes an infinite yet bounded number of steps, to eventually be able to monotonically increment the counter from 0 to $2^\tau$. 
The latter task depends on the former to provide the maximal label in the system to be used as a sequence number epoch, so that within the same epoch, the integer sequence number is incremented as a practically infinite counter.
It is implicit that the tasks are performed in the presence of corrupt information that might exist due to transient faults.}}

\bl{\pend{
Our solutions are \emph{practically infinite}, in the following way. 
A bounded amount of stale information from the corrupt initial state, may unpredictably 
corrupt the counter.
In such cases, processors are forced to change their labels and restart their counters.
A processor cannot predict whether a corrupt piece of information exists, or \emph{when} will it make its appearance as this is essentially the work of asynchrony.
Our solutions guarantee that only a bounded number of labels will need to change, or that only a bounded number of counter increments will need to take place before we reach to one that is eligible to last its full $2^\tau$ length, less the fact that this maximal value is practically unattainable.
}
}

We first present and prove the correctness of a practically-stabilizing labeling algorithm,
and then explain how this can be extended to implement practically stabilizing, practically  
unbounded counters in Section~\ref{subsec:CounterI}. 

\subsection{Labeling Algorithm for Concurrent Label Creations} 
\label{subsec:Labels} 
%
%
\subsubsection{Preliminaries}
\Paragraph{\bl{Bounded labeling scheme}}
\trnsfr{
The  bounded labeling scheme of Alon et al.~\cite{Alon2014} implements an SWMR register emulation in a message-passing system. The $labels$ (also called $epochs$) allow the system to stabilize, since once a label is established, the integer counter related to this label is considered to be practically infinite, as a $64$-bit integer is practically infinite and sufficient for the lifespan of any reasonable system. We extend the labeling scheme of~\cite{Alon2014} to support multiple writers, by including the epoch creator (writer) identity to break symmetry, and decide which epoch is the most recent one, even when two or more creators concurrently create a new label.
}

\bl{Formally defined,} we consider the set of integers $D$ $=$ $[1$, $k^{2}+1]$ \bl{such that $k \in \mathbb{N}$ a known constant to the processors, which we determine in Corollary~\ref{thm:kSize}}.
A {\em label} (or \emph{epoch}) is a triple $\langle lCreator, sting, Antistings\rangle$, where $lCreator$ is the identity of the processor that established (created) the label, $Antistings \subset D$ with $|Antistings|=k$, and $sting \in D$. Given two labels $\ell_{i}, \ell_{j}$,
we define the relation $\ell_{i}$ $\prec_{lb}$ $\ell_{j}$ $\equiv$ $(\ell_i.lCreator$ $<$ $\ell_j.lCreator)$ $\lor$ $(\ell_{i}.lCreator$ $=$ $\ell_j.lCreator$ $\land$ $((\ell_i.sting$ $\in$ $\ell_j.Antistings)$ $\land$ $(\ell_j.sting$ $\not \in \ell_i.Antistings)))$;
we use $=_{lb}$ to say that the labels are identical. Note that 
the relation $\prec_{lb}$
does not define a total order. For example, when $\ell_i =_{lCreator} \ell_j$ and $(\ell_i.sting \not \in \ell_j.Antistings)$ and 
$(\ell_j.sting \not \in \ell_i.Antisting)$ these labels are {\em incomparable}. 

\bl{As an example, consider the situation with $k=3$, and $D=\{1,2,\ldots,10\}$. 
Assume the existence of three labels $\ell_1 = \langle i, 2, \langle 3, 5, 9\rangle \rangle $, $\ell_2 = \langle i, 1, \langle 2, 9, 10\rangle \rangle $, and $\ell_3 = \langle i+1, 1, \langle 3, 5, 9\rangle \rangle$. 
In this case, $\ell_1 \prec_{lb} \ell_3$ and $\ell_2 \prec_{lb} \ell_3$, since the creator of $\ell_3$ has a greater identity than the creator of $\ell_1$ and $\ell_2$.
We can also see that $\ell_1 \prec_{lb} \ell_2$, since the sting of $\ell_1$, namely $2$, belongs to the antistings set of $\ell_2$ (which is $\langle 2, 9, 10\rangle$) while the opposite is not true for the sting of $\ell_2$.
This makes $\ell_2$ ``immune'' to the sting of $\ell_1$. 
}

As in~\cite{Alon2014}, we demonstrate that one can still use this labeling scheme as long as it is ensured that eventually a label greater than all other labels in the system is introduced. 
We say that a label $\ell$ {\bf\em cancels} another label $\ell'$, either if they are incomparable or they have the same $lCreator$ but $\ell$ is greater than 
$\ell'$ (with respect to $sting$ and $Antistings$).   
A label with creator $p_i$ is said to belong to $p_i$'s domain.

\begin{algorithm}[t!]
   \caption{The $nextLabel()$ function; code for $p_i$}
\label{alg:NB}

\begin{small}

For any non-empty set $X\subseteq D$, function $pick(d, X)$ returns $d$ arbitrary elements of $X$\;
\Input{$S = \langle \ell_1, \ell_2 \ldots, \ell_k\rangle$ set of $k$ labels.}
\Output{$\langle i, newSting, newAntistings\rangle$}
{
{\bf let } $newAntistings = \{\ell_j.sting : \ell_j\in S \}$\;

$newAntistings \gets newAntistings$ $\cup$ $pick(k-|newAntistings|,$  $D \setminus newAntistings)$\;

\Return{$\langle i, pick(1, D \setminus (newAntistings \cup \{ \cup_{\ell_j\in S} \ell_j.Antistings\})), newAntistings\rangle$}\;
}
\end{small}
\end{algorithm}

\Paragraph{Creating a largest label} Function $nextLabel()$, Algorithm~\ref{alg:NB}, gets a set of at most $k$ labels as input and returns a new label that is greater than all of the labels of the input\bl{, given that all the input labels have the same creator i.e., the same $lCreator$. This last condition is imposed by the labeling algorithm that calls $nextLabel()$, as we will see further down with a set of labels from the same processor.} It has the same functionality as the function called $Next_b()$ in~\cite{Alon2014}, \bl{but it additionally appends the label creator to the output.} 
The function essentially composes a new $Antistings$ set from the stings of all the labels that it receives as input, and chooses a $sting$ that is in none of the $Antistings$ of the input labels.
In this way it ensures that the new label is greater than any of the input.
Note that the function takes $k$ $Antistings$ of $k$ labels that are not necessarily distinct, implying at most $k^2$ distinct integers and thus the choice of $|D|$ $=$ $k^2+1$ allows to always obtain a greater integer as the $sting$. 
\bl{For the needs of our labeling scheme, $k=4(n^3cap+2n^2-2n)+1$ (Corollary~\ref{thm:kSize}).}


\Paragraph{\bl{Scheme idea and challenges}}
\trnsfr{When all processors are active, the scheme can be viewed as a simple extension of the one of~\cite{Alon2014}. 
Informally speaking, the scheme ensures that each processor $p_i$ eventually ``cleans up'' the system from obsolete labels of which $p_i$ appears to be the creator (for example, such labels could be present in the system's initial arbitrary state). 
Specifically, $p_i$ maintains a bounded FIFO history of such labels that it has recently learned, while communicating with the other processors, and creates a label greater than all that are in its history; call this $p_i$'s {\em local maximal label}. 
In addition, each processor seeks to learn the {\em globally maximal label}, that is, the label in the system that is the greatest among the local maximal ones.}

\bl{
We note here that compared to Alon et al.~\cite{Alon2014}, which only had a single writer upon the failure of whom there would be no progress thus stabilization would not be the main concern, we have \emph{multiple} label creators. 
If these creators were not allowed to crash then the extension of the scheme would be a simple exercise.
Nevertheless, when some processors can crash the problem becomes incrementally more difficult as we now explain.
The problem lies in cleaning the system of these crashed processors' labels since they will not ``clean up'' their local labels. 
Each active processor needs to do this itself, indirectly, without knowing which processor is inactive, i.e., we do not employ any form of failure detection for this algorithm.
To overcome this problem, each processor maintains bounded FIFO histories on labels appearing to have been created by other processors. 
These histories eventually accumulate the obsolete labels of the inactive processors. 
The reader may already see that maintaining these histories, also creates another source of possible corrupt labels.
We show that even in the presence of (a minority of) inactive processors, starting from an arbitrary state, the system eventually converges to use a global maximal label.
}

\trnsfr{
Let us explain why obsolete labels from inactive processors can create a problem when no one ever cleans (cancels) them up. 
Consider a system starting in a state that includes a cycle of labels $\ell_1 \prec \ell_2 \prec \ell_3 \prec \ell_1$, all of the same creator, say $p_x$, where $\prec$ is a relation between labels.
If $p_x$ is active, it will eventually learn about these labels and introduce a label greater than them all. 
But if $p_x$ is inactive, the system's asynchronous nature may cause a repeated cyclic label adoption, especially when $p_x$ has the greatest processor identifier, as these identifiers are used to break symmetry.
Say that an active processor learns and adopts $\ell_1$ as its global maximal label. 
Then, it learns about $\ell_2$ and hence adopts it, while forgetting about $\ell_1$. 
Then, learning of $\ell_3$ it adopts it. Lastly, it learns about $\ell_1$, and as it is greater than $\ell_3$, it adopts $\ell_1$ once more, as the greatest in the system; this can continue indefinitely. By using the bounded FIFO histories, such labels will be accumulated in the histories and hence will not be adopted again, ending this vicious cycle.
\bl{We now formally present the algorithm. }
}

\subsubsection{The Labeling Algorithm}
The labeling algorithm (Algorithm~\ref{alg:WFR}) specifies how the processors exchange their label information in the asynchronous system  and how they maintain proper label bookkeeping so as to ``discover'' their greatest label and cancel all obsolete ones. \bl{Specifically, we define the abstract task of the algorithm as one that lets every node to maintain a variable that holds the local maximal label.  We require that, after the recovery period and as long as there are no 
calls to $nextLabel()$ (Algorithm~\ref{alg:NB}), these local maximal label actually refer to the same global maximal label.}

As we will be using pairs of labels with the {\em same} label creator, for the ease of presentation, we will be referring to these two variables as the {\em (label) pair}.
The first label in a pair is called $ml$. The second label  
is called $cl$ and it is either $\bot$, or equal to a label that cancels $ml$ (i.e., $cl$ indicates whether $ml$ is an obsolete label or not). 

\Paragraph{The processor state}
Each processor stores an array of label pairs, $max_i[n]$, where $max_i[i]$ refers to $p_i$'s maximal label pair and $max_i[j]$ considers the most recent value 
that $p_i$ knows about $p_j$'s pair. 
Processor $p_i$ also stores the pairs of the most-recently-used labels in the array of queues $storedLabels_i[n]$. 
The $j$-th entry refers to the queue with pairs from $p_j$'s domain, i.e., that were created by $p_j$. 
The algorithm makes sure that $storedLabels_i[j]$ includes only label pairs with unique $ml$ from $p_j$'s domain and that at most one of them is \emph{legitimate}, i.e., not canceled. 
Queues $storedLabels_i[j]$ for $i\neq j$, have size $n+m$ whilst $storedLabels_i[i]$ has size $2(mn+2n^2-2n)$ 
where $m$ is the system's total link capacity in labels.
We later show (c.f. \Arguments~\ref{th:boundedAdopt} and~\ref{th:boundedCreatorSteps}) that these queue sizes are sufficient to prevent overflows of useful labels.

\Paragraph{\bl{High level description}}
\bl{Each pair of processors periodically exchange their maximal label pairs and the maximal label pair that they know of the recipient. 
Upon receipt of such a label pair couple, the receiving processor starts by checking the integrity of its data structures and upon finding a corruption it flushes its label history queues.
It then moves to see whether the two labels that it received can cancel any of its non-canceled labels and if the received labels themselves can be canceled by labels that it has in its history.
Upon finishing this label housekeeping, it tries to find its local maximal view, first among the non-cancelled labels that other processors report as maximal, and if not such exist among its own labels. 
In latter case, if no such label exists, it generates a new one with a call to Algorithm~\ref{alg:NB} and using its own label queue as input.
At the end of the iteration the processor is guaranteed to have a maximal label, and continues to receive new label pair couples from other processors.
}

\begin{algorithm*}[t!]
  \caption{{\bl{Practically-}Self-Stabilizing Labeling Algorithm; code for $p_i$}}
%
%
\label{alg:WFR}

\begin{scriptsize}

{\bf Variables:}\\
$max[n]$ of $\langle ml$, $cl \rangle$: $max[i]$ is $p_i$'s largest label pair, $max[j]$ refers to $p_j$'s label pair (canceled when $max[j].cl \neq \bot$).\\

$storedLabels[n]$: an array of queues of the most-recently-used label pairs, where $storedLabels[j]$ holds the labels created by $p_j \in P$. For $p_j \in (P \setminus \{ p_i \})$, $storedLabels[j]$'s queue size is limited to $(n+m)$ w.r.t. label pairs, where $n=|P|$ is the number of processors in the system and $m$ is the maximum number of label pairs that can be in transit in the system. The $storedLabels[i]$'s queue size is limited to $(n(n^2+m))$ pairs. The operator $add(\ell)$ adds $lp$ to the front of the queue, and $emptyAllQueues()$ clears all $storedLabels[]$ queues. We use $lp.remove()$ for removing the record $lp \in storedLabels[]$. Note that an element is brought to the queue front every time this element is accessed in the queue.\\

%
%



{\bf Notation:} Let $y$ and $y'$ be two records that include the field $x$. We denote  $y$ $=_{x}$ $y'$ $\equiv$ $(y.x$ $=$ $y'.x)$\\

{\bf Macros:}\\
$legit(lp)$ $=$ $(lp$ $=$ $\langle \bullet, \bot \rangle)$~~~\\
$labels(lp)$ $:$ {\Return{$(storedLabels[lp.ml.lCreator])$}}\\
$double(j, lp) = (\exists lp' \in storedLabels[j] : ((lp \neq lp') \land ((lp =_{ml} lp') \lor (legit(lp) \land legit(lp')))))$~~~\\ \label{ln:double}
$staleInfo() = (\exists p_j \in P, lp \in storedLabels[j] : (lp \neq_{lCreator} j)
\lor double(j, lp))$~~~\\ \label{ln:staleInfo}
%
$recordDoesntExist(j) = (\langle max[j].ml, \bullet \rangle \notin labels(max[j]))$~~~\\
$notgeq(j, lp) = \mathbf{if~}(\exists lp' \in storedLabels[j]$ $:$ 
$(lp'.ml \not \preceq_{lb} lp.ml))$ $\mathbf{then~return}(lp'.ml)$ $\mathbf{else~return}(\bot)$~~~\\
$canceled(lp) = \mathbf{if~}(\exists lp' \in labels(lp)$ $:$ $((lp' =_{ml} lp)$ $\land$ $\neg legit(lp')))$ $\mathbf{then~return}(lp')$ $\mathbf{else~return}(\langle \bot, \bot \rangle)$~~~\\
$needsUpdate(j)=(\neg legit(max[j]) \land \langle max[j].ml, \bot \rangle \in labels(max[j]))$~~\\
$legitLabels() = \{ max[j].ml : \exists p_j \in P \land legit(max[j]) \}$~\label{ln:legitLabels}~\\
$useOwnLabel()=\mathbf{if~}(\exists lp \in storedLabels[i] : legit(lp))$ $\mathbf{then~}max[i]$ $\gets$ $lp$ $\mathbf{else~}storedLabels[i].add(max[i]$ $\gets$ $\langle nextLabel(), \bot \rangle)$~\label{ln:useOwnLabelDef}
\tcp{For every $lp \in storedLabels[i]$, we pass in $nextLabel()$ both $lp.ml$ and $lp.cl$.}

{\bf upon} $transmitReady(p_j \in P \setminus \{ p_i \})$ {\bf do transmit}$(\langle max[i], max[j] \rangle)$~~\label{ln:transmit}\\

{\bf upon} $receive(\langle sentMax, lastSent \rangle)$ {\bf from} $p_j$ \label{ln:uponReceive}

\Begin{
$max[j]$ $\gets$ $sentMax$\; \label{ln:exposeStore}
\lIf{$\neg legit(lastSent)$ $\land$ $max[i] =_{ml} lastSent$}{$max[i] \gets lastSent$} \label{ln:lastSentCancel}




    \lIf{$staleInfo()$}{$storedLabels.emptyAllQueues()$} \label{ln:clean}

     \lForEach{$p_j \in P : recordDoesntExist(j)$}{$labels(max[j]).add(max[j])$} \label{ln:add}

        \lForEach{$p_j \in P, lp \in storedLabels[j] : (legit(lp) \land (notgeq(j,lp)\neq \bot))$}{$lp.cl \gets notgeq(j,lp)$} \label{ln:cancelLabels}

        \lForEach{$p_j \in P, lp \in labels(max[j]) : (\neg legit(max[j]) \land (max[j] =_{ml} lp) \land legit(lp))$}{$lp \gets max[j]$} \label{ln:receivedCanceled}

        \lForEach{$p_j \in P, lp \in storedLabels[j] : double(j, lp)$}{$lp.remove()$} \label{ln:remove}

        \lForEach{$p_j \in P : (legit(max[j]) \land (canceled(max[j])\neq \langle \bot, \bot \rangle))$}{$max[j] \gets canceled(max[j])$} \label{ln:cancelMax}

    \lIf{$legitLabels() \neq \emptyset$}{$max[i] \gets \langle \max_{\prec_{lb}}(legitLabels()), \bot \rangle$} \label{ln:adopt}

    \lElse{$useOwnLabel()$}  \label{ln:useOwnLabel}



}

\end{scriptsize}

\end{algorithm*}

\Paragraph{Information exchange between processors}
Processor $p_i$ takes a step whenever it receives two pairs $\langle sentMax$, $lastSent \rangle$ from some other processor.
We note that in a legal execution $p_j$'s pair includes both $sentMax$, which refers to $p_j$'s
maximal label pair $max_j[j]$, and $lastSent$, which refers to a recent label pair that $p_j$ received from $p_i$ about $p_i$'s maximal label, $max_j[i]$ (line~\ref{ln:transmit}).

Whenever a processor $p_j$ sends a pair $\langle sentMax$, $lastSent \rangle$ to $p_i$, this processor stores the value of the arriving  $sentMax$ field in $max_i[j]$ (line~\ref{ln:exposeStore}). 
However, $p_j$ may have local knowledge of a label from $p_i$'s domain that cancels $p_i$'s maximal label, $ml$, of the last received $sentMax$ from $p_i$ to $p_j$ that was stored in $max_j[i]$. 
Then $p_j$ needs to communicate this canceling label in its next communication to $p_i$.
To this end, $p_j$ assigns this canceling label to $max_j[i].cl$ which stops being $\bot$. 
Then $p_j$ transmits $max_j[i]$ to $p_i$ as a $lastSent$ label pair, and this satisfies $lastSent.cl \not \preceq_{lb} lastSent.ml$, i.e., $lastSent.cl$ is either greater or incomparable to $lastSent.ml$. 
This makes $lastSent$ illegitimate and in case this still refers to $p_i$'s current maximal label, $p_i$ must cancel $max_i[i]$ by assigning it with $lastSent$ (and thus $max_i[i].cl = lastSent.cl$) as done in line~\ref{ln:lastSentCancel}. 
Processor $p_i$ then processes the two pairs received   (lines~\ref{ln:clean} to~\ref{ln:useOwnLabel}).

\Paragraph{Label processing}
Processor $p_i$ takes a step whenever it receives a new pair message  $\langle sentMax$, $lastSent \rangle$ from processor $p_j$ (line~\ref{ln:uponReceive}). Each such step starts by removing \emph{stale} information, i.e., misplaced or doubly represented labels (line~\ref{ln:staleInfo}). 
In the case that stale information exists, the algorithm empties the entire label storage. 
Processor $p_i$ then tests whether the arriving two pairs are already included in the label storage ($storedLabels[]$), otherwise it includes them (line~\ref{ln:add}). 
The algorithm continues to see whether, based on the new pairs added to the label storage, it is possible to cancel a non-canceled label pair (which may well be the newly added pair).
In this case, the algorithm updates the canceling field of any label pair $lp$ (line~\ref{ln:cancelLabels}) with the canceling label of a label pair $lp'$ such that $lp'.ml \not \preceq_{lb} lp.ml$ (line~\ref{ln:cancelLabels}). It is implied that since the two pairs belong to the same storage queue, they have the same processor as creator. 
The algorithm then checks whether any pair of the $max_i[]$ array can cause canceling to a record in the label storage (line~\ref{ln:receivedCanceled}), and also line~\ref{ln:remove} removes any canceled records that share the same creator identifier.
The test also considers the case in which the above update may cancel any arriving label in $max[j]$ and updates this entry accordingly based on stored pairs (line~\ref{ln:cancelMax}).

After this series of tests and updates, the algorithm is ready to decide upon a maximal label based on its local information. 
This is the $\preceq_{lb}$-greatest legit  label pair among all the ones in $max_i[]$ with respect to their $ml$ label (line~\ref{ln:adopt}). 
When no such legit label exists, $p_i$ requests a legit label in its own label storage, $storedLabels_i[i]$, and if one does not exist, will create a new one if needed (line~\ref{ln:useOwnLabel}). 
This is done by passing the labels in the $storedLabels_i[i]$ queue to the $nextLabel()$ function. 
Note that the returned label is coupled with a $\bot$ as the $cl$ and the resulting label pair is added to both $max_i[i]$ and $storedLabel_i[i]$. 

\subsection{Correctness proof}
We are now ready to show the correctness of the algorithm. We begin with a proof overview.

\Paragraph{Proof overview}
The proof considers a execution $R$ of Algorithm~\ref{alg:WFR} that may initiate in an arbitrary configuration (and include a processor that takes practically infinite number of steps). It starts by showing some basic facts, such as: (1) stale information is removed, i.e., $storedLabels_i[j]$ includes only unique copies of $p_j$'s labels, and at most one legitimate such label (Corollary~\ref{th:noStaleInfo}), and (2) $p_i$ either adopts or creates the $\preceq_{lb}$-greatest legitimate local label (\Argument~\ref{th:localNondecreasing}). The proof then presents bounds on the number adoption steps (\Arguments~\ref{th:boundedAdopt} and~\ref{th:boundedCreatorSteps}), that define the required queue sizes to avoid label overflows.

The proof continues to show that active processors can eventually stop adopting or creating labels, by tackling individual cases where canceled or incomparable label pairs may cause a change of the local maximal label.
We show that such labels eventually disappear from the system (\Argument~\ref{th:riskEmpty}) and thus no new labels are being adopted or created (\Argument~\ref{th:emptyRisk}), which then implies the existence of a global maximal label (\Argument~\ref{th:boundedDiffusion}). Namely, there is a legitimate label $\ell_{\max}$, such that for any processor $p_i \in P$ (that takes a practically infinite number of steps in $R$), it holds that $max_i[i]=\ell_{\max}$.
Moreover, for any processor $p_j \in P$ that is active throughout the execution, it holds that  $p_i$'s local maximal (legit) label pair $max_i[i] = \ell_{max}$ is the $\preceq_{lb}$-greatest of all the label pairs in $max_i[]$ and there is no label pair in  $storedLabels_i[j]$  that cancels $\ell_{max}$, 
i.e., $((max_i[j].ml \preceq_{lb} \ell_{\max}.ml) \land ((\forall \ell \in storedLabels_i[j]: legit(\ell)) \Rightarrow$ $(\ell.ml \preceq_{lb} \ell_{\max}.ml)))$.
%
We then demonstrate that, when starting from an initial arbitrary configuration, the system eventually reaches a configuration in which there is a global maximal label (Theorem~\ref{th:oneGreatest4All}). 

\noindent Before we present the proof in detail, we provide some helpful definitions and notation.

\Paragraph{Definitions} 
We define ${\cal H}$ to be the set of all label pairs that can be in transit in the system, with $|{\cal H}| = m$. So in an arbitrary configuration, there can be up to $m$ corrupted label pairs in the system's links. We also denote ${\cal H}_{i,j}$ as the set of label pairs that are in transit from processor $p_i$ to processor $p_j$. 
 The number of label pairs in ${\cal H}_{i,j}$ obeys the link capacity bound.
Recall that the data structures used (e.g., $max_i[]$, $storedLabels_i[]$, etc) store label pairs. For convenience of presentation and when clear from the context, 
we may refer to the $ml$ part of the label pair as  ``the label''.
\bl{Note that in this algorithm, we consider an \emph{iteration} as the execution of lines~\ref{ln:uponReceive}--\ref{ln:useOwnLabel}, i.e., the receive action.}

\subsubsection{No stale information}
\Argument~\ref{th:onlyOneDoubleRecord} says that the predicate $staleInfo()$ (line~\ref{ln:staleInfo}) can only hold during the first execution of the $receive()$ event (line~\ref{ln:uponReceive}).

\begin{argument}
\label{th:onlyOneDoubleRecord}
Let $p_i \in P$ be a processor for which $\neg staleInfo_i()$ (line~\ref{ln:staleInfo}) does not hold during the $k$-th step in $R$ that includes the complete execution of the $receive()$ event (from line~\ref{ln:uponReceive} to~\ref{ln:useOwnLabel}). Then $k=1$.
\end{argument}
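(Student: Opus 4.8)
The statement is just the assertion that the guard $staleInfo_i()$ at line~\ref{ln:clean} of Algorithm~\ref{alg:WFR} can fire (equivalently, $\neg staleInfo_i()$ can fail) only during the first step of $R$ that executes the $receive()$ event in full. I would prove this by establishing two claims. \emph{Claim (i):} at the \emph{end} of every step that completely executes the $receive()$ event (lines~\ref{ln:uponReceive}--\ref{ln:useOwnLabel}), $\neg staleInfo_i()$ holds, i.e., every $storedLabels_i[j]$ contains only pairs from $p_j$'s domain and no two of its pairs form a ``double'' ($=_{ml}$, or both legitimate). \emph{Claim (ii):} between the end of one such step and the beginning of the next, $storedLabels_i[]$ is not modified. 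Given these, an easy induction finishes the lemma: by Claim~(i) the first complete $receive()$ step leaves $\neg staleInfo_i()$ true; by Claim~(ii) it is still true when line~\ref{ln:clean} of the next complete $receive()$ step is evaluated; hence for every $k\ge 2$ the guard at line~\ref{ln:clean} evaluates to false, so it may fire only for $k=1$.

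Claim~(ii) is immediate: the only two event handlers of Algorithm~\ref{alg:WFR} are $transmitReady()$ (line~\ref{ln:transmit}), which merely reads the state and transmits, and $receive()$; thus $storedLabels_i[]$ is touched only while a $receive()$ event is being executed, and nothing alters it in between two consecutive complete $receive()$ steps. For Claim~(i) I would split on the branch taken at line~\ref{ln:clean}: if $staleInfo_i()$ held, the queues are emptied and the rest of the handler proceeds from empty $storedLabels_i[]$; if it did not hold, the queues already satisfy the invariant at that point. In either case it then suffices to check that none of lines~\ref{ln:add}--\ref{ln:useOwnLabel} can leave behind a misplaced or doubled pair.

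The inspection goes as follows. Line~\ref{ln:add} inserts $max_i[j]$ only into $labels(max_i[j])=storedLabels_i[max_i[j].ml.lCreator]$, i.e., into the queue of the pair's own creator (never a misplacement), and its guard $recordDoesntExist(j)$ forbids inserting a pair whose $ml$ already appears there — and, processing the entries $p_j$ sequentially, also forbids two newly inserted pairs with equal $ml$. Lines~\ref{ln:cancelLabels} and~\ref{ln:receivedCanceled} change only the $cl$ component of existing pairs (line~\ref{ln:receivedCanceled} overwrites $lp$ by $max_i[j]$ solely when $max_i[j]=_{ml}lp$, so the $ml$, hence the creator, is preserved), so they create no misplacement, no $=_{ml}$ double, and no both-legitimate double. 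Lines~\ref{ln:cancelMax} and~\ref{ln:adopt} touch only $max_i[]$, and $useOwnLabel()$ at line~\ref{ln:useOwnLabel} adds $\langle nextLabel(),\bot\rangle$ to $storedLabels_i[i]$ only in the branch where $storedLabels_i[i]$ contains no legitimate pair; by construction of $nextLabel()$ (Algorithm~\ref{alg:NB}) that new $ml$ is $\prec_{lb}$-strictly greater than every $ml$ currently in $storedLabels_i[i]$, so it is neither an $=_{ml}$ duplicate nor part of a both-legitimate double. The only transient violation can arise at line~\ref{ln:add}, when several entries $max_i[j]$ share a creator and more than one of them (or one together with an already-stored pair of that creator) is legitimate: this produces exactly the records that line~\ref{ln:remove} removes, and since the later lines add no $=_{ml}$ pair and create no both-legitimate double, and since removing records can never turn a non-double into a double, no double survives line~\ref{ln:remove}. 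Together with the fact that every surviving pair was placed in its creator's queue, this yields $\neg staleInfo_i()$ at the end of the handler, proving Claim~(i).

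The main obstacle I anticipate is purely the bookkeeping of Claim~(i): one has to be careful about the order in which lines~\ref{ln:add}--\ref{ln:remove} act and about the precise semantics of the $staleInfo()$, $double()$, and $recordDoesntExist()$ macros — in particular, that the guard $recordDoesntExist(j)$ already rules out $=_{ml}$ duplicates so that line~\ref{ln:remove} only has to eliminate ``both-legitimate'' doubles, and that deleting a doubled record cannot itself spawn a new double. One should also note (to match the wording ``during the $k$-th step'') that $staleInfo()$ itself is evaluated only at line~\ref{ln:clean}: the transient double introduced by line~\ref{ln:add} is internal to the atomic step and is erased by line~\ref{ln:remove} before the step ends, so it does not affect any observed value of the predicate. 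Everything else is a direct reading of the pseudocode.
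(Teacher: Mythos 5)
Your proposal is correct and takes essentially the same route as the paper's proof: $storedLabels_i[]$ is emptied if the predicate fires in the first complete $receive()$ step, insertions at line~\ref{ln:add} are guarded by $recordDoesntExist()$, every other access to $storedLabels_i[]$ only cancels or removes entries, and nothing outside the $receive()$ handler touches the structure, so $staleInfo_i()$ cannot hold at line~\ref{ln:clean} of any later step. Your line-by-line check is simply a more detailed rendering of the paper's terse argument (one small nit: a transient both-legitimate double created at line~\ref{ln:add} may already be resolved by the cancellation at line~\ref{ln:cancelLabels} rather than by line~\ref{ln:remove}, but either way it does not survive the step, so your conclusion stands).
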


\begin{proof}
Since $R$ starts in an arbitrary configuration, there could be a queue in $storedLabels_i[]$ that holds two label records from the same creator, a label that is not stored according to its creator identifier, or more than one legitimate label. Therefore, $staleInfo_i()$ might hold during the first execution of the $receive()$ event. 
When this is the case, the $storedLabels_i[]$ structure is emptied (line~\ref{ln:clean}).
During that $receive()$ event execution (and any event execution after this), $p_i$ adds records to a queue in $storedLabels_i[]$ (according to the creator identifier) only after checking whether $recordDoesntExist()$ holds (line~\ref{ln:add}).

Any other access to $storedLabels_i[]$ merely updates cancelations or removes duplicates. Namely, canceling labels that are not the $\preceq_{lb}$-greatest among the ones that share the same creating processors (line~\ref{ln:cancelLabels}) and canceling records that were canceled by other processors (line~\ref{ln:receivedCanceled}), as well as removing legitimate records that share the same $ml$ (line~\ref{ln:remove}).
It is, therefore, clear that in any subsequent iteration of $receive()$ (after the first), $staleInfo()$ cannot hold.
\end{proof}

\Argument~\ref{th:onlyOneDoubleRecord} along with the lines~\ref{ln:staleInfo} and~\ref{ln:cancelMax} of the Algorithm, imply Corollary~\ref{th:noStaleInfo}.

\begin{corollary}
\label{th:noStaleInfo}
%
Consider a suffix $R'$ of execution $R$ that starts after the execution of a $receive()$ event. 
Then the following hold throughout $R'$: (i)  $\forall p_i, p_j \in P$, the state of $p_i$ encodes at most one legitimate label, $\ell_j =_{lCreator} j$ and (ii) $\ell_j$ can only appear in $storedLabels_i[j]$ and  $max_i[]$ but not in $storedLabels_i[k]:k\neq j$. 
\end{corollary}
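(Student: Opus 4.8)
The plan is to derive the corollary as a direct consequence of Lemma~\ref{th:onlyOneDoubleRecord} together with a careful reading of the lines of Algorithm~\ref{alg:WFR} that touch $storedLabels_i[]$ and $max_i[]$. First I would fix the point of view: let $R'$ be the suffix of $R$ beginning immediately after the first complete execution of the $receive()$ event at $p_i$. By Lemma~\ref{th:onlyOneDoubleRecord}, in every step of $R'$ the predicate $staleInfo_i()$ evaluates to false at line~\ref{ln:staleInfo}. Unfolding the definition of $staleInfo()$ (line~\ref{ln:staleInfo}), this immediately gives two invariants that hold at the \emph{end} of every $receive()$ execution in $R'$: (a) for every $p_j$ and every $lp \in storedLabels_i[j]$ we have $lp =_{lCreator} j$, i.e.\ labels are filed under their creator, which is exactly clause (ii); and (b) $\neg double_i(j,lp)$ for all $j$ and all $lp$, which by the definition of $double()$ (line~\ref{ln:double}) forbids both two records with the same $ml$ in one queue and two distinct legitimate records in one queue. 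In particular each queue $storedLabels_i[j]$ contains at most one legitimate label, and by (a) that label $\ell_j$ satisfies $\ell_j =_{lCreator} j$. This is clause (i) restricted to $storedLabels_i[]$.

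Next I would argue the statement holds not merely at the end of each $receive()$ but throughout $R'$, i.e.\ also between events, and that it extends to $max_i[]$. Between events nothing changes (processors only act on $receive()$), so it suffices to check that no intermediate line of a $receive()$ execution can transiently create a second legitimate label or a mis-filed label that would be visible ``throughout'' $R'$; since the corollary is stated for configurations, and configurations between events are stable, the only subtlety is whether $max_i[]$ can disagree with the $storedLabels_i[]$ invariant. Here I would use line~\ref{ln:add} together with line~\ref{ln:cancelMax}: line~\ref{ln:add} ensures that $\langle max_i[j].ml,\bullet\rangle$ is always present in $labels(max_i[j]) = storedLabels_i[max_i[j].ml.lCreator]$, and line~\ref{ln:cancelMax} (invoked when $canceled(max_i[j])\neq\bot$) overwrites $max_i[j]$ with the canceled record from the queue whenever the queue says the label is no longer legitimate. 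Consequently, if $max_i[j]$ is legitimate then the corresponding queue record is legitimate and, by clause~(a) above, it lies in $storedLabels_i[j]$; hence $max_i[j] =_{lCreator} j$ as well. Combining with the uniqueness from $\neg double()$, for each creator $j$ there is at most one legitimate label $\ell_j$ in the whole state of $p_i$ and it only ever appears in $storedLabels_i[j]$ and in $max_i[]$, never in $storedLabels_i[k]$ for $k\neq j$ — which is exactly clause (ii) in full. This closes the argument.

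The main obstacle I anticipate is the ``throughout $R'$'' quantifier: I must be careful that the $receive()$ body does not momentarily install, at some intermediate line, a legitimate label into the wrong queue or a second legitimate label into a queue, in a way that some other processor or a later reasoning step could observe. The clean way around this is to note that (i) no line in lines~\ref{ln:exposeStore}--\ref{ln:useOwnLabel} ever \texttt{add}s a record to a queue $storedLabels_i[k]$ unless $k$ is the creator identifier of that record (line~\ref{ln:add} files by $lp.ml.lCreator$ via the $labels(\cdot)$ macro, and $useOwnLabel()$ adds only to $storedLabels_i[i]$), and (ii) the only lines that can turn a record \emph{legitimate} are line~\ref{ln:exposeStore}/\ref{ln:lastSentCancel} acting on $max_i[\cdot]$ and the adoption lines~\ref{ln:adopt}--\ref{ln:useOwnLabel}, none of which creates a new legitimate queue entry co-resident with an existing one — line~\ref{ln:remove} having already removed duplicated legitimate records, and line~\ref{ln:receivedCanceled} only ever converting legitimate queue records to canceled ones, never the reverse for a record that would violate uniqueness. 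Since configurations are the observable objects and they are exactly the post-event states, it then suffices to invoke the end-of-event invariant established above, and the corollary follows.
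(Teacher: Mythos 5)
Your proposal is correct and follows essentially the same route as the paper, which derives the corollary in one line from Lemma~\ref{th:onlyOneDoubleRecord} together with the definition of $staleInfo()$ (line~\ref{ln:staleInfo}) and the cancellation propagation of line~\ref{ln:cancelMax}. Your write-up simply fills in the same ingredients in more detail (adding the observation about line~\ref{ln:add} to tie $max_i[]$ to the queues), so nothing further is needed.
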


\subsubsection{Local \texorpdfstring{$\preceq_{lb}$}{<lb}-greatest legitimate local label}

{\sloppy
\Argument~\ref{th:localNondecreasing} considers processors for which $staleInfo()$ (line~\ref{ln:staleInfo}) does not hold. Note that $\neg staleInfo()$ holds at any time after the first step that includes the $receive()$ event (\Argument~\ref{th:onlyOneDoubleRecord}). \Argument~\ref{th:localNondecreasing} shows that $p_i$ either adopts or creates the $\preceq_{lb}$-greatest legitimate local label pair and stores it in $max_i[i]$.

}

\begin{argument}
\label{th:localNondecreasing}
{\sloppy
Let $p_i \in P$ be a processor such that $\neg staleInfo_i()$ (line~\ref{ln:staleInfo}), and $L_{pre}(i) = \{ max_i[j].ml :  \exists p_j \in P \land legit(max_i[j]) \land (\exists \langle max_i[j].ml, x \rangle \in (labels(max_i[j]) \setminus \{ max_i[j] \}) \Rightarrow (x = \bot)) \}$  be the set of $max_i[]$'s labels that, before $p_i$ executes lines~\ref{ln:clean} to~\ref{ln:useOwnLabel}, are legitimate both in $max_i[]$ and in $storedLabels_i[]$'s queues. Let $L_{post}(i)=\{ max_i[j].ml : \exists p_j \in P \land legit(max_i[j]) \}$ and $\langle \ell, \bot \rangle$ be the value of $max_i[i]$ immediately after $p_i$ executes lines~\ref{ln:clean} to~\ref{ln:useOwnLabel}. The label $\langle \ell, \bot \rangle$ is the $\preceq_{lb}$-greatest legitimate label in $L_{post}(i)$. Moreover, suppose that $L_{pre}(i)$ has a $\preceq_{lb}$-greatest legitimate label pair, then that label pair is $\langle \ell, \bot \rangle$.

}
\end{argument}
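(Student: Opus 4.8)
The plan is to trace the execution of lines~\ref{ln:clean}--\ref{ln:useOwnLabel} in a single \emph{receive()} event, keeping track of which labels in $max_i[]$ remain legitimate at each point, and then reason about which one $p_i$ ultimately adopts. The key observation is that $max_i[i]$ is set exactly once in this block, by whichever of line~\ref{ln:adopt} or line~\ref{ln:useOwnLabel} is triggered, and that the conditional in line~\ref{ln:adopt} tests precisely whether $legitLabels()\neq\emptyset$, where $legitLabels()$ (line~\ref{ln:legitLabels}) collects the $ml$ of every entry $max_i[j]$ with $legit(max_i[j])$. So I would split into two cases. \emph{Case A:} after lines~\ref{ln:clean}--\ref{ln:cancelMax} run, $legitLabels()\neq\emptyset$; then line~\ref{ln:adopt} fires and sets $max_i[i] \gets \langle \max_{\prec_{lb}}(legitLabels()), \bot\rangle$, so by definition $\ell$ is the $\prec_{lb}$-greatest element of $legitLabels()$, which is exactly $L_{post}(i)$ (note $L_{post}(i)$ is defined using the \emph{post-update} values of $max_i[]$, matching what line~\ref{ln:adopt} sees). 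One subtlety: line~\ref{ln:adopt} also writes $max_i[i]$, so I should check that this does not retroactively change whether some $max_i[j]$ is in $L_{post}(i)$; since $L_{post}$ ranges over all $p_j\in P$ including $p_i$, and line~\ref{ln:adopt} sets $max_i[i]$ to the $\prec_{lb}$-maximum of the set, $\ell$ is still $\prec_{lb}$-greatest in $L_{post}(i)$ after the assignment (it equals itself there, and dominates everything else). \emph{Case B:} $legitLabels()=\emptyset$ after the updates; then $useOwnLabel()$ runs, and by Corollary~\ref{th:noStaleInfo} there is at most one legitimate label per creator, so $\ell$ is set either to the unique legitimate pair in $storedLabels_i[i]$ or to a freshly created $nextLabel()$, and in both subcases $L_{post}(i)=\{\ell\}$ (every other $max_i[j]$ is non-legit), making $\langle\ell,\bot\rangle$ trivially the $\prec_{lb}$-greatest legitimate label in $L_{post}(i)$.

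For the ``moreover'' clause I would argue that $L_{pre}(i)\subseteq$ the set of labels still legitimate in $max_i[]$ after lines~\ref{ln:clean}--\ref{ln:cancelMax}, and hence $L_{pre}(i)$'s $\prec_{lb}$-greatest legitimate label (when it exists) survives into $legitLabels()/L_{post}(i)$ unchanged. This requires showing that the cancellation steps (lines~\ref{ln:cancelLabels},~\ref{ln:receivedCanceled},~\ref{ln:cancelMax}) and the removal step (line~\ref{ln:remove}) cannot cancel a label $\ell'=max_i[j].ml\in L_{pre}(i)$: by the definition of $L_{pre}(i)$, $\ell'$ is legitimate both as $max_i[j]$ and as a record in $storedLabels_i[j]$, and any further legitimate record with the same $ml$ also carries $cl=\bot$. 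Line~\ref{ln:cancelMax} cancels $max_i[j]$ only if $canceled(max_i[j])\neq\bot$, i.e.\ some record in $labels(max_i[j])$ shares the $ml$ but is \emph{not} legit — which contradicts $\ell'\in L_{pre}(i)$. Line~\ref{ln:cancelLabels} cancels a record only if some other stored record's $ml$ is $\not\preceq_{lb}$-comparable or strictly greater; for the $\prec_{lb}$-greatest legitimate label of $L_{pre}(i)$ that cannot happen among legitimate records of the same creator (it dominates them), and incomparable ones would already have triggered $staleInfo()$ or been removed — here I need to invoke $\neg staleInfo_i()$ and Corollary~\ref{th:noStaleInfo}. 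Then, since that label is not cancelled, it still appears as a legitimate entry of $max_i[]$ after the updates, hence lies in $L_{post}(i)$, and being $\prec_{lb}$-greatest in the larger candidate pool it is in particular the $\prec_{lb}$-greatest of $L_{post}(i)$ — but that is exactly $\langle\ell,\bot\rangle$ by the first part.

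The main obstacle I expect is the careful bookkeeping in the second part: precisely pinning down, for \emph{each} of lines~\ref{ln:cancelLabels},~\ref{ln:receivedCanceled},~\ref{ln:remove},~\ref{ln:cancelMax}, why the $\prec_{lb}$-greatest legitimate label of $L_{pre}(i)$ is immune, and in particular handling the interaction with the freshly \emph{received} pairs $sentMax=max_i[j]$ and $lastSent$ written in lines~\ref{ln:exposeStore}--\ref{ln:lastSentCancel} before this block — one has to make sure that ``before $p_i$ executes lines~\ref{ln:clean} to~\ref{ln:useOwnLabel}'' in the hypothesis is read as \emph{after} lines~\ref{ln:exposeStore}--\ref{ln:lastSentCancel} have already updated $max_i[j]$ and possibly $max_i[i]$. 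A secondary subtlety is that $L_{pre}(i)$ might have \emph{no} $\prec_{lb}$-greatest legitimate label (the legitimate labels could be pairwise incomparable across different creators), which is why the ``moreover'' is conditional; the first, unconditional part still goes through because $legitLabels()$ is a finite set partitioned by creator with at most one legit label per creator, and $\max_{\prec_{lb}}$ in line~\ref{ln:adopt} is well-defined on it by the algorithm's design — I would note that $\prec_{lb}$ restricted to a set containing at most one legit label per creator need not be total, so I should state explicitly that line~\ref{ln:adopt}'s $\max$ picks a $\prec_{lb}$-maximal (hence, when a greatest exists, the greatest) element, matching the claim's wording.
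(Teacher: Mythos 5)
Your proposal takes essentially the same route as the paper's proof: the same case split on whether $legitLabels()$ is empty at line~\ref{ln:adopt} for the first claim, and for the ``moreover'' part the same argument that the $\preceq_{lb}$-greatest label of $L_{pre}(i)$ survives lines~\ref{ln:clean}--\ref{ln:cancelMax} untouched while the $max_i[]$ entries that do get modified were not in $L_{pre}(i)$ and no entry becomes legitimate, so it is exactly the maximum adopted at line~\ref{ln:adopt}. The only slight mismatch is your appeal to $staleInfo()$ and Corollary~\ref{th:noStaleInfo} to exclude incomparable records in the creator's queue (these only rule out misplaced and doubly-represented or doubly-legitimate records, not a canceled record with an incomparable $ml$), but the paper's own proof asserts the corresponding immunity at line~\ref{ln:cancelLabels} with no finer justification, so your argument matches the paper's in both structure and level of detail.
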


\begin{proof}
\noindent {\bf $\mathbf{\langle \ell, \bot \rangle}$ is the $\mathbf{\preceq_{lb}}$-greatest legitimate label pair in $\mathbf{L_{post}(i)}$.}~~~~
Suppose that immediately before line~\ref{ln:adopt}, we have that $legitLabels_i() \neq \emptyset$, where $legitLabels_i() = \{ max_i[j].ml : \exists p_j \in P \land legit(max_i[j]) \}$ (line~\ref{ln:legitLabels}). Note that in this case $L_{post}(i)=legitLabels_i()$. By the definition of $\preceq_{lb}$-greatest legitimate label pair and line~\ref{ln:adopt}, $max_i[i] = \langle \ell, \bot \rangle$ is the $\preceq_{lb}$-greatest legitimate label pair in $L_{post}(i)$. Suppose that $legitLabels_i() = \emptyset$ immediately before line~\ref{ln:adopt}, i.e., there are no legitimate labels in $\{ max_i[j] : \exists p_j \in P  \}$. By the definition of $\preceq_{lb}$-greatest legitimate label pair and line~\ref{ln:useOwnLabelDef}, $max_i[i] = \langle \ell, \bot \rangle$ is the $\preceq_{lb}$-greatest legitimate label pair in $L_{post}(i)$.

\noindent \textbf{Suppose that $\mathbf{rec=\langle \ell', \bot \rangle}$ is a $\mathbf{\preceq_{lb}}$-greatest legitimate label pair in $\mathbf{L_{pre}(i)}$, then $\mathbf{\ell = \ell'}$.}~~~~
We show that the record $rec$ is not modified in $max_i[]$ until the end of the execution of lines~\ref{ln:clean} to~\ref{ln:useOwnLabel}. Moreover, the records that are modified in $max_i[]$, are not included in $L_{pre}(i)$ (it is canceled in $storedLabels_i[]$) and no records in $max_i[]$ become legitimate. Therefore, $rec$ is also the $\preceq_{lb}$-greatest legitimate label pair in $L_{post}(i)$, and thus, $\ell = \ell'$.

Since we assume that $staleInfo_i()$ does not hold, line~\ref{ln:clean} does not modify $rec$. Lines~\ref{ln:add},~\ref{ln:cancelLabels} and~\ref{ln:remove} might add, modify, and respectively, remove $storedLabels_i$'s records, but it does not modify $max_i[]$. Since $rec$ is not canceled in $storedLabels_i[]$ and the $\preceq_{lb}$-greatest legitimate label pair in $max_i[]$, the predicate $(legit(max[j]) \land notgeq(j))$ does not hold and line~\ref{ln:cancelLabels} does not modify $rec$. Moreover, the records in $max_i[]$, for which that predicate holds, become illegitimate.~\end{proof}

\subsubsection{Bounding the number of labels}
\noindent \Arguments~\ref{th:boundedAdopt} and~\ref{th:boundedCreatorSteps} present bounds on the number of adoption steps. 
These are $n+m$ for labels by labels that become inactive in any point in $R$ and $(mn+2n^2-2n)$ 
for any active processor.
Following the above, choosing the queue sizes as $n+m$ for $storedLabels_i[j]$ if $i\neq j$, and $2(nm+2n^2-2n)+1$ 
for $storedLabels_i[i]$ is sufficient to prevent overflows given that $m$ is the system's total link capacity in labels.

\Paragraph{Maximum number of label adoptions in the absence of creations}
Suppose that there exists a processor, $p_j$, that has stopped adding labels to the system (the else part of line~\ref{ln:useOwnLabel}), say, because it became inactive (crashed), or it names a maximal label that is the $\preceq_{lb}$-greatest label pair among all the ones that the network ever delivers to $p_j$. \Argument~\ref{th:boundedAdopt} bounds the number of labels from $p_j$'s domain that any processor $p_i \in P$ adopts in $R$. 


\begin{argument}
\label{th:boundedAdopt}
Let $p_i,p_j \in P$, be two processors. Suppose that $p_j$ has stopped adding labels to the system configuration (the else part of line~\ref{ln:useOwnLabel}), and sending (line~\ref{ln:transmit}) these labels during $R$. Processor $p_i$ adopts (line~\ref{ln:adopt}) at most $(n+m)$ labels, $\ell_j : (\ell_j =_{lCreator} j)$, from $p_j$'s unknown domain ($\ell_j \notin labels_i(\ell_j)$) where $m$ is the maximum number of label pairs that can be in transit in the system. 
\end{argument}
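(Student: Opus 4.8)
The plan is to exploit that, throughout $R$, $p_j$ executes neither the $nextLabel()$ branch of $useOwnLabel()$ (line~\ref{ln:useOwnLabelDef}) nor line~\ref{ln:transmit}, so no fresh label with $lCreator=j$ is ever minted: such a label can enter a processor's state only via $receive()$ (lines~\ref{ln:exposeStore}, \ref{ln:lastSentCancel}) or via $p_j$'s own $useOwnLabel()$, and a packet can carry one only if its sender already holds it in some $max[\cdot]$ entry; inducting on the steps of $R$, every creator-$j$ label appearing anywhere in $R$ was already present --- in some $max_l[\cdot]$ entry or in a channel --- in the configuration $c^{\ast}$ at which $p_j$ stops. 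Since counting all of these is too crude, I would reduce the claim to bounding the explicit set $V_j := \{\, v : v=_{lCreator} j \text{ and } v = max_l[l].ml \text{ for some } p_l\in P \text{ at some configuration of } R \,\}$. First I would show every creator-$j$ label adopted by $p_i$ at line~\ref{ln:adopt} lies in $V_j$: at such an adoption $max_i[i]\gets\langle\ell_j,\bot\rangle$ we have $\ell_j = max_i[k].ml$ for the $\prec_{lb}$-greatest legit entry $max_i[k]$; if $k\ne i$ then $max_i[k]$ was last overwritten at line~\ref{ln:exposeStore} with $sentMax=max_k[k]$, so $\ell_j=max_k[k].ml\in V_j$, and if $k=i$ then the value $\ell_j$ of $max_i[i]$ came either from $c^{\ast}$ (so $\ell_j\in V_j$) or from an earlier line~\ref{ln:adopt} (apply the argument inductively). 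Second, using that line~\ref{ln:add} inserts a freshly-adopted $\ell_j$ into $storedLabels_i[j]$ and that --- by Corollary~\ref{th:noStaleInfo} and because $double(j,\cdot)$ can no longer hold after the first $receive()$ --- line~\ref{ln:remove} never deletes it, each element of $V_j$ is adopted ``from $p_j$'s unknown domain'' at most once, so the number of such adoptions is at most $|V_j|$.

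Next I would bound $|V_j|$. For $v\in V_j$, let $(c_v,p_l)$ be the earliest configuration, together with a witnessing processor, at which $v=max_l[l].ml$. Either (A) $c_v=c^{\ast}$, in which case $v$ is the initial value of $max_l[l].ml$; or (B) $v$ becomes $max_l[l].ml$ during a $receive()$ step, and since the only line that can place a new creator-$j$ value into $max_l[l]$ is line~\ref{ln:adopt} (lines~\ref{ln:lastSentCancel} and~\ref{ln:cancelMax} preserve $ml$, line~\ref{ln:exposeStore} never writes $max_l[l]$, and $useOwnLabel()$ at $p_l\ne p_j$ produces a creator-$l$ label), just before that step $v=max_l[k].ml$ with $legit(max_l[k])$ for some $k\ne l$; tracing this occurrence backward through lines~\ref{ln:exposeStore} and~\ref{ln:cancelMax}, it originates either (B1) in the initial value of $max_l[k].ml$, or (B2) in a $sentMax$ that was in transit at $c^{\ast}$ (a received $sentMax$ that had itself been some processor's own maximum would put $v$ in $V_j$ strictly before $c_v$). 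Case (B2) contributes at most $m$ values overall --- one candidate $sentMax$ field per in-transit label pair. For (A) and (B1) the decisive point is that each processor $p_l$ contributes only $O(1)$: at $p_l$'s first $receive()$ after $c^{\ast}$, all creator-$j$ labels residing in any $max_l[\cdot]$ entry are synchronized into $storedLabels_l[j]$ (via line~\ref{ln:clean} then line~\ref{ln:add}, or directly by line~\ref{ln:add}), and then line~\ref{ln:cancelLabels} cancels all but the $\prec_{lb}$-greatest of them while line~\ref{ln:remove} removes duplicates, so at most one of $p_l$'s native creator-$j$ labels can thereafter satisfy the legitimacy test that line~\ref{ln:adopt} requires. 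Summing the per-processor contributions with the $\le m$ channel contribution gives $|V_j|\le n+m$, and combined with the previous paragraph the lemma follows.

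The hard part will be this last step --- proving $p_l$ cannot cycle through several of its own stale creator-$j$ labels. The argument I would give is that the anti-sting cancellation is ``downward closed'' in the relevant sense: once a creator-$j$ label sits canceled in $storedLabels_l[j]$, any creator-$j$ label that is not strictly $\prec_{lb}$-above it --- in particular any one incomparable to it --- is canceled immediately at line~\ref{ln:cancelLabels} through $notgeq(j,\cdot)$, hence can never be the legit label that line~\ref{ln:adopt} promotes to $max_l[l]$. This is precisely the property the bounded FIFO history $storedLabels_l[j]$ is designed to enforce, and it is what prevents the $\prec_{lb}$-cycle phenomenon described in Section~\ref{sec:nutshell} from producing unboundedly many adoptable labels from an inactive creator. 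Once this monotonicity is established, the remaining items --- fixing the exact constant, dealing carefully with the first $receive()$ and with $lastSent$ fields, and confirming $storedLabels_i[j]$ never overflows --- are routine.
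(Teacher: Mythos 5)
Your proof is correct and follows essentially the same route as the paper's: since only $p_j$ can mint creator-$j$ labels and it neither creates nor transmits them, every newly adopted label must trace back either to some processor's state (at most one legitimate creator-$j$ label per processor, via the cancellation mechanism behind Corollary~\ref{th:noStaleInfo}) or to one of the at most $m$ label pairs in transit from the initial configuration, yielding $n+m$. The paper's own proof is just a terser version of this counting ($n$ processor-state sources plus $m$ in-transit labels, citing Corollary~\ref{th:noStaleInfo}), so your backward tracing through the $max$ arrays and the per-processor cancellation step is a more detailed elaboration of the same argument rather than a different method.
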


\begin{proof}
Let $p_k \in P$. At any time (after the first step in $R$) processor $p_k$'s state encodes at most one legitimate label, $\ell_j$, for which $\ell_j =_{lCreator} j$ (Corollary~\ref{th:noStaleInfo}). 
Whenever $p_i$ adopts a new label $\ell_j$ from $p_j$'s domain (line~\ref{ln:adopt}) such that $\ell_j : (\ell_j =_{lCreator} j)$, this implies that $\ell_j$ is the only legitimate label pair in $storedLabels_i[j]$.
Since $\ell_j$ was not transmitted by $p_j$ before it was adopted, $\ell_j$ must come from $p_k$'s state delivered by a transmit event (line~\ref{ln:transmit}) or delivered via the network as part of the set of labels that existed in the initial arbitrary state.
The bound holds since there are $n$ processors, such as $p_k$, and $m$ bounds the number of labels in transit.
Moreover, no other processor can create label pairs from the domain of $p_j$.
\end{proof}

\Paragraph{Maximum number of label creations}
\Argument~\ref{th:boundedCreatorSteps} shows a bound on the number of adoption steps that does not depend on  whether the labels are from the domain of an active or (eventually) inactive processor.

\begin{argument}
\label{th:boundedCreatorSteps}
Let $p_i \in P$ and $L_i=\ell_{i_0}, \ell_{i_1}, \ldots$ be the sequence of legitimate labels, $\ell_{i_k} =_{lCreator} i$, from $p_i$'s domain, which $p_i$ stores in $max_i[i]$ through the reception  (line~\ref{ln:uponReceive}) or creation of labels (line~\ref{ln:useOwnLabel}), where $k \in \mathbb{N}$. It holds that $|L_i| \leq n(n^2+m)$.
\end{argument}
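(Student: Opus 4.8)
The plan is to bound separately the two ways a new legitimate label from $p_i$'s own domain can enter $max_i[i]$: by \emph{creation} in line~\ref{ln:useOwnLabel} (via $nextLabel()$ in line~\ref{ln:useOwnLabelDef}) and by \emph{adoption} through a received pair in line~\ref{ln:adopt}. For adoptions, I would invoke Lemma~\ref{th:boundedAdopt}: a label $\ell_i$ with creator $i$ that $p_i$ adopts while $\ell_i \notin labels_i(\ell_i)$ counts among the $(n+m)$ bounded there; but here the relevant quantity is labels from $p_i$'s \emph{own} domain, and the subtlety is that between two successive creations $p_i$ may re-adopt labels that were previously in $storedLabels_i[i]$ and then were canceled or removed. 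The key structural fact I would establish first is that once $p_i$ creates a label $\ell$ via $nextLabel()$, that $\ell$ is $\preceq_{lb}$-greater than every label currently in $storedLabels_i[i]$ (this is exactly what $nextLabel()$ guarantees, by the choice of $|D|=k^2+1$ and the construction of $newSting$, $newAntistings$ in Algorithm~\ref{alg:NB}), so a freshly created label cannot be cancelled by anything already stored locally — only a label arriving from the network (from $\mathcal H$) or carried in some other processor's state can cancel it.

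The next step is to charge each new creation to a ``reason'': a creation in line~\ref{ln:useOwnLabel} happens only when $legitLabels_i() = \emptyset$ and $storedLabels_i[i]$ has no legitimate pair, i.e., the previously created/adopted own-label $\ell$ has been cancelled. By the structural fact above, the cancelling label $\ell'$ with $\ell' \not\preceq_{lb} \ell$ (triggering line~\ref{ln:cancelLabels}) or the cancelling pair received in line~\ref{ln:lastSentCancel}/line~\ref{ln:cancelMax} must trace back to a label pair that was either one of the $m$ initially in transit or one of the at most $n$ labels present in the initial state of each processor. So the count of creations is bounded by the number of ``bad'' own-domain labels the system can ever confront $p_i$ with, and each such bad label is used up once (after $p_i$ creates a strictly larger label, the bad label is $\prec_{lb}$-smaller and by Corollary~\ref{th:noStaleInfo} plus the FIFO/bounded-capacity data link it cannot reappear infinitely often without being dominated). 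Combining: there are at most $n$ sources of such labels (other processors' states) contributing, and for each the re-adoption bound of Lemma~\ref{th:boundedAdopt} ($n+m$) applies, while within $p_i$ itself the storage holds at most $O(n^2+m)$ pairs; multiplying the $n$ sources by the per-source $O(n^2+m)$ bound gives $|L_i| \le n(n^2+m)$.

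Concretely I would proceed: (1) prove the monotonicity claim that each $\ell_{i_k}$ created by $nextLabel()$ is $\preceq_{lb}$-greater than all of $storedLabels_i[i]$ at that moment, hence cannot be cancelled by a label already in $p_i$'s store; (2) conclude each creation step is ``caused'' by a cancelling label pair that is not presently in $storedLabels_i[i]$, hence comes from the network ($\le m$) or from another processor's state ($\le n$ processors, each holding $O(n^2+m)$ by the queue-size invariant and Corollary~\ref{th:noStaleInfo}); (3) observe that once such a cancelling pair is delivered to $p_i$ and processed (lines~\ref{ln:add},~\ref{ln:cancelLabels},~\ref{ln:remove}), it is recorded in $storedLabels_i[i]$ and, being $\prec_{lb}$-dominated once $p_i$ creates the next label, it can never again serve to cancel a future own-label of $p_i$ — this is the step that forbids infinitely many creations; (4) add up the charges across the $n$ potential sources to reach $n(n^2+m)$, noting the $m$ absorbs lower-order terms in the stated bound. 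The main obstacle is step~(3): ruling out that a stale own-domain label circulates through several other processors' stores and keeps coming back to cancel $p_i$'s label — this needs the careful combination of Corollary~\ref{th:noStaleInfo} (at most one legitimate own-label per processor state), the bounded queue sizes, and the fact that after a creation the new label strictly dominates, so any returning stale label is detected as non-$\preceq_{lb}$-greatest and permanently marked cancelled rather than re-adopted.
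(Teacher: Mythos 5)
Your plan follows essentially the same route as the paper's proof: it splits $L_i$ into own-domain labels adopted from the initial arbitrary configuration (others' $max[]$ arrays and the $m$ labels in transit) versus labels created by $nextLabel()$, uses the fact that a freshly created label dominates everything in $storedLabels_i[i]$, and charges each creation to a stale own-domain label that can only originate in the initial configuration (since only $p_i$ creates labels in its domain), with Lemma~\ref{th:boundedAdopt}, Corollary~\ref{th:noStaleInfo} and the chosen queue size of $storedLabels_i[i]$ preventing such a label from being "re-used" or lost to overflow. Your accounting is slightly looser than the paper's explicit $remoteLabels_i$ computation, but it reaches the same stated bound by the same mechanism.
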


\begin{proof}
Let $L_{i,j}=\ell_{i_0,j}, \ell_{i_1,j}, \ldots$ be the sequence of legitimate labels that $p_i$ stores in $max_i[j]$ during $R$ and $C_{i,j}=\ell^c_{i_0,j}, \ell^c_{i_1,j}, \ldots$ be the sequence of legitimate labels that $p_i$ receives from processor $p_j$'s domain. We consider the following cases in which $p_i$ stores $L$'s values in $max_i[i]$.

\noindent \textbf{(1) When $\mathbf{\ell_{i_k}=\ell_{j_0,{j'}}}$, where $\mathbf{p_j,p_{j'} \in P}$ and $\mathbf{k \in \mathbb{N}}$.}~~~~This case considers the situation in which $max_i[i]$ stores a label that appeared in $max_j[{j'}]$ at the (arbitrary) starting configuration, (i.e. $\ell_{j_0,{j'}} \in L_{j,j'}$). There are at most $n(n-1)$ such legitimate label values from $p_i$'s domain, namely $n-1$ arrays $max_j[]$ of size $n$.

\noindent \textbf{ (2) When $\mathbf{\ell_{i_k}=\ell_{j_{k'},{j'}} = \ell^c_{j_0,{j'}}}$, where $\mathbf{p_j,p_{j'} \in P}$, $\mathbf{k,k' \in \mathbb{N}}$ and $\mathbf{\ell_{j_{k'},{j'}} \neq \ell_{j_{k'},j}}$.} 
~~~~This case considers the situation in which $max_i[i]$ stores a label that appeared in the communication channel between $p_j$ and $p_{j'}$ at the (arbitrary) starting configuration, (i.e. $\ell^c_{j_0,{j'}} \in C_{j,j'}$) and appeared in $max_j[{j'}]$ before $p_j$ communicated this to $p_i$. 
There are at most $m$ such values, i.e., as many as the capacity of the communication links in labels, namely $|\mathcal{H}|$. 

\noindent \textbf{ (3) When $\mathbf{\ell_{i_k}}$ is the return value of $\mathbf{nextLabel()}$ (the else part of line~\ref{ln:useOwnLabel}).} ~~~~ Processor $p_i$ aims at adopting the $\preceq_{lb}$-greatest legitimate label pair that is stored in $max_i[]$, whenever such exists (line~\ref{ln:adopt}). Otherwise, $p_i$ uses a label from its domain; either one that is the $\preceq_{lb}$-greatest legit label pair among the ones in $storedLabels_i[i]$, whenever such exists, or the returned value of  $nextLabel()$ (line~\ref{ln:useOwnLabel}).

The latter case (the else part of line~\ref{ln:useOwnLabel}) refers to labels, $\ell_{i_k}$, that $p_i$ stores in $max_i[i]$ only after checking that there are no legitimate labels stored in $max_i[]$ or $storedLabels_i[i]$. Note that every time $p_i$ executes the else part of line~\ref{ln:useOwnLabel}, $p_i$ stores the returned label, $\ell_{i_k}$, in $storedLabels_i[i]$. After that, there are only three events for $\ell_{i_k}$ not to be stored as a legitimate label in $storedLabels_i[i]$:\\ ($i$) execution of line~\ref{ln:clean}, ($ii$) the network delivers to $p_i$ a label, $\ell'$, that either cancels $\ell_{i_k}$ ($\ell'.cl \not \preceq_{lb} \ell_{i_k}.ml$), or for which $\ell'.ml \not \preceq_{lb} \ell_{i_k}.ml$, and ($iii$) $\ell_{i_k}$ overflows from $storedLabels_i[i]$ after exceeding the $(n(n^2+m)+1)$ limit which is the size of the queue.

Note that \Argument~\ref{th:onlyOneDoubleRecord} says that event ($i$) can occur only once (during $p_i$'s first step). Moreover, only $p_i$ can generate labels that are associated with its domain (in the else part of line~\ref{ln:useOwnLabel}). Each such label is $\preceq_{lb}$-greater-equal than all the ones in $storedLabels_i[i]$ (by the definition of $nextLabel()$ in Algorithm~\ref{alg:NB}).

\sloppy{Event $(ii)$ cannot occur after $p_i$ has learned all the labels $\ell \in remoteLabels_{i}$ for which $\ell \notin storedLabels_i[i]$, where $remoteLabels_{i}$ $=$ $(((\cup_{p_j \in P}$ $localLabels_{i,j})$ $\cup$ ${\cal H})$ $\setminus$ $storedLabels_{i}[i])$ and $localLabels_{i,j}$ $=$ $\{ \ell'$ $:$ $\ell' =_{lCreator} i,$ $\exists p_j$ $\in$ $P$ $:$ $((\ell'$ $\in$ $storedLabels[i])$ $\lor$ $(\exists$ $p_k$ $\in$ $P$ $:$ $\ell'$ $=$ $max_j[k].ml))\}$.} During this learning process, $p_i$ cancels or updates the cancellation labels in $storedLabels_{i}[i]$ before adding a new legitimate label. Thus, this learning process can be seen as moving labels from $remoteLabels_{i}$ to $storedLabels_i[i]$ and then keeping at most one legitimate label available in $storedLabels_i[i]$. 
Every time $storedLabels_i[i]$ accumulates a label $\ell$ that was unknown to $p_i$, the use of $nextLabel()$ allows it to create a label $\ell_{i_k}$ that is $\preceq_{lb}$-greater than any label pair in $storedLabels_i[i]$ and eventually from all the ones in $remoteLabels_{i}$.

\remove{
Note that $remoteLabels_{i}$'s labels must come from the (arbitrary) start of the system, because $p_i$ is the only one that can add a label to the system from its domain and therefore this set cannot increase in size.
From the three cases of $L_i$ labels that we detailed at the beginning of this proof ((1)--(3)), we can bound the size of $remoteLabels_{i}$ as follows: for $p_j \in P : j \neq i$ we have that $|remoteLabels_{i}|$ $=$ $(n-1)(|max[]| + |storedLabels_j[i]|)+|{\cal H}|=(n-1)(n+(n^2+m))+m=n^3+(m-1)n$. [[@@ should $(n^2+m)$ be $()n+m$? @@]]
[[@@ (calculation of $remoteLabels_i$):  
1.	add subscript i in |storedLabels[i]|
2.	fix bound to (n-1)(2n+m) + m = n(2n + m - 2)
3.	line -6: "occur" instead of "occurs"
4.	line -4: 2*[the bound above] 
 @@]]
Thus, what is suggested by event $(ii)$ of $p_i$, i.e., receiving labels from $remoteLabels_{i}$, stops happening before overflows (event $(iii)$) occur, since $storedLabels_i[i]$ has been chosen to have a size that can accommodate all the labels from $remoteLabels_{i}$ and those created by $p_i$ as a response to these.
This size is $2(n^3+(m-1)n)$.
%
[[@@ The bound calculation question. $(n-1)(2n+m)+m=2n^2+mn-2n-m+m=2n^2+mn-2n=n(2n+m-2)$. We need to justify this. @@]]

Note that, since we are interested in a bound on the number of adoption steps, this proof does not distinguish between processors that take bounded or practically infinite number of  steps in $R$ and considers all processors as the ones that take a practically infinite number of steps.
} 

Note that $remoteLabels_{i}$'s labels must come from the (arbitrary) start of the system, because $p_i$ is the only one that can add a label to the system from its domain and therefore this set cannot increase in size.
These labels include those that are in transit in the system and all those that are unknown to $p_i$ but exist in the $max_j[\bullet]$ or $storedLabels_j[i]$ structures of some other processor $p_j$.
By Lemma~\ref{th:boundedAdopt} we know that $|storedLabels_j[i]|\leq n+m$ for $i \neq j$.
From the three cases of $L_i$ labels that we detailed at the beginning of this proof ((1)--(3)), we can bound the size of $remoteLabels_{i}$ as follows: for $p_j \in P : j \neq i$ we have that $|remoteLabels_{i}|$ $\leq$ $(n-1)(|max[]| + |storedLabels_j[i]|)+|{\cal H}|=(n-1)(n+(n+m))+m=mn+2n^2-2n$. 
Since $p_i$ may respond to each of these labels with a call to $nextLabel()$, we require that $storedLabels_i[i]$ has size $2|remoteLabels_i|+1$ label pairs in order to be able to accommodate all the labels from $|remoteLabels_i|$ and the ones created in response to these, plus the current greatest.
Thus, what is suggested by event $(ii)$ of $p_i$, i.e., receiving labels from $remoteLabels_{i}$, stops happening before overflows (event $(iii)$) occurs, since $storedLabels_i[i]$ has been chosen to have a size that can accommodate all the labels from $remoteLabels_{i}$ and those created by $p_i$ as a response to these.
\bl{This size is $2(mn+2n^2-2n)+1 = 2(n^3cap+2n^2-2n)+1$ (since $m=n^2cap$) which is $O(n^3)$.}
\end{proof}

\bl{From the end of the proof of Lemma~\ref{th:boundedCreatorSteps}, we get Corollary~\ref{thm:kSize}.}

\begin{corollary}
\label{thm:kSize}
\bl{The number $k$ of antistings needed by Algorithm~\ref{alg:NB} is $2\cdot(2(n^3cap+2n^2-2n)+1)$ (twice the queue size).}
\end{corollary}

\subsubsection{Pair diffusion}
The proof continues and shows that active processors can eventually stop adopting or creating labels. We are particularly interested in looking into cases in which there are canceled label pairs and incomparable ones. We show that they eventually disappear from the system (\Argument~\ref{th:riskEmpty}) and thus no new labels are being adopted or created (\Argument~\ref{th:emptyRisk}), which then implies the existence of a global maximal label (\Argument~\ref{th:boundedDiffusion}).  
 
\Arguments~\ref{th:riskEmpty} and~\ref{th:emptyRisk}, as well as \Argument~\ref{th:boundedDiffusion} and Theorem~\ref{th:oneGreatest4All} assume the existence of at least one processor, $p_{unknown} \in P$ whose identity is unknown, that takes practically infinite number of steps in $R$. Suppose that processor $p_i \in P$ takes a bounded number of steps in $R$ during a period in which $p_{unknown}$ takes a practically infinite number of steps. We say that $p_i$ has become inactive (crashed) during that period and assume that it does not resume to take steps at any later stage of $R$ (in the manner of fail-stop failures, as in Section~\ref{s:sys}).

Consider a processor $p_i \in P$ that takes any number of (bounded or practically infinite) steps in $R$ and two processors $p_j, p_k \in P$ that take a practically infinite number of steps in $R$. 
Given that $p_j$ has a label pair $\ell$ as its local maximal, and there exists another label pair $\ell'$ such that $(\ell'.ml \not \preceq_{lb} \ell.ml) \lor \ell'.cl \not \preceq_{lb} \ell.ml$ and they have the same creator $p_i$.
Algorithm~\ref{alg:WFR} suggests only two possible routes for some label pair $\ell'$ to find its way in the system through $p_j$. Either by $p_j$ adopting $\ell'$ (line~\ref{ln:adopt}), or by creating it as a new label (the else part of line~\ref{ln:useOwnLabel}).
Note, however, that $p_j$ is not allowed to create a label in the name of $p_i$ and since $\ell' =_{lCreator} i$, the only way for $\ell'$ to disturb the system is if this is adopted by $p_j$ as in line~\ref{ln:adopt}.
We use the following definitions for estimating whether there are such label pairs as $\ell$ and $\ell'$ in the system.

There is a \emph{risk} for two label pairs from $p_i$'s domain, $\ell_j$ and $\ell_k$, 
to cause such a disturbance when either they cancel one another or when it can be found that one is not greater than the other.
Thus, we use the predicate $risk_{i,j,k}(\ell_j,\ell_k)=(\ell_j =_{i} \ell_k) \land legit(\ell_j) \land (notGreater(\ell_j, \ell_k) \lor canceled(\ell_j, \ell_k))$ to estimate whether $p_j$'s state encodes a label pair, $\ell_j =_{lCreator} i$, from $p_i$'s domain that may disturb the system due to another label, $\ell_k$, from $p_i$'s domain that $p_k$'s state encodes, where $canceled(\ell_j, \ell_k)=(legit(\ell_j) \land \neg legit(\ell_k) \land \ell_j =_{ml} \ell_k)$ refers to a case in which label $\ell_j$  is canceled by label $\ell_k$, $notGreater(\ell_j, \ell_k)=(legit(\ell_j) \land legit(\ell_k) \land \ell_k.ml \not \preceq_{lb} \ell_j.ml)$ that refers to a case in which label $\ell_k$ is not $\preceq_{lb}$-greater than $\ell_j$ and $(\ell_j =_i \ell_k) \equiv (\ell_j =_{lCreator} \ell_k =_{lCreator} i)$.

\begin{table}[t]

\begin{footnotesize}

\centering
\begin{tabular}[t]{|c|l| m{3.5cm}|}

\hline
\textbf{Notation} & \textbf{Definition}  & \textbf{Remark} \\
\hline
$hName_{i,j,k}$ & $\{ (\ell_j, \ell_k) : \ell_j = max_j[j] \land (\exists \langle \ell_k, \bullet \rangle \in {\cal H}_{k,j}) \}$ & In transit from $p_k$ to $p_j$ as $sentMax$ feedback about $max_k[k]$ \\ 
\hline 
$hAck_{i,j,k}$ & $\{ (\ell_j, \ell_k) : \ell_j = max_j[k] \land (\exists \langle \bullet, \ell_k \rangle \in {\cal H}_{k,j}) \}$ & In transit from $p_k$ to $p_j$ as $lastSent$ feedback about $max_k[j]$ \\ 
\hline  
$max_{i,j,k}$ & $\{ ( max_j[j], max_k[k]) \}$ & Local maximal labels of $p_j$ and $p_k$ \\ 
\hline 
$ack_{i,j,k}$ & $\{ ( max_j[j], max_k[j]) \}$ & $\ell_j$ is $p_j$'s local maximal label and $\ell_k = max_k[j]$ \\ 
\hline 
$stored_{i,j,k}$ & $\{\{ max_j[j] \} \times storedLabels_k[i]\}$ & A label $\ell_k$ in $storedLabels_k[i]$ that can cancel $\ell_j = max_j[j]$ \\ 
\hline 

\end{tabular}

\end{footnotesize}

\caption{\hangindent=4em The notation used to identify the possible positions of label pairs $\ell_j$ and $\ell_k$ that can cause canceling as used in \Arguments~\ref{th:riskEmpty} to \ref{th:boundedDiffusion} and in Theorem~\ref{th:oneGreatest4All}.}
\label{tab:LabelThmNotation}
\end{table}

These two label pairs, $\ell_j$ and $\ell_k$, can be the ones that processors $p_j$ and $p_k$ name as their local maximal label, as in $max_{i,j,k} = \{ ( max_j[j], max_k[k]) \}$, or recently received from one another, as in $ack_{i,j,k} = \{ ( max_j[j], max_k[j]) \}$. These two cases also appear when considering the communication channel (or buffers) from $p_k$ to $p_j$, as in 
$hName_{i,j,k}  = \{ (\ell_j, \ell_k) : \ell_j = max_j[j] \land (\exists \langle \ell_k, \bullet \rangle \in {\cal H}_{k,j}) \}$ and 
$hAck_{i,j,k}  = \{ (\ell_j, \ell_k) : \ell_j = max_j[k] \land (\exists \langle \bullet, \ell_k \rangle \in {\cal H}_{k,j}) \}$.
We also note the case in which $p_k$ stores a label pair that might disturb the one that $p_j$ names as its (local) maximal, as in $stored_{i,j,k} = \{\{ max_j[j] \} \times storedLabels_k[i]\}$
We define the union of these cases to be the set 
$risk = \{ (\ell_j, \ell_k) \in max_{i,j,k} \cup ack_{i,j,k} \cup hName_{i,j,k} \cup hAck_{i,j,k} \cup stored_{i,j,k} : \exists p_i, p_j, p_k \in P \land stopped_j \land stopped_k \land risk_{i,j,k}(\ell_j,\ell_k) \}$, where $stopped_i = true$ when processor $p_i$ is inactive (crashed) and $false$ otherwise.
The above notation can also be found in Table~\ref{tab:LabelThmNotation}.



\begin{argument}
\label{th:riskEmpty}
Suppose that there exists at least one processor, $p_{unknown} \in P$ whose identity is unknown, that takes practically infinite number of steps in $R$ during a period where $p_j$ never adopts labels (line~\ref{ln:adopt}), $\ell_j : (\ell_j =_{lCreator} i)$, from $p_i$'s unknown domain ($\ell_j \notin labels_j(\ell_j)$).
Then eventually $risk = \emptyset$ .
\end{argument}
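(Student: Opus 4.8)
The plan is to decompose $risk$ according to the five families of positions in Table~\ref{tab:LabelThmNotation} --- $max_{i,j,k}$, $ack_{i,j,k}$, $hName_{i,j,k}$, $hAck_{i,j,k}$, $stored_{i,j,k}$ --- and to show that, for every triple $p_i,p_j,p_k\in P$ with $stopped_j \land stopped_k$, each family eventually contains no pair $(\ell_j,\ell_k)$ that witnesses the predicate $risk_{i,j,k}$. Because $risk$ is the finite union of these contributions and, once the relevant processors are inactive, every quantity feeding the sets is monotone (frozen local states; channels that only lose packets), this gives a suffix of $R$ along which $risk=\emptyset$ and the emptiness is stable.

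I would first record what becomes available once $p_j$ and $p_k$ are both inactive. By the fail-stop model (Section~\ref{s:sys}) neither takes a further step, so $max_j[\cdot]$, $max_k[\cdot]$, $storedLabels_j[\cdot]$, $storedLabels_k[\cdot]$ are frozen; by Corollary~\ref{th:noStaleInfo}, applied at the last complete $receive()$ step of each, each of $p_j,p_k$ encodes at most one legitimate label of $p_i$'s domain, residing only in its $storedLabels[i]$ queue and in $max[\cdot]$; and the period assumed in the statement --- beyond which no processor adopts a label of $p_i$'s domain that it does not already store --- is a suffix of $R$ that exists by \Argument~\ref{th:boundedAdopt}. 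Granting these, the two channel families are dispatched at once: since $p_k$ is inactive it feeds nothing further into the bounded-capacity link $p_k\to p_j$, so by the self-stabilizing data-link guarantee ${\cal H}_{k,j}$ eventually carries no label pair, whence $hName_{i,j,k}=hAck_{i,j,k}=\emptyset$ from then on.

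It remains to handle the ``local'' families $max_{i,j,k}$, $ack_{i,j,k}$, $stored_{i,j,k}$, which are built only from the frozen states of $p_j$ and $p_k$, and here lies the main obstacle: one must argue that those frozen labels cannot satisfy $risk_{i,j,k}$. I would trace the bookkeeping of Algorithm~\ref{alg:WFR} at the last complete $receive()$ step of $p_j$ (and symmetrically of $p_k$): a canceled copy of a $p_i$-label sitting in $storedLabels[i]$ is pushed by line~\ref{ln:receivedCanceled} into $max[\cdot]$ and then by line~\ref{ln:cancelMax} into $max_j[j]$, so after that step no surviving pair can satisfy $canceled(\ell_j,\ell_k)$; a legitimate copy of $p_i$'s domain that is not $\preceq_{lb}$-greatest is canceled by line~\ref{ln:cancelLabels}; and \Argument~\ref{th:localNondecreasing} ensures $max_j[j]$ (resp.\ $max_k[k]$) is thereafter the $\preceq_{lb}$-greatest legitimate label among all that $p_j$ then holds. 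The only surviving way for $risk_{i,j,k}$ to hold is $notGreater(\ell_j,\ell_k)$ with $\ell_j$ and $\ell_k$ \emph{incomparable} legitimate labels of $p_i$'s domain held respectively by $p_j$ and $p_k$. This is the case I expect to be delicate; I would close it using the assumed period together with the always-active $p_{unknown}$: a persisting such incomparable pair would eventually be delivered to $p_{unknown}$ (it either occupies a channel toward an active processor or was communicated by $p_j$ or $p_k$ before they became inactive), and since at most one of two incomparable legitimate labels can remain uncanceled in $max_{unknown}[i]$, $p_{unknown}$ would be forced to adopt a label of $p_i$'s domain outside its stored history, contradicting the assumption. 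Hence no surviving pair witnesses $risk_{i,j,k}$, and combined with the emptied channel families this yields $risk=\emptyset$ eventually --- the fact that \Argument~\ref{th:emptyRisk} and \Argument~\ref{th:boundedDiffusion} go on to use.
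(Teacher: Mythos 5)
Your proof is built on a reading of $risk$ under which the pairs $(\ell_j,\ell_k)$ are held by two \emph{crashed} processors $p_j,p_k$ (you take the printed condition $stopped_j \land stopped_k$ literally and argue over frozen states). The paper's proof works with the opposite situation: the prose introducing $risk$ fixes $p_j,p_k$ as processors that take a practically infinite number of steps, and the proof's first move is precisely ``by the definition of $risk$ we may assume $p_j$ and $p_k$ are alive throughout $R$.'' Its whole engine is liveness: the only way a pair can stay in $risk$ is if the canceling label $\ell_k$ stays hidden from $p_j$ forever, and this is impossible because the live endpoints keep retransmitting $max_k[k]$ and $max_k[j]$ over the self-stabilizing FIFO link (line~\ref{ln:transmit}), so $\ell_k$ is eventually delivered (or a lone in-transit copy is lost), at which point lines~\ref{ln:lastSentCancel}, \ref{ln:receivedCanceled} and~\ref{ln:cancelMax} cancel $\ell_j$ and the pair leaves $risk$; the five families in Table~\ref{tab:LabelThmNotation} are just the case analysis of where $\ell_k$ can sit. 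Under your reading the lemma's hypothesis that $p_j$ never adopts labels from $p_i$'s domain is vacuous (a crashed $p_j$ adopts nothing), which is a sign the interpretation is not the intended one, and the statement you would prove does not support the later uses in Lemmas~\ref{th:emptyRisk} and~\ref{th:boundedDiffusion}, which need the \emph{active} processors' states to be free of clashing pairs.

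Even granting your reading, two steps do not go through. First, the claim that ${\cal H}_{k,j}$ eventually carries no label pair when both endpoints are inactive has no support in the model: packets \emph{may} be spontaneously omitted but are not guaranteed to be, and the data-link progress (token/acknowledgment exchange) requires both endpoints to take steps, so an in-transit corrupted pair toward a crashed $p_j$ can persist forever; hence $hName_{i,j,k}$ and $hAck_{i,j,k}$ need not empty. Second, the closing contradiction in your ``delicate case'' is not forced: labels frozen inside crashed processors' states need never be delivered to $p_{unknown}$ (nobody transmits them anymore), and even if $p_{unknown}$ already knows both incomparable labels, nothing compels it to adopt a label of $p_i$'s domain outside its stored history --- it may simply hold canceled copies in $storedLabels_{unknown}[i]$ and keep a maximal label from a different domain --- so no contradiction with the no-adoption hypothesis follows. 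The repair is to argue as the paper does: take $p_j,p_k$ alive, reduce persistence of a $risk$ pair to ``$\ell_k$ is never delivered to $p_j$,'' and refute that via infinite retransmission over the stabilizing links in each of the five positions.
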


\begin{proof}
Suppose this \Argument~ is false, i.e., the assumptions of this \Argument~hold and yet in any configuration $c \in R$, it holds that $(\ell_j, \ell_k) \in risk \neq \emptyset$. We use $risk$'s definition to study the different cases. By the definition of $risk$, we can assume, without the loss of generality, that $p_j$ and $p_k$ are alive throughout $R$. 
\vspace{.5em}

\setlength{\leftskip}{.7cm}

\noindent{\bf Claim:} If $p_j$ and $p_k$ are alive throughout $R$, i.e. $stopped_j = stopped_k = \emph{\texttt{\em False}}$, then $risk \neq \emptyset \iff risk_{i,j,k} = \emph{\texttt{\em  True}}$. 
This means that there exist two label pairs $(\ell_j, \ell_k)$ where  $\ell_k$ can force a cancellation to occur. Then the only way for this two labels to force $risk \neq \emptyset$ is if, throughout the execution, $\ell_k$ never reaches $p_j$.
\vspace{.5em}

\setlength{\leftskip}{0pt}

The above claim is verified by a simple observation of the algorithm. 
If $\ell_k$ reaches $p_j$ then lines~\ref{ln:lastSentCancel}, \ref{ln:receivedCanceled}  and~\ref{ln:cancelMax} guarantee a canceling and lines~\ref{ln:add} and~\ref{ln:cancelLabels} ensure that these labels are kept canceled inside $storedLabels_j[]$. 
The latter is also ensured by the bounds on the labels given in \Arguments~\ref{th:boundedAdopt} and~\ref{th:boundedCreatorSteps} that do not allow queue overflows. 
Thus to include these two labels to $risk$, is to keep $\ell_k$  hidden from $p_j$ throughout $R$.
We perform a case-by-case analysis to show that it is impossible for label $\ell_k$ to be ``hidden'' from $p_j$ for an infinite number of steps in $R$.

\noindent {\bf The case of $(\ell_j, \ell_k) \in hName_{i,j,k}$.}~~~~~~This is the case where $\ell_j = max_j[j]$ and $\ell_k$ is a label in $\mathcal{H}_{k,j}$ that appears to be $max_k[k]$. 
This may also contain such labels from the corrupt state.
We note that $p_j$ and $p_k$ are alive throughout $R$. 
The stabilizing implementation of the data-link ensures that a message cannot reside in the communication channel during an infinite number of $transmit()$ -- $receive()$ events of the two ends. Thus $\ell_k$, which may well have only a single instance in the link coming from the initial corrupt state, will either eventually reach $p_j$ or it become lost.
In the both cases (the first by the Claim for the second trivially) the two clashing labels are removed from $risk$ and the result follows.

\noindent {\bf The case of $(\ell_j, \ell_k) \in hAck_{i,j,k}$.}~~~~~~
This is the case where $\ell_j = max_j[j]$ and $\ell_k$ is a label in $\mathcal{H}_{k,j}$ that appears to be $max_k[j]$.
The proof line is exactly the same as the previous case.

This case follows by the same arguments to the case of $(\ell_j, \ell_k) \in ack_{i,j,k}$.

\noindent {\bf The case of $(\ell_j, \ell_k) \in max_{i,j,k}$.}~~~~~
Here the label pairs $\ell_j$ and $\ell_k$ are named by $p_j$ and $p_k$ as their local maximal label.  
We note that $p_j$ and $p_k$ are alive throughout $R$. By our self-stabilizing data-links and by the assumption on the communication that a message sent infinitely often is received infinitely often, then $p_k$ transmits its $max_k[k]$ label infinitely often when executing line~\ref{ln:transmit}.
This implies that $p_j$ receives $\ell_k$ infinitely often. 
By the Claim the canceling takes place, and the two labels are eventually removed from the global observer's $risk$ set, giving a contradiction.

\noindent \textbf{ The case of $\mathbf{(\ell_j, \ell_k) \in ack_{i,j,k}}$.}~~~~~~ 
This is the case  case where the labels $(\ell_j, \ell_k)$ belong to $\{ ( max_j[j], max_k[j]) \}$.
Since processor $p_k$ continuously transmits its label pair in $max_k[j]$ (line~\ref{ln:transmit}) the proof is almost identical to the previous case. 

\noindent \textbf{ The case of $\mathbf{(\ell_j, \ell_k) \in stored_{i,j,k}}$.}~~~~
%
This case's proof, follows by similar arguments to the case of $(\ell_j, \ell_k) \in max_{i,j,k}$. Namely, $p_k$ eventually receives the label pair $\ell_j = max_j[j]$. The assumption that $risk_{i,j,k}(\ell_j,\ell_k)$ holds implies that one of the tests in lines~\ref{ln:cancelLabels} and \ref{ln:cancelMax} will either update $storedLabels_k[i]$, and respectively, $max_k[j]$ with canceling values. We note that for the latter case we argue that $p_j$ eventually received the canceled label pair in $max_k[j]$, because we assume that $p_j$ does not change the value of $max_j[j]$ throughout $R$. 

By careful and exhaustive examination of all the cases, we have proved that there is no way to to keep $\ell_k$  hidden from $p_j$ throughout $R$.
This is a contradiction to our initial assumption, and thus eventually $risk = \emptyset$.   
\end{proof}

These two label pairs, $\ell_j$ and $\ell_k$, can be the ones that processors $p_j$ and $p_k$ name as their local maximal label, as in $max_{i,j,k} = \{ ( max_j[j], max_k[k]) \}$, or recently received from one another, as in $ack_{i,j,k} = \{ ( max_j[j], max_k[j]) \}$. These two cases also appear when considering the communication channel (or buffers) from $p_k$ to $p_j$, as in $hName_{i,j,k}  = \{ (\ell_j, \ell_k) : \ell_j = max_j[j] \land (\exists \langle \ell_k, \bullet \rangle \in {\cal H}_{k,j}) \}$ and 
$hAck_{i,j,k}  = \{ (\ell_j, \ell_k) : \ell_j = max_j[j] \land (\exists \langle \bullet, \ell_k \rangle \in {\cal H}_{k,j}) \}$. We also note the case in which $p_k$ stores a  label pair that might disturb the one that $p_j$ names as its (local) maximal, as in $stored_{i,j,k} = \{  \{ max_j[j] \} \times storedLabels_k[i]  \}$. 

\begin{argument}
\label{th:emptyRisk}
Suppose that $risk = \emptyset$ in every configuration throughout $R$ and that there exists at least one processor, $p_{unknown} \in P$ whose identity is unknown, that takes practically infinite number of steps in $R$. Then $p_j$ \emph{never} adopts labels (line~\ref{ln:adopt}), $\ell_j : (\ell_j =_{lCreator} i)$, from $p_i$'s unknown domain ($\ell_j \notin labels_j(\ell_j)$).
\end{argument}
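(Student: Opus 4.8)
The plan is to proceed by contradiction. Suppose that at some $receive()$ step of $R$, call it step $t$, processor $p_j$ executes line~\ref{ln:adopt} and sets $max_j[j]$ to a label $\ell_j$ with $\ell_j =_{lCreator} i$ that, in the configuration $c_{t-1}$ immediately before step $t$, did not appear in $storedLabels_j[i]$. The first task is to locate $\ell_j$'s origin. Since $p_i$ is inactive in $R$, it never runs the else-part of line~\ref{ln:useOwnLabel}, so no label of $p_i$'s domain is created during $R$; and since line~\ref{ln:add} inserts into $storedLabels_j[i]$ every label that occupies some entry of $max_j[\bullet]$ at the beginning of a $receive()$ step, Corollary~\ref{th:noStaleInfo} together with the overflow-freeness of the queues (\Arguments~\ref{th:boundedAdopt} and~\ref{th:boundedCreatorSteps}) forces $\ell_j$ to have been delivered inside the message of step $t$: at $c_{t-1}$ there is a packet $\langle \langle \ell_j, \bullet\rangle, lastSent\rangle \in {\cal H}_{k,j}$ from the sender $p_k$, and $p_j$ sets $max_j[k]\gets\langle \ell_j, \bullet\rangle$ in line~\ref{ln:exposeStore}. (The boundary case $t=1$, where $\ell_j$ might instead sit in some $max_j[k]$ of $c_0$, is disposed of the same way, since then $\ell_j$ is added at step~$1$.)

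Next I would analyse $p_j$'s maximal label $\ell^{old}_j = max_j[j]$ at $c_{t-1}$ and its fate during step~$t$, and in the ``conflicting'' cases read off a pair that already lies in $risk$, contradicting the hypothesis. If $\ell^{old}_j$ is a legitimate label of $p_i$'s domain, then adopting $\ell_j\neq\ell^{old}_j$ in line~\ref{ln:adopt} means $\ell_j\not\preceq_{lb}\ell^{old}_j$, so $notGreater(\ell^{old}_j,\ell_j)$ and hence $risk_{i,j,k}(\ell^{old}_j,\ell_j)$ hold, while $(\ell^{old}_j,\ell_j)\in hName_{i,j,k}$ at $c_{t-1}$; if instead $\ell^{old}_j$ becomes illegitimate during step~$t$, the culprit is either the incoming $lastSent$ (line~\ref{ln:lastSentCancel}), giving a witness in $hAck_{i,j,k}$ via $canceled$, or a cancelled copy already in $storedLabels_j[i]$ (line~\ref{ln:cancelMax}), giving a witness in $stored_{i,j,j}$ via $canceled$. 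The same reasoning covers the case in which $storedLabels_j[i]$ already held a legitimate $p_i$-label $\ell^{s}_j\neq\ell_j$: once line~\ref{ln:add} inserts $\ell_j$ a $double$ arises and line~\ref{ln:cancelLabels} must cancel one of the two, and tracing this back one step exhibits the required pair. I would organise all of these exactly along the five sets $max_{i,j,k},ack_{i,j,k},hName_{i,j,k},hAck_{i,j,k},stored_{i,j,k}$ used in \Argument~\ref{th:riskEmpty}, and also treat the mirror-image situations in which the relevant conflict is visible at the sender $p_k$ rather than at $p_j$.

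The hard part, and the genuine obstacle, is the remaining ``clean'' case: at $c_{t-1}$ processor $p_j$ holds no label of $p_i$'s domain at all (neither in $max_j[\bullet]$ nor in $storedLabels_j[i]$), $\ell^{old}_j$ belongs to a domain whose creator identifier is strictly smaller than $i$, and $\ell_j$ is adopted purely through the $lCreator$ clause of $\prec_{lb}$, so no second $p_i$-label is visibly present. Here the plan is to follow $\ell_j$ back along its transmission chain: because $p_i$ is inactive, $\ell_j$ already exists at the first configuration of $R$, so it was set as $max_k[k]$ by some processor (or a corrupted copy of it was already in transit), and by \Argument~\ref{th:boundedAdopt} the number of distinct $p_i$-domain labels that can ever be adopted is finite, so this backward chase terminates. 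Along the chase I would invoke that $risk=\emptyset$ turns $\prec_{lb}$ restricted to the $p_i$-domain labels occupying the tracked positions into a total order (no incomparable pair, no cancelled copy), and argue that either two of these labels must be simultaneously present somewhere on the chain --- putting a pair into $risk$ and yielding the contradiction --- or $\ell_j$ could never have been installed as a local maximal of any processor during $R$, contradicting the fact that $p_k$ transmitted it in the $sentMax$ slot. Combining all cases contradicts $risk=\emptyset$, which proves that $p_j$ never adopts such a label $\ell_j$.
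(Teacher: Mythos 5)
Your proposal is not a completed proof: the place you yourself call ``the genuine obstacle'' is a real gap, and the backward-chase you sketch for it cannot be made to work as stated. The set $risk$ only ever contains \emph{pairs} of labels from the same creator's domain that are related by cancellation or incomparability ($canceled$ or $notGreater$). In your ``clean'' case there need not exist any second label of $p_i$'s domain anywhere in the system: a single legitimate label $\ell_j$ with $\ell_j =_{lCreator} i$ may simply sit in $max_k[k]$ (or in ${\cal H}_{k,j}$) from the arbitrary initial configuration, conflicting with nothing, so $risk=\emptyset$ holds in every configuration along your chain, and yet as soon as it is delivered $p_j$ adopts it at line~\ref{ln:adopt} through the $lCreator$ clause of $\prec_{lb}$. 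Your dichotomy ``either two of these labels coexist somewhere on the chain, or $\ell_j$ could never have been installed'' is asserted rather than argued, and its first horn need never occur, so no contradiction with $risk=\emptyset$ can be extracted there. (A smaller point: you also lean on ``$p_i$ is inactive, so no $p_i$-domain label is created during $R$,'' which is not part of the hypothesis of this lemma; it only enters later, in the pigeonhole argument of Theorem~\ref{th:oneGreatest4All}.)

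The paper's proof takes a different, positional route and its substance is exactly the piece your plan leaves open. It does not trace a first adoption step at all; it observes that the definition of $risk$ already covers every combination of two $p_i$-domain labels held by $p_j$ and $p_k$ in $max[]$ or in the channels, the only uncovered combination being $storedLabels_j[i]\times storedLabels_k[i]$, and then shows that this remaining combination cannot cause a disruption because a label parked in a $storedLabels$ queue cannot find its way back into $max[]$: adoption at line~\ref{ln:adopt} draws only on $max[]$ (by the definition of $legitLabels()$, line~\ref{ln:legitLabels}), a processor never creates labels in another processor's domain (line~\ref{ln:useOwnLabel}), and the one mechanism that moves information from $storedLabels_k[i]$ toward $max_k[]$ is the cancellation of line~\ref{ln:cancelMax}, which presupposes a pair in $stored_{i,j,k}$ and hence contradicts $risk=\emptyset$; the case of two labels in transit is handled analogously. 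The conflict cases you do work out (mapping the fate of the old maximal label onto the five sets used in Lemma~\ref{th:riskEmpty}) are reasonable, but they are not where the difficulty lies, so as it stands the proposal misses the core of the argument.
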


\begin{proof}
%
%
%
%
%
Note that the definition of $risk$ considers almost every possible combination of two label pairs $\ell_j$ and $\ell_k$ from $p_i$'s domain that are stored by processor $p_j$, and respectively, $p_k$ (or in the channels to them). The only combination that is not considered is $(\ell_j, \ell_k) \in storedLabels_j[i] \times storedLabels_k[i]$. 
However, this combination can indeed reside in the system during a legal execution and it cannot lead to a disruption for the case of $risk = \emptyset$ in every configuration throughout $R$ because before that could happen, either $p_j$ or $p_k$ would have to adopt $\ell_j$, and respectively, $\ell_k$, which means a contradiction with the assumption that $risk = \emptyset$. 

The only way that a label in $storedLabels[]$ can cause a change of the local maximum label and be communicated to also disrupt the system, is to find its way to $max[]$.
Note that $p_j$ cannot create a label under $p_i$'s domain (line~\ref{ln:useOwnLabel}) since the algorithm does not allow this, nor can it adopt a label from $storedLabels_j[i]$ (by the definition of $legitLabels()$, line~\ref{ln:legitLabels}).
So there is no way for $\ell_j$ to be added to $max_j[j]$ and thus make $risk \neq \emptyset$ through creation or adoption.

{\sloppy 
On the other hand, we note that there is only one case where $p_k$ extracts a label  from $storedLabels_k[i]: i \neq k$ and adds it to $max_k[j]$. 
This is when it finds a legit label $\ell_j \in max_k[j]$ that can be canceled by some other label $\ell_k$ in $storedLabels_k[i])$, line~\ref{ln:cancelMax}.
But this is the case of having the label pair $(\ell_j, \ell_k)$ in $stored_{i,j,k}$. 
Our assumption that $risk = \emptyset$ implies that $stored_{i,j,k} = \emptyset$. This is a contradiction.
Thus a label $\ell_k$ cannot reach $max_k[]$ in order for it to be communicated to $p_j$.

}

In the same way we can argue for the case of two messages in transit, ${\cal H}_{j,k} \times {\cal H}_{k,j}$ and that $risk = \emptyset$ throughout R.   
\end{proof}
%

\begin{argument}
\label{th:boundedDiffusion}
Suppose that $risk = \emptyset$ in every configuration throughout $R$ and that there exists at least one processor, $p_{unknown} \in P$ whose identity is unknown, that takes practically infinite number of steps in $R$.
There is a legitimate label $\ell_{\max}$, such that for any processor $p_i \in P$ (that takes a practically infinite number of steps in $R$), it holds that $max_i[i]=\ell_{\max}$.
Moreover, for any processor $p_j \in P$ (that takes a practically infinite number of steps in $R$), it holds that $((max_i[j].ml \preceq_{lb} \ell_{\max}.ml) \land ((\forall \ell \in storedLabels_i[j]: legit(\ell)) \Rightarrow$ $(\ell.ml \preceq_{lb} \ell_{\max}.ml)))$.
\end{argument}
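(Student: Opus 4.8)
The plan is to establish the statement in three stages: (i) every active processor's $max_i[i]$ eventually freezes; (ii) all these frozen values are one and the same legitimate label, which I name $\ell_{\max}$; (iii) $\ell_{\max}$ dominates the remaining label bookkeeping of the active processors.

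\emph{Stage (i): the local maxima freeze.} I would work past the first $receive()$ event, so that $\neg staleInfo_i()$ holds everywhere (Corollary~\ref{th:noStaleInfo}). By \Argument~\ref{th:localNondecreasing}, each update of $max_i[i]$ installs either a label $p_i$ creates, a legit label from $p_i$'s own domain, or a legit label taken from $max_i[]$. The first two kinds are bounded by \Argument~\ref{th:boundedCreatorSteps}; an adoption of a label from $p_k$'s domain, $k\neq i$, is bounded because $p_k$ creates only finitely many labels (\Argument~\ref{th:boundedCreatorSteps}) and, past $p_k$'s last creation, \Argument~\ref{th:boundedAdopt} caps by $n+m$ the $p_k$-domain labels $p_i$ can newly adopt, while Corollary~\ref{th:noStaleInfo} (one legit record per queue, canceled records stay canceled) prevents re-adopting the same one; crucially, $risk=\emptyset$ lets \Argument~\ref{th:emptyRisk} forbid the pathological unbounded adoption of unknown-domain labels. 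Hence there is $c^\star\in R$ after which $max_i[i]$ is constant for every active $p_i$, and that constant is legitimate (an illegitimate $max_i[i]$ is overwritten at the next $receive()$ by line~\ref{ln:adopt} or~\ref{ln:useOwnLabel}).

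\emph{Stage (ii): the frozen values agree.} Fix active $p_i,p_j$ with frozen values $\ell^{(i)},\ell^{(j)}$. Being active, each transmits its $max[\cdot]$ to the other and, by the data-link guarantee, it is received infinitely often; so after $c^\star$ processor $p_i$ holds $\ell^{(j)}$ in $max_i[j]$ (line~\ref{ln:exposeStore}). If $\ell^{(j)}.lCreator>\ell^{(i)}.lCreator$ then $\ell^{(i)}\prec_{lb}\ell^{(j)}$ and, $\ell^{(j)}$ being legit in $max_i[]$, line~\ref{ln:adopt} forces $p_i$ to install $\max_{\preceq_{lb}}(legitLabels_i())\succeq_{lb}\ell^{(j)}\succ_{lb}\ell^{(i)}$, contradicting the freeze; symmetrically in reverse. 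So all frozen values share one creator $p_c$. For active $p_j,p_k$ the pair $(\ell^{(j)},\ell^{(k)})$ lies in $max_{c,j,k}$, so $risk=\emptyset$ gives $\neg risk_{c,j,k}(\ell^{(j)},\ell^{(k)})$; both labels being legit this yields $\ell^{(k)}\preceq_{lb}\ell^{(j)}$ and, symmetrically, $\ell^{(j)}\preceq_{lb}\ell^{(k)}$, hence $\ell^{(j)}=_{lb}\ell^{(k)}$. (Equivalently, had they differed, the cancellation rules of lines~\ref{ln:cancelLabels},~\ref{ln:cancelMax},~\ref{ln:adopt} would make one of $p_j,p_k$ alter its local maximum upon receiving the other's, again contradicting the freeze.) Call this common legitimate value $\ell_{\max}$; the first assertion follows.

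\emph{Stage (iii): the bookkeeping.} Fix active $p_i,p_j$. After $c^\star$, $p_i$ receives $\ell_{\max}=max_j[j]$ from $p_j$ and stores it in $max_i[j]$; the only later writes to $max_i[j]$ are line~\ref{ln:exposeStore} (re-storing $p_j$'s $sentMax=\ell_{\max}$) and line~\ref{ln:cancelMax} (writing a canceling label whose $ml$ equals $\ell_{\max}$'s), so $max_i[j].ml\preceq_{lb}\ell_{\max}$ throughout. For $storedLabels_i[j]$: its legitimate records have creator $p_j$, and I claim none exceeds $\ell_{\max}$. Past $c^\star$ the only additions are via line~\ref{ln:add}, which writes $max_i[b]$ for a sender $p_b$ with stored-label-creator $p_j$ — but that record then sits in $max_i[]$, so line~\ref{ln:adopt} of the same $receive()$ would, if the record were legitimate and $\succ_{lb}\ell_{\max}$, install something $\succ_{lb}\ell_{\max}$, contradicting Stage~(i). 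What remains is to kill a legitimate $\succ_{lb}\ell_{\max}$ record already present in $storedLabels_i[j]$ at $c^\star$: it cannot have creator-id $<\ell_{\max}.lCreator$; for creator-id $=\ell_{\max}.lCreator$ it would, paired with an active processor's $\ell_{\max}$, fall in $stored_{\cdot,\cdot,i}$ inside $risk$, contradicting $risk=\emptyset$; and for creator-id $>\ell_{\max}.lCreator$ one argues — exactly along the accounting in the proof of \Argument~\ref{th:boundedCreatorSteps} (the $remoteLabels$ analysis) — that such a label, had it ever reached an active processor's $max[]$, would have been adopted and re-broadcast, forcing $\ell_{\max}$'s creator-id to be at least that large, which is impossible. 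This last point — excluding lingering legitimate labels of a higher-id creator while only assuming $risk=\emptyset$ and the freeze — is the delicate step; the rest is routine bookkeeping over \Arguments~\ref{th:boundedAdopt},~\ref{th:boundedCreatorSteps},~\ref{th:riskEmpty},~\ref{th:emptyRisk}.
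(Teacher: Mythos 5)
Your Stages (i)--(ii) are sound and, at bottom, the same mechanism as the paper's own (much terser) proof: the paper argues that since $risk=\emptyset$ no cancelations can be triggered between active processors, so any disagreement between local maxima must be a difference of $lCreator$, and the continual exchange of $sentMax$ (lines~\ref{ln:transmit} and~\ref{ln:exposeStore}) forces the processor holding the $\prec_{lb}$-smaller maximum to adopt the larger one at line~\ref{ln:adopt}; agreement follows. The paper does not prove any ``freeze'' inside this lemma at all --- boundedness of creations and adoptions is invoked only later, in the pigeonhole argument of Theorem~\ref{th:oneGreatest4All} --- but internalizing it as you do, via Lemmas~\ref{th:boundedAdopt}, \ref{th:boundedCreatorSteps} and~\ref{th:emptyRisk}, is a legitimate and arguably more careful way to make the convergence claim precise, and your same-creator step through the $max_{i,j,k}$ component of $risk$ uses the predicate $risk_{i,j,k}$ exactly as intended.

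The genuine gap is the one you flag yourself, in Stage (iii), for a legitimate record of $storedLabels_i[j]$ whose creator identifier $j$ exceeds $\ell_{\max}.lCreator$. Your argument only covers labels that enter $storedLabels_i[j]$ during $R$, because line~\ref{ln:add} copies them out of $max_i[]$, whence line~\ref{ln:adopt} would have pushed $max_i[i]$ above $\ell_{\max}$; it says nothing about a legitimate creator-$j$ label already sitting in $storedLabels_i[j]$ in the arbitrary initial configuration. Such a residue is never compared against entries of $max_i[]$ (lines~\ref{ln:cancelLabels} and~\ref{ln:cancelMax} only compare labels within the same creator's queue), it never influences line~\ref{ln:adopt}, and --- crucially --- $risk=\emptyset$ gives no leverage on it: $risk_{i,j,k}$ ranges only over pairs sharing the same $lCreator$, so this label paired with $\ell_{\max}$ (a different creator) belongs to none of the sets of Table~\ref{tab:LabelThmNotation}. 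Hence the ``$remoteLabels$ accounting'' you gesture at does not close this case. For what it is worth, the paper's own proof simply asserts the ``moreover'' clause without addressing this either; its stated intent (see the proof overview) is only that no stored label \emph{cancels} $\ell_{\max}$, which is a same-creator notion --- and that is precisely what your $j=\ell_{\max}.lCreator$ case, via $stored_{i,j,k}$ being part of $risk$, already establishes. Under that reading your Stage (iii) is complete; under the literal $\preceq_{lb}$ reading, the higher-identifier case remains open in your proof and unargued in the paper's.
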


\begin{proof}
We initially note that the two processors $p_i, p_j$ that take an infinite number of steps in $R$ will exchange their local maximal label $max_i[i]$ and $max_j[j]$ an infinite number of times.
By the assumption that  $risk = \emptyset$, there are no two label pairs in the system that can cause canceling to each other that are unknown to $p_i$ or $p_j$ and are still part of $max_i[i]$ or $max_i[j]$.
Hence, any differences in the local maximal label of the processors must be due to the labels' $lCreator$ difference.

Since $max_i[i]$ and $max_j[j]$ are continuously exchanged and received, assuming $max_i[i].ml \prec_{lb} max_j[j].ml$ where the labels are of different label creators, then $p_i$ will be led to a $receive()$ event of $\langle sentMax_j, lastSent_j \rangle$ where $max_i[i].ml \prec_{lb} sentMax_j.ml$.
By line~\ref{ln:exposeStore}, $sentMax_j$ is added to $max_i[j]$ and since $risk = \emptyset$ no action from line~\ref{ln:lastSentCancel} to line~\ref{ln:cancelMax} takes place.
Line~\ref{ln:adopt} will then indicate that the greatest label in $max_i[\bullet]$ is that in $max_i[j]$ which is then adopted by $p_i$ as $max_i[i]$, i.e., $p_i$'s local maximal.
The above is true for every pair of processors taking an infinite number of steps in $R$ and so we reach to the conclusion that eventually all such processors converge to the same $\ell_{max}$ label, i.e., it holds that $((max_i[j].ml \preceq_{lb} \ell_{\max}.ml) \land ((\forall \ell \in storedLabels_i[j]: legit(\ell)) \Rightarrow$ $(\ell.ml \preceq_{lb} \ell_{\max}.ml)))$.
%
%
\end{proof}

\subsubsection{Convergence}
Theorem~\ref{th:oneGreatest4All} combines all the previous lemmas to demonstrate that when starting from an arbitrary starting configuration, the system eventually reaches a configuration in which there is a global maximal label.

\begin{theorem}
\label{th:oneGreatest4All}
Suppose that there exists at least one processor, $p_{unknown} \in P$ whose identity is unknown, that takes practically infinite number of steps in $R$. Within a bounded number of steps, there is a legitimate label pair $\ell_{\max}$, such that for any processor $p_i \in P$ (that takes a practically infinite number of steps in $R$), it holds that $p_i$ has $max_i[i]=\ell_{\max}$. 
Moreover, for any processor $p_j \in P$ (that takes a practically infinite number of steps in $R$), it holds that 
$((max_i[j].ml \preceq_{lb} \ell_{\max}.ml) \land ((\forall \ell \in storedLabels_i[j]: legit(\ell)) \Rightarrow$ $(\ell.ml \preceq_{lb} \ell_{\max}.ml)))$.
\end{theorem}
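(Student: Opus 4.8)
The plan is to chain the preceding results into a single convergence argument, peeling off an ever-shorter nesting of suffixes of $R$ on which progressively more structure is guaranteed, and then to read off $\ell_{\max}$ from the last of them. First I would pass to the suffix $R_0$ that begins immediately after $p_{unknown}$'s first $receive()$ event: by Corollary~\ref{th:noStaleInfo}, from $R_0$ onward every queue $storedLabels_i[j]$ contains only unique label pairs from $p_j$'s domain, at most one of them legitimate, and $staleInfo_i()$ never holds again (Lemma~\ref{th:onlyOneDoubleRecord}); hence by Lemma~\ref{th:localNondecreasing} every step of a processor $p_i$ sets $max_i[i]$ to the $\preceq_{lb}$-greatest legitimate label that is locally available.

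Second, I would use the counting lemmas to freeze label creation and stray adoption. Lemma~\ref{th:boundedCreatorSteps} bounds the number of legitimate labels any $p_i$ ever stores in $max_i[i]$ from its own domain, so in particular every processor executes the else-branch of line~\ref{ln:useOwnLabel} only finitely often and hence eventually stops adding labels to the configuration. Once every processor has stopped adding labels, Lemma~\ref{th:boundedAdopt} applies to each of them and bounds, for every pair $p_i,p_j$, the number of labels $p_i$ adopts from $p_j$'s unknown domain. Since the queue sizes in Algorithm~\ref{alg:WFR} were chosen precisely so these totals never cause an overflow of useful labels, after a bounded prefix we reach a suffix $R_1 \subseteq R_0$ on which no processor taking a practically infinite number of steps creates a new label or adopts a label from an unknown domain.

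Third, on $R_1$ the hypothesis of Lemma~\ref{th:riskEmpty} holds for every triple $i,j,k$ (the ``period'' in its statement is all of $R_1$), so $risk=\emptyset$ at some configuration of $R_1$. I would then argue that $risk$ stays empty from that point on, obtaining a suffix $R_2\subseteq R_1$ with $risk=\emptyset$ in every configuration. Finally, Lemma~\ref{th:boundedDiffusion} applied to $R_2$ yields a legitimate label $\ell_{\max}$ with $max_i[i]=\ell_{\max}$ for every $p_i$ that takes a practically infinite number of steps, and, for every such $p_j$, $(max_i[j]\preceq_{lb}\ell_{\max}) \land ((\forall \ell\in storedLabels_i[j]: legit(\ell)) \Rightarrow (\ell\preceq_{lb}\ell_{\max}))$. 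Since $R_0$, $R_1$ and $R_2$ are each entered within a bounded number of steps (the bound being the sum of the per-processor bounds of Lemmas~\ref{th:onlyOneDoubleRecord}, \ref{th:boundedAdopt}, \ref{th:boundedCreatorSteps} and the convergence time implicit in Lemma~\ref{th:riskEmpty}), the ``bounded number of steps'' clause follows.

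The hard part will be the third step — promoting Lemma~\ref{th:riskEmpty}'s single configuration with $risk=\emptyset$ to a whole suffix on which it holds, as Lemma~\ref{th:boundedDiffusion} demands. To close this gap I would observe that on $R_1$ no label is ever injected into the system (no creations, no adoptions from unknown domains), so the collection of label pairs present in the configuration is non-increasing up to duplicates and link positions; and that while $risk=\emptyset$, Lemma~\ref{th:emptyRisk} forbids exactly the two events (creation, adoption from an unknown domain) that could install a fresh legitimate $\ell_j$ in some $max_j[j]$, while the only transitions that promote a $storedLabels$ entry into a $max[\bullet]$ array (lines~\ref{ln:receivedCanceled} and~\ref{ln:cancelMax}) require a conflicting pair that is already present, i.e. $risk\neq\emptyset$. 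A case analysis over the five location types of Table~\ref{tab:LabelThmNotation}, mirroring the proof of Lemma~\ref{th:riskEmpty}, then shows no new conflicting pair $(\ell_j,\ell_k)$ can ever re-enter $risk$ once all of them have been flushed, so $risk$ remains $\emptyset$ on $R_2$ and the theorem follows.
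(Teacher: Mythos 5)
Your proposal is correct and follows essentially the same route as the paper's proof: use Lemmas~\ref{th:boundedCreatorSteps} and~\ref{th:boundedAdopt} to reach a suffix with no further creations or adoptions from unknown domains, then invoke Lemma~\ref{th:riskEmpty} to empty $risk$, Lemma~\ref{th:emptyRisk} to keep it empty, and Lemma~\ref{th:boundedDiffusion} to extract $\ell_{\max}$. Your extra care in promoting ``$risk=\emptyset$ at some configuration'' to ``$risk=\emptyset$ on a whole suffix'' makes explicit a step the paper passes over quickly, but it is the same argument, not a different approach.
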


\begin{proof}
For any processor in the system, which may take any (bounded or practically infinite) number of steps in $R$, we know that there is a bounded number of label pairs, $L_i=\ell_{i_0}, \ell_{i_1}, \ldots$, that processor $p_i \in P$ adds to the system configuration (the else part of line~\ref{ln:useOwnLabel}), where $\ell_{i_k} =_{lCreator} i$ (\Argument~\ref{th:boundedCreatorSteps}). Thus, by the pigeonhole principle we know that, within a bounded number of steps in $R$, there is a period during which $p_{unknown}$ takes a practically infinite number of steps in $R$ whilst (all processors) $p_i$ do not add any label pair, $\ell_{i_k} =_{lCreator} i$, to the system configuration (the $else$ part of line~\ref{ln:useOwnLabel}). 

During this practically infinite period (with respect to $p_{unknown}$), in which no label pairs are added to the system configuration due to the $else$ part of line~\ref{ln:useOwnLabel}, we know that for any processor $p_j \in P$ that takes any number of (bounded or practically infinite) steps in $R$, and processor $p_k \in P$ that adopts labels in $R$ (line~\ref{ln:adopt}), $\ell_j : (\ell_j =_{lCreator} j)$, from $p_j$'s unknown domain 
($\ell_j \notin storedLabels_k(j)$) it holds that $p_k$ adopts such labels (line~\ref{ln:adopt}) only a bounded number times in $R$ (\Argument~\ref{th:boundedAdopt}). Therefore, we can again follow the pigeonhole principle and say that there is a period during which $p_{unknown}$ takes a practically infinite number of steps in $R$ whilst neither $p_i$ adds a label, $\ell_{i_k} =_{lCreator} i$, to the system (the else part of line~\ref{ln:useOwnLabel}), nor $p_k$ adopts labels (line~\ref{ln:adopt}), $\ell_j : (\ell_j =_{lCreator} j)$, from $p_j$'s unknown domain ($\ell_j \notin labels_k(\ell_j)$). 

We deduce that, when the above is true, then we have reached a configuration in $R$ where $risk = \emptyset$ (\Argument~\ref{th:riskEmpty}) and remains so throughout $R$~(\Argument~\ref{th:emptyRisk}).
\Argument~\ref{th:boundedDiffusion} concludes by proving that, whilst $p_{unknown}$ takes a practically infinite number of steps, all processors (that take practically infinite number of steps in $R$) name the same $\preceq_{lb}$-greatest legitimate label pair which the theorem statement specifies.
Thus no label $\ell =_{lCreator} j$ in $max_i[\bullet]$ or in $storedLabels_i[j]$ may satisfy $\ell.ml \not \preceq_{lb} \ell_{max}.ml$.        
\end{proof}

\subsubsection{\bl{Algorithm complexity}}%
\bl{The required local memory of a processor comprises of a queue of size (in labels) $2(n^3cap+2n^2-2n)$ that hosts the labels with the processor as a creator (Corollary~\ref{thm:kSize}). The local state also includes $n-1$ queues of size $n+n^2cap$ to store labels by other processors, and a single label for the maximal label of every processor. 
We conclude that the \emph{space complexity} is of order $\bigO(n^3)$ in labels.
Given the number of possible labels in the system by the same processor is $\beta=n^3cap+2n^2-2n$, as shown in the proof of Lemma~\ref{th:boundedCreatorSteps}, we deduce that the size of a label in bits is $\bigO(\beta \log \beta)$.
}

\bl{
By Theorem~\ref{th:oneGreatest4All} we can bound the \emph{stabilization time} based on the number of label creations. 
Namely, in an execution with $\bigO({n\cdot \beta})$ label creations (e.g., up to $n$ processors can create $O(\beta)$ labels), there is a practically infinite execution suffix (of size $2^\tau$ iterations) where the receipt of a label which starts an iteration never changes the maximal label of any processor in the system.
}

\subsection{Increment Counter Algorithm} 
\label{subsec:CounterI}
We adjust the labeling algorithm to work with counters, so that our counter increment algorithm is a stand-alone algorithm.
In this subsection, we explain how we can enhance the labeling scheme presented in the previous subsection to obtain a practically (infinite) self-stabilizing counter increment algorithm.


\begin{figure}[t]
	\begin{\algorithmFontSize}
\begin{framed}
{\bf Variables:} 
A label $lbl$ is extended to the triple $\langle lbl, seqn, wid \rangle$ called a \textit{counter} where $seqn$, is the sequence number related to $lbl$, and $wid$ is the identifier of the creator of this $seqn$. 
A counter pair $\langle mct, cct\rangle$ extends a label pair. $cct$ is a canceling counter for $mct$, such that $cct.lbl \not \prec_{lb} mct.lbl$ or $cct.lbl = \bot$. 
We rename structures $max[]$ and $storedLabels[]$ of Alg.~\ref{alg:WFR} to $maxC[]$ and $storedCnts[]$ that hold counter pairs instead of label pairs.
Variable $status \in \{{\sf MAX\_REQUEST, MAX\_WRITE, COMPLETE}\}$. \\ 
{\bf Operators:}
$add(ctp)$ - places a counter pair $ctp$ at the front of a queue. If $ctp.mct.lbl$ already exists in the queue, it only maintains the instance with the greatest counter w.r.t. $\prec_{ct}$, placing it at the front of the queue. If one counter pair is canceled then the canceled copy is retained. 
We consider an array field as a single sized queue and use $add()$.
\end{framed}\vspace{-1em}
\caption{Variables and Operators for Counter Increment; code for $p_i$.}
\label{fig:ctrVars}
\end{\algorithmFontSize}
\end{figure}

\subsubsection{From labels to counters and to a counter version of Algorithm~\ref{alg:WFR}}
\noindent \textbf{Counters.} 
To achieve this task, we now need to work with practically unbounded {\em counters}. 
A counter $cnt$ is a triplet $\langle lbl, seqn, wid\rangle$, where $lbl$ is an epoch label as defined
in the previous subsection, $seqn$ is a $\tau$-bit integer sequence number and $wid$ is the identifier of the processor that last incremented the counter's sequence number, i.e., $wid$ is the counter \emph{writer}. 
Then, given two counters $cnt_i, cnt_j$ we define the relation $cnt_i \prec_{ct} cnt_j$ $\equiv$ $(cnt_i.lbl \prec_{lb} cnt_j.lbl)$ 
$\lor$ $((cnt_i.lbl = cnt_j.lbl) \wedge (cnt_i.seqn < cnt_j.seqn))$ $\lor$ $((cnt_i.lbl = cnt_j.lbl) \wedge (cnt_i.seqn = cnt_j.seqn)
\wedge (cnt_i.wid < cnt_j.wid))$. 
Observe that when the labels of the two counters are incomparable, the counters are also incomparable. 


The relation $\prec_{ct}$ defines a total order (as required by practically unbounded counters) for counters with the same label, thus, only when processors share a globally maximal label. 
Conceptually, if the system stabilizes to use a global maximal label, then the pair of the sequence number and the processor identifier (of this sequence number) can be used as an unbounded counter, as used, for example, in MWMR register implementations~\cite{LSFTC97,RAMBO}. 
\remove{
As we have shown in Theorem~\ref{th:oneGreatest4All}, 
starting from an arbitrary configuration, we eventually reach a configuration where the active processors have adopted the same maximal label. 
In this case, 
processors can increment a shared counter also when attempting to do so concurrently. 
In what follows we argue that starting from any initial configuration, eventually the counter algorithm supports such increments.
Essentially, the counter increment algorithm enhances the labeling algorithm to take care of the counter increment once such a maximal label exists in the system. 
}

\begin{figure}[t]
\begin{algorithm}[H]

\begin{\macroFontSize}
\tcp{Where macros coincide with Algorithm~\ref{alg:WFR} we do not restate them.}
{\bf Macros:} 

	$exhausted(ctp)$ $=$ $(ctp.mct.seqn$ $\geq$ $2^{\tau})$\\
	$cancelExh(ctp) :$ {$ctp.cct \gets ctp.mct$}\\  
	$cancelExhMaxC() :$ 	\lForEach{$p_j\in P,\ c \in maxC[j]: exhausted(c)$}{$cancelExh(maxC[j])$}	 \label{algCt:cancExh}  
	$legit(ctp)=(ctp.cct = \bot \rangle )$ \label{cnt:legit}\\
	$staleCntrInfo() = staleInfo() \lor (\exists p_j \in P, x\in storedCnts[j]:exhausted(x)\land legit(x))$\\
	$retCntrQ(ct) :$ {\bf return} $(storedCnts[ct.lbl.lCreator])$\\
	$retMaxCnt(ct)$ $=$ {\bf return} $(max_{\prec_{ct}}(ct, ct'))$ \textbf{where}  $ct' \in retCntrQ(ct) \land (ct =_{lbl} ct')$\label{cnt:retMaxCnt}\\
	$legitCnts()$ $=$ $\{maxC[j].mct: \exists p_j \in P \land legit(maxC[j])\} $\\
	$useOwnCntr() =\mathbf{if~}(\exists cp \in storedCnts[i] : legit(lp))$ $\mathbf{then~}maxC[i]$ $\gets$ $cp$ $\mathbf{else~}storedCnts[i].add(maxC[i]$ $\gets$ $\langle \langle nextLabel(), 0, i\rangle, \bot \rangle)$~\label{cnt:useOwnLabelDef}
	\tcp{For every $cp \in storedCnts[i]$, we pass to $nextLabel()$ both $cp.mct.lbl$ and $cp.cct.lbl$.} 
	$getMaxSeq():$ {\bf return} $max_{wid} (\{max_{seqn}(\{ctp:ctp.mct \in legitCnts()$ $\land$ $maxC[i] =_{mct.lbl} ctp\})\})$\\
	$initWrite =\{\langle maxC[i], responseSet, status \rangle \gets \langle maxC[i](), \emptyset, {\sf MAX\_WRITE} \rangle;\}$\\
	$increment()=$ $\{maxC[i] \gets \langle maxC[i].mct.lbl, maxC[i].mct.seqn+1, i\rangle;\}$\\
	$correctResponse(A,B)=$ \textbf{return} $((status = {\sf MAX\_REQUEST} \land (A, B \notin \{\bot \}))$ $\lor$ $((status$ $=$ ${\sf MAX\_WRITE})$ $\land$ $(\langle A,B\rangle = \langle \bot, maxC_i[i] \rangle)) $\\

\end{\macroFontSize}
\end{algorithm}\vspace{-0.4em}
\caption{Macros for Algorithm~\ref{alg:cntrIncrement}.}
\label{fig:ctrMacros}
\end{figure}

\noindent \textbf{Structures.}
We convert the label structures $max[]$ and $storedLabels[ ]$ of the labeling algorithm into the structures $maxC[ ]$ and $storedCnts[ ]$ that hold counters rather than labels (see Figure~\ref{fig:ctrVars}). 
Each label can yield many different counters with different $\langle seqn, wid \rangle$. 
Therefore, in order to avoid increasing the size of the queues of $storedCnts$ (with respect to the number of elements stored), we only keep  the highest sequence number observed for each label (breaking ties with the $wid$). 

This is encapsulated in the definition of the $add()$ operator (Figure~\ref{fig:ctrVars} -- Operators).
In particular, we define the operator $add(ctp)$ (Fig.~\ref{fig:ctrVars}) to enqueue a counter pair $ctp$ to a queue of $storedCnts[n]$, where in case a counter with the same label already exists,  the following two rules apply:
(1) if at least one of the two counters is canceled we keep a canceled instance, and (2) if both  counters are legitimate, we keep the greatest counter with respect to $\langle seqn, wid\rangle$.
The counter is placed at the front of the queue. 
In this way we allow for labels for which the counters have not been exhausted to be reused.
We denote a counter pair by $\langle mct, cct\rangle$, with this being the extension of a label pair $\langle ml, cl\rangle$, where $cct$ is a canceling counter for $mct$, such that either $cct.lbl \not \prec_{lb} mct.lbl$ (i.e., the counter is canceled), or $cct.lbl = \bot$.

\noindent \textbf{Exhausted counters}. These are the ones satisfying  $seqn \geq 2^\tau$, and they are treated 
in a way similar to the canceled labels in the labeling algorithm; an exhausted counter $mct$ in a counter pair $\langle mct, cct\rangle$ is canceled, by setting $mct.lbl = cct.lbl$ (i.e., the counter's own label cancels it) and hence 
cannot be used as a local maximal counter in  $maxC_i[i]$. 
This cannot increase the number of labels that are created, since the initial set of corrupt counters remains the same as the one for labels, for which we have already produced a proof in Section~\ref{subsec:Labels}. 

\remove{
Another issue worth mentioning, is that the system is allowed to revert back to a previous legit label $x$, in case the current maximal label $y$ becomes canceled. 
Label $x$ might have been used before to create counters, so it is required to store the last sequence number written. If $x$ is legit, the system should not propose a new label and instead revert to $x$. 
Otherwise, the queues might grow with no bound. 
We enable reverting to such an $x$, by imposing that each processor only stores a single instance of counters with the same label inside $storedCnts[]$, namely the one with the maximal sequence number $(seqn,wid)$.
This is performed by storing the highest value of a counter that we hear about, as performed in line~\ref{algCt:qWriteUpdateStructs} upon a successful quorum write of a new sequence value, upon a receipt of any write request (line~\ref{algCt:qWriteKeepMaxCnt}) and in every receipt of a counter through $receive()$ by the definition of $\process()$. Namely, in every possible appearance of a counter to the local state of a processor.
}



\begin{figure}[t!]
\begin{algorithm}[H]

%
\begin{\macroFontSize}
	
\tcp{Lines~\ref{algCt:transmit} and~\ref{algCt:receive} run in the background.}
\vspace{.2em}
{\bf upon} $transmitReady(p_j \in P \setminus \lbrace p_i \rbrace)$ {\bf do transmit}($\langle maxC[i], maxC[j]\rangle$)\label{algCt:transmit}\; 
\vspace{.1em}
{\bf upon} $receive(\langle sentMax, lastSent \rangle)$ {\bf from} $p_k$ \textbf{do} $processCntr(sentMax, lastSent, j)$ \label{algCt:receive} 

\vspace{.4em}
\textbf{procedure} $processCntr($\textsl{counter\ pair} $sentMax$,  \textsl{counter pair} $lastSent)$,  \textsl{int} $k$) \label{cnt:processCntr}
\Begin{
\lForEach{$p_j\in P, ctp \in storedCnts[j]: legit(ctp)\land exhausted(ctp)$}{$cancelExh(ctp)$}\label{algCt:exhstoredCnts}
\lIf{$(\exists ctp' \in \langle sentMax, lastSent\rangle: exhausted(ctp'))$}{$cancelExh(ctp')$}\label{algCt:exhInputCnts}
	$cancelExhMaxC()$;\label{algCt:exhMaxC}
	$\process(sentMax, lastSent)$\;
	}\label{algCt:receiveEnd}

\vspace{0.4em}
\textbf{operator} $\process($\textsl{counter\ pair} $sentMax$,  \textsl{counter pair} $lastSent)$,  \textsl{int} $k$) \label{cnt:process}

\Begin{
\lIf{$sentMax\neq {\sf NULL}$}{$maxC[k]$ $\gets$ $sentMax$\label{cnt:exposeStore}}
\lIf{$lastSent \neq {\sf NULL} \land \neg legit(lastSent)$ $\land$ $maxC[i] =_{mct.lbl} lastSent$}{$maxC[i].add(lastSent)$\label{cnt:lastSentCancel}}
\lIf{$staleCntrInfo()$}{$storedCnts.emptyAllQueues()$} \label{cnt:clean}
\lForEach{$p_j \in P : recordDoesntExist(j)$}{$retCntrQ(maxC[j]).add(maxC[j])$} \label{cnt:add}

\ForEach{$p_j \in P, cp \in storedCnts[j] : (legit(cp) \land (notgeq(j,cp)\neq \bot))$}{$cp.cct \gets notgeq(j,cp)$} \label{cnt:cancelLabels}

\lForEach{$p_j \in P: ((\neg legit(maxC[j])\lor (cp <_{mct.seqn} maxC[j])) \land (maxC[j] =_{ml}$ $cp)$ $\land$ $legit(cp)$ \textbf{\em where} $ cp \in retCntrQ(maxC[j]) $}{$cp \gets maxC[j]$} \label{cnt:receivedCanceled}

\lForEach{$p_j \in P, cp \in storedCnts[j] : double(j, cp)$}{$cp.remove()$} \label{cnt:remove}

\ForEach{$p_j \in P : (legit(maxC[j]) \land (canceled(maxC[j])\neq \langle \bot, \bot \rangle))$}{$maxC[j] \gets canceled(maxC[j])$} \label{cnt:cancelMax}

\lForEach{$p_j \in P, cp \in )$}{$maxC[j] \gets getMaxCnt(maxC[j])$} \label{cnt:retMaxCntUsed}

\lIf{$legitCnts() \neq \emptyset$}{$maxC[i] \gets \langle \max_{\prec_{ct}}(legitCnts()), \bot \rangle$} \label{cnt:adopt}

\lElse{$useOwnCntr()$}  \label{cnt:useOwnLabel}
}

\end{\macroFontSize}

\end{algorithm}\vspace{-0.4em}
\caption{The $maintainCntrs()$ operator (code for $p_i$).}

\label{alg:cntrMaintenance}
\end{figure}


\noindent \textbf{The enhanced labeling algorithm.} Figure~\ref{alg:cntrMaintenance} presents a standalone version of the labeling algorithm adjusted for counters.
Each processor $p_i$ uses the token-based communication to transmit to every other processor $p_j$ its own maximal counter and the one it currently holds for $p_j$ in $maxC_i[j]$ (line~\ref{algCt:transmit}).
Upon receipt of such an update from $p_j$, $p_i$ first performs canceling of any exhausted counters in $storedCnts[]$ (line~\ref{algCt:exhstoredCnts}), in $maxC[]$ (line~\ref{algCt:exhMaxC}) and in the received couple of counter pairs (line~\ref{algCt:exhInputCnts}).
Having catered for exhaustion, it then calls $\process(\langle \bullet, \bullet\rangle)$ with the received two counter pairs as arguments.
This is essentially a counter version of Algorithm~\ref{alg:WFR}.
Macros that require some minor adjustments to handle counters are seen in Figure~\ref{fig:ctrMacros} lines~\ref{cnt:legit} to~\ref{cnt:useOwnLabelDef}.
We also address the need to update counters of $maxC[]$ w.r.t. $seqn$ and $wid$ based on counters from the $storedCnts[]$ structure and vice versa in lines~\ref{cnt:retMaxCntUsed} and~\ref{cnt:lastSentCancel}.

We define the operator $add(ctp)$ (Fig.~\ref{fig:ctrVars}) to enqueue a counter pair $ctp$ to a queue of $storedCnts[n]$, where in case a counter with the same label already exists  the following two rules apply:
(1) if at least one of the two counters is cancelled we keep a canceled instance, and (2) if both  counters are legitimate, we keep the greatest counter with respect to $\langle seqn, wid\rangle$.
The counter is placed at the front of the queue. 



\remove{
The increment counter algorithm executed in lines~\ref{algCt:beginIncrement} to~\ref{algCt:end} follows the  logic of a writer in a MWMR register emulation.
Processor $p_i$ inquires the system for the counter they believe as greatest (line~\ref{algCt:qRead}) by calling procedure $quorumRead()$ (lines \ref{algCt:qReadDef}--\ref{algCt:quorumReadEnd}).
The responses contain the counter ($max_j$) that the responding processor $p_j$ regards as the greatest (line~\ref{algCt:sendMeYourMax}).
$p_i$ aggregates the responses in its $maxC[]$ array.
Note that there can be background counter diffusion as well.
The $quorumRead()$ returns only when all the processors of one of the quorums have sent their responses (excluding responses from diffusion).

When the $quorumRead()$ completes, the $maxNonExhaustCntr()$ procedure is called repeatedly until a counter that is not canceled or exhausted is found; all counters that are exhausted must eventually become canceled. 
The function $findMaxCounter()$ cancels any exhausted counters in $maxC[]$ (while it holds the input from the quorum), and then calls $\process()$ (line~\ref{algCt:tidyQ}) to perform bookkeeping based on the new information and to provide a valid label. 
When the system is stabilized this label should not change.
Any corrupt exhausted counter that might not have been canceled in the $storedCnts[]$ will, through the new call on $\process()$, become canceled, making $p_i$ immune from adopting it if it is proposed by other processors as valid.
The $getMaxSeq()$ macro returns the maximal per $\prec_{ct}$, legit, non-exhausted  counter that it finds locally inside $maxC_i[]$. 
On exiting the loop (lines~\ref{algCt:findNonExhaustedMaxBegin}--\ref{algCt:findNonExhaustedMaxEnd}), the counter in $maxC_i[i]$ is the greatest of the counters returned by the quorum and any other processor (through diffusion), or, in case such a counter was not found, it is a newly created counter. 
As already stated such a counter is initiated to $seqn=0$ and $wid = i$.

Following this, a local copy of $maxC_i[i]$ is incremented, i.e., the sequence number is increased by one, and $wid$ is set to the identifier of $p_i$ (line~\ref{algCt:cntIncr}). 
The processor then attempts a write to the system (line~\ref{algCt:qWrite}) expecting responses from a quorum  to return (line~\ref{algCt:qWriteSend}). 
Every processor $p_j$ receiving $p_i$'s quorum write request, places it in $maxC_j[i]$ if it is greater than the value it already has in $maxC_i[j]$ and cancels it if it is exhausted.
If the write fails for any reason to gather acknowledgments, the value does not get written to the local state as it does not satisfy the \emph{if} condition of line~\ref{algCt:qWrite}.
} 

\subsubsection{Counter Increment Algorithm}
Algorithm~\ref{alg:cntrIncrement} shows a self-stabilizing counter increment algorithm where multiple processors can increment the counter. 
We start with some useful definitions and proceed to describe the algorithm.
\begin{algorithm*}[t]
\caption{Increment Counter; code for $p_i$}

\label{alg:cntrIncrement}
\begin{\algorithmFontSize}
\textbf{interface function} $incrementCounter()$ \Begin{
\textbf{let} $\langle responseSet, status\rangle \gets  \langle \emptyset, {\sf MAX\_REQUEST}\rangle;$\label{algCt:maxreceive_{j'}eq} \\
\Repeat{$status={\sf COMPLETE}$}{
\If{$status = {\sf MAX\_REQUEST} \land (\exists Q \in \mathbb{Q}: Q \subseteq \{responseSet\})$\label{algCt:maxreceive_{j'}eqACKED}}{$initWrite()$;
$increment()$\label{algCt:max_write}}
\ElseIf{$status = {\sf MAX\_WRITE} \land (\exists Q \in \mathbb{Q}: Q \subseteq \{responseSet\})$\label{algCt:max_writeACKED}}{$\langle status \gets \sf COMPLETE \rangle$\label{algCt:complete}}
\lForEach{$p_j \in P$}{\textbf{send} $\langle status, maxC[i], maxC[j]\rangle$\label{algCt:send}}
}
\textbf{return} $maxC[i]$
}

{\bf upon receive of} $m=\langle subj, sentMax, lastSent \rangle$ {\bf from} $p_j$ \Begin{
\lIf{$(m.subj = {\sf MAX\_REQUEST})$}{
\textbf{send} $\langle {\sf ACK}, maxC_i[i], maxC_i[j] \rangle$ \textbf{to} $p_j$\label{algCt:sendMeYourMax}}
\ElseIf{$(m.subj = {\sf MAX\_WRITE})$}{
	$processCntr(sentMax, lastSent, j)$;
	{\bf send $\langle \sf ACK, \bot, lastSent \rangle$ to} $p_j$\label{algCt:qWriteAck};
}
\ElseIf{$(m.subj = {\sf ACK} \land correctResponse(sentMax,lastSent))$\label{algCt:correctACK}}{$processCntr(sentMax, lastSent, j);$ $responseSet\gets j$}
} 


\end{\algorithmFontSize}
\end{algorithm*}

\Paragraph{Quorums}
We define a \emph{quorum set} $\mathbb{Q}$ based on processors in $P$, as a set of processor subsets of $P$ (named \emph{quorums}), that ensure a non-empty intersection of every pair of quorums.
Namely, for all quorum pairs $Q_{i}, Q_j \in \mathbb{Q}$ such that $Q_i,Q_j \subset P$, it must hold that $Q_{i} \cap Q_j \neq \emptyset$.
This \emph{intersection property} is useful to propagate information among servers and exploiting the common intersection without having to write a value $v$ to all the servers in a system, but only to a single quorum, say $Q$.
If one wants to retrieve this value, then a call to \emph{any} of the quorums (not necessarily $Q$), is expected to return $v$ because there is least one processor in every  quorum that also belongs to $Q$. 
In the counter algorithm we exploit the intersection property to retrieve the currently greatest counter in the system, increment it, and write it back to the system, i.e., to a quorum therein.
Note that majorities form a special case of a quorum system.

\Paragraph{Algorithm description} To increment the counter, a processor $p_i$ enters status $\sf MAX\_REQUEST$ (line~\ref{algCt:maxreceive_{j'}eq}) and starts sending a request to all other processors, waiting for their maximal counter (via line~\ref{algCt:send}). 
Processors receiving this request respond with their current maximal counter and the last sent by $p_i$ (line~\ref{algCt:sendMeYourMax}).
When such a response is received (line~\ref{algCt:correctACK}), $p_i$ adds this to the local counter structures via the counter bookkeeping algorithm of Figure~\ref{alg:cntrMaintenance}.
Once a quorum of responses (line~\ref{algCt:maxreceive_{j'}eqACKED}) have been processed, $maxC[i]$ holds the maximal counter that has come to the knowledge of $p_i$ about the system's maximal counter. 
This counter is then incremented locally and $p_i$ enters status $\sf MAX\_WRITE$ by initiating the propagation of the incremented counter (line~\ref{algCt:max_write}), and waiting to gather acknowledgments from a quorum (the condition of line~\ref{algCt:max_writeACKED}).
When the latter condition is satisfied, the function returns the new counter.
This is, in spirit, similar to the two-phase write operation of MWMR register implementations, focusing on the sequence number rather than on an associated value.

\remove{
\Paragraph{High-level description}
The counter increment algorithm uses the same structures and procedures as the labeling algorithm, but now with counters instead of labels. 
To increment the counter, a processor $p_i$ first sends a request to all other processors querying the counter they consider as the global maximum and awaits for responses from a majority. 
Using a similar procedure as the labeling algorithm it (eventually) finds the maximal epoch label and the maximal sequence number it knows for this label. In other words, it collects counters and finds the counter(s) with the largest global label; there can be more than one such counter, in which case it returns the one with the highest sequence number, breaking symmetry with the sequence number processor identifiers. 
Then it checks whether this maximal sequence number is {\em exhausted}, that is, the sequence number is equal or larger than $2^{\tau}$ (this could be, for example, due to the arbitrary values in the configuration the system starts in). 
When this is the case, it proceeds to find a new maximal label  until it finds one that is not exhausted and uses the maximal sequence number it knows for this epoch label. 
Then the processor increments the sequence number by one, sets its identifier as the writer of the sequence number and sends the new counter to all processors, and awaits for acknowledgment from a majority (this is, in spirit, similar to the two-phase write operation of MWMR register implementations, focusing on the sequence number rather than on an associated value). 
}

\subsubsection{Proof of correctness}
\paragraph{Proof outline} Initially we prove, by extending the proof of the labeling algorithm, that starting from an arbitrary configuration the system eventually reaches to a global maximal label (as given in Theorem~\ref{th:oneGreatest4All}), even in the presence of exhausted counters (Lemma~\ref{th:incrCntrMaxLabel}). 
By using the intersection property of quorums we establish that a counter that was written is known by at least one processor in every quorum (Lemma~\ref{th:quorumKnowsMax}. 
We then combine the two previous lemmas to prove that counters increment monotonically.

\begin{lemma}
\label{th:incrCntrMaxLabel}
%
In a bounded number of steps of Algorithm~\ref{alg:cntrMaintenance} 
every processor $p_i$ has counter $maxC_i[i]=ct$ with $ct.lbl = \ell_{max}$ the globally maximal non-exhausted label.
\end{lemma}

\begin{proof}
For this lemma we refer to the enhanced labeling algorithm for counters (Figure~\ref{alg:cntrMaintenance}).
The lemma proof can be mapped on the arguments proving lemmas \Argument~\ref{th:noStaleInfo} to Lemma~\ref{th:boundedCreatorSteps} of Algorithm~\ref{alg:WFR}. 
Specifically, consider a processor $p_i$ that has performed a full execution of $processCntr()$ (Fig.~\ref{alg:cntrMaintenance} line~\ref{cnt:processCntr}) at least once due to a receive event. 
This implies a call to $maintainCntrs$ and thus to $staleCntrInfo()$ (Fig.~\ref{alg:cntrMaintenance} line~\ref{cnt:clean}) which will empty all queues if exhausted non-canceled counters exist.
Also there is a call to $cancelExhaustedMaxC()$ which cancels all counters that are exhausted in $maxC[]$.
By observation of the code, after a single iteration, there is no local exhausted counter that is not canceled.

Since every counter that is received and is exhausted becomes canceled, and since the arbitrary counters in transit are bounded, we know that there is no differentiation between exhausted labels that may cause a counter's label to be canceled.
Namely, the size of the queues of $storedCnts[]$ remain the same while at the same time provide the guarantees provided by the proof of the labeling algorithm.
It follows from the labeling algorithm correctness and by our cancellation policy on the exhausted counters, that Theorem~\ref{th:oneGreatest4All} can be extended to also include the use of counters without any need to locally keep more counters than there are labels.

We proceed to deduce that, eventually, any processor taking practically infinite number of steps in $R$ obtains a counter with globally maximal label $\ell_{max}$.

\end{proof}

\remove{
\begin{proof}
The proof follows the flow of the labeling algorithm proof, and provides minor amendments wherever the use of counters (instead of labels) challenges the correctness of the arguments.
We show how the counter operations ensure that we reach to the globally maximal label $\ell_{max}$ becoming adopted by all the processors that take a practically infinite number of steps in execution $R$.
We only require that $lbl = \ell_{max}$ while $seqn$ and $wid$ may differ.

\emph{Key observation.} Upon a receive event (lines~\ref{algCt:receive}--\ref{algCt:receiveEnd}) of the increment counter algorithm,   lines~\ref{algCt:exhstoredCnts}, \ref{algCt:exhInputCnts} and~\ref{algCt:exhMaxC} cancel any exhausted counter pairs appearing as legitimate in $storedCnts[]$, $maxC[]$ and among the two received counter pairs by setting their $mct$ as their $cct$. 
Increment counter procedures also have incoming counters. 
We note that any exhausted non-canceled counters stored in $maxC_i[]$ by a $quorumRead()$, are canceled by the immediate call of $cancelExhaustedMaxC()$ in line~\ref{algCt:cancelExhFindMax} (through the call on $findMaxCntr()$ of line~\ref{algCt:findNonExhaustedMax}).
Incoming counters through $quorumWrite()$ are also immediately checked for exhaustion on line~\ref{algCt:qWriteExhTest}.

In line with \Argument~\ref{th:noStaleInfo} we require that a full execution of a receive event has taken place, i.e., all lines~\ref{algCt:receive} to \ref{algCt:receiveEnd} have been executed at least once.
We now prove that all lemmas up to \Argument~\ref{th:boundedCreatorSteps} in the labeling scheme's correctness proof remain unaltered if we extend labels to counters and assume that the arbitrary state contains \emph{exhausted} counters. 
The case of adopting an exhausted label which is then canceled, is an additional case in the  body of the proof of \Argument~\ref{th:boundedCreatorSteps} since all the other assumptions remain the same.
Consider some processor $p_i \in P$ taking an infinite number of steps in execution $R$ and assigning the label $\ell_x$ of an exhausted counter $ct_x$ as $maxC_i[i]$. 
This implies that $\ell_x$ was not canceled when line~\ref{ln:adopt} of Algorithm~\ref{alg:WFR} was executed. 
By our key observation, any counter in the local state is checked for exhaustion and canceled immediately. 
By the assumption that at least one iteration of \emph{receive} has taken place, we deduce that  $\ell_x$ was adopted while canceled contradicting the conditions of line~\ref{ln:adopt} of Algorithm~\ref{alg:WFR} and the labeling algorithm proof.
Thus, after a single iteration of \emph{receive} it is impossible to adopt an exhausted label.

Exhausted counters cannot therefore increase adoptions and they pose no requirement for increasing the counter queue size, since we only keep a single instance of this canceled object.
We note that once the canceling operations on exhausted counters take place, the call to $process$ ensures that the canceled copies of these counters are retained in the $storedCnts[]$. 
Any new occurrences of these counter labels in $maxC[]$ are canceled by the corresponding canceled copies in $storedCnts[]$.
From the arguments for label pair diffusion, which are identical for the counter pairs being diffused, any processor holding a counter $ct_x$ as its local maximal counter that is exhausted in the local state of some other active processor $p_j$, eventually stops using $ct_x$ in favor of a counter with a different non-exhausted label. 
Following the results of the labeling algorithm, we deduce that our cancellation policy on the exhausted counters, enables Theorem~\ref{th:oneGreatest4All} to also include the use of counters without any need to locally keep more counters than there are labels.
By this theorem, we deduce that, eventually, any processor taking a practically infinite number of steps in $R$ will have a counter with the globally maximal label $\ell_{max}$.
\end{proof}
}

\noindent For the rest of the proof we refer to line numbers in Algorithm~\ref{alg:cntrIncrement}.

\begin{lemma}
\label{th:quorumKnowsMax}
%
In an execution where Lemma~\ref{th:incrCntrMaxLabel} holds, it also holds that $\forall Q \in \mathbb{Q}, \exists p_j \in Q: maxC_j[j] = ct \land (ct' \prec_{ct} ct)$, where  $ct' \in \{storedCnts_k[k']\cup maxC_k[k']: ct' =_{lbl} ct\}_{p_k,p_{k'} \in P}\setminus \{ct\} $, i.e., $ct'$ is every counter in the system with identical label but less than $ct$ w.r.t. $seqn$ or $wid$ and $ct$ is the last counter increment.
\end{lemma}

\begin{proof}
Observe that upon a quorum write, the new incremented counter $ct$ with the maximal label $lb$ is propagated (lines~\ref{algCt:max_writeACKED} and~\ref{algCt:send}) until a quorum of acknowledgments have been received.
Upon receiving such a counter by $p_i$, a processor $p_j$ will first add $ct$ to its structures via $processCntr()$ and will then acknowledge the write.
If this is the maximal counter that it has received (there could be concurrent ones) then the call to $processCntr()$ will also have the following effects: (i) the counter's $seqn$ and $wid$ will be updated in the $storedCnts_j[]$ structure in the queue of the creator of $lb$, (ii) $maxC_j[j] \gets ct$.

Since $p_i$ waits for responses by a quorum before it returns, it follows that by the intersection property of the quorums, the lemma must hold when $p_i$ reaches status $\sf COMPLETE$. 
\end{proof}

\begin{theorem}
\label{th:countersIncrMonotonic}
Given an execution $R$ of the counter increment algorithm in which at least a majority of processors take a practically infinite number of steps, the algorithm ensures  that counters eventually increment monotonically.
\end{theorem}

\begin{proof}
Consider a configuration $c \in R'$ where $R'$ is a suffix of $R$ in which Lemma~\ref{th:incrCntrMaxLabel} holds, and in which $ct_{max}$ is the counter which is maximal with respect to $\prec_{ct}$.
There are two cases that the counter may be incremented.

In the first case, a legal execution, the counter $ct_{max}$ is only incremented by a call to $incrementCounter()$,
By Lemma~\ref{th:quorumKnowsMax} any call to $incrementCounter()$ will return the last written maximal counter (namely  $ct_{max}$). 
When this is incremented giving $ct'_{max}$ then $ct'_{max}.seqn = ct_{max}.seqn+1$ which is monotonically greater than $ct'_{max}$ and in case of concurrent writes the $wid$ is unique and can break symmetry enforcing the monotonicity.

The second case arises when $ct_{max}$ comes from the arbitrary initial state, is not known by a quorum, and resides in either a local state or is in transit.
When $ct_{max}$ eventually reaches a processor, it becomes the local maximal and it is propagated either via counter maintenance or in the first stage of a counter increment when the maximal counters are requested by the writer.
In this case the use of $ct_{max}$ is also a monotonic increment, and from this point onwards any increment in $R'$ proceeds monotonically from  $ct_{max}$, as described in the previous paragraph.
\end{proof}

\remove{
\begin{proof}
Given a suffix $R'$ of the execution $R$ in which Lemma~\ref{th:incrCntrMaxLabel} holds throughout, we define $ct_{max}$ to be the counter with the globally maximal label that is the greatest in the system with respect to $\langle seqn, wid \rangle$. 
There are two cases:\\
\textbf{Case 1:} \textbf{$\mathbf{ct_{max}}$ is the result of a call to the $\mathbf{incrementCounter()}$ procedure.}
Since this procedure only returned when $quorumWrite(ct_{max})$ took place (line~\ref{algCt:waitQWrite}), therefore a quorum acknowledged the writing of this value. 
By the intersection property of the quorums, this counter was made known to at least one processor of every quorum.
If there are concurrent writings of counters with the same $seqn$ then the one with the greatest $wid$ ensures monotonicity.
Any subsequent call to $incrementCounter()$ and thus to $quorumRead()$ will, again by the intersection property of the quorums, return at least one instance of $ct_{max}$, since there is at least one processor in every quorum that acknowledged this counter.\\
\textbf{Case 2:} \textbf{$\mathbf{ct_{max}}$ comes from the arbitrary state.}
By \Arguments~\ref{th:riskEmpty}, \ref{th:emptyRisk} and \ref{th:boundedDiffusion}, the risk of having a label that remains hidden and that can cause a cancellation eventually becomes zero. 
We have previously used this proof to enforce that all exhausted counters eventually become canceled or are eliminated from the system. 
In the same vein we treat the case where $ct_{max}$ is a remote counter that was not written to a quorum but may be revealed at some point to the system.
Note that such a counter has the global maximal label and can indeed be adopted as a highest counter, since the adoption of this counter does not violate the monotonicity of counters, even if we go from one sequence number to a much greater one.

We also note, that this counter may have a sequence number near exhaustion.
By the arguments of Case 1, the increments after this counter is adopted are monotonic and this will cause exhaustion of the counter requiring a label change in a number of increment steps that is not practically infinite.
We have to mention here that this event does not increase the number of label creations, as the number of such counters that can cause eventual cancellation by exhaustion (after not practically infinite counter increments) is accounted for in the number of labels that can exist in the initial arbitrary state. 
The proof follows from our treatment of exhausted counters of Lemma~\ref{th:incrCntrMaxLabel}.

Recall that our algorithm allows processor $p_i$ to readopt a counter $cnt_i$ with $p_i$'s own label that has a different label creator than the one it used in the previous iteration of the labeling algorithm.
Readoptions are only possible when $cnt_i$ has not been canceled. 
In the case of such a readoption it is implied that $cnt_i$ was dropped in favor of a counter $cnt'$ with higher a $lCreator$ identifier that was eventually canceled.
This implies that $cnt'$ must come from the initial arbitrary configuration.
Hence these ``breaks'' in monotonicity can only occur a bounded number of times in the execution, since counters such as $cnt'$ are bounded in number and are handled by the Labeling algorithm.

Our algorithm stores every incoming counter with a label that was created by $p_i$ in the $storedCnts_i[i]$ queue and by keeping the instance with the greatest $\langle seqn, wid\rangle$, (see lines~\ref{algCt:exhMaxC}, \ref{algCt:qWriteUpdateStructs} and \ref{algCt:qWriteKeepMaxCnt}).
So if $p_i$ is to backstep to $cnt_i$, then the greatest instance that $p_i$ has learned about $cnt_i$ is adopted from $storedCnts_i[i]$.
The only way for a new value of $cnt_i$ to be missed by $p_i$, is for $p_i$ to not hear of a quorum read incrementing $p_i$ before $cnt'$ was adopted. 
Again, as explained above, this is attributed to the bounded number of remnant counters from the arbitrary configuration that are dealt by the Labeling and Counter algorithms as Lemma~\ref{th:incrCntrMaxLabel} describes.

Now, under a legal execution where Lemma~\ref{th:incrCntrMaxLabel} holds, Case 2 can only occur a bounded number of times (since the counters in the initial arbitrary state are bounded in number). Furthermore, Case 1 is eventually true for the rest of the execution.
In any case, the increment of the counter is monotonic with respect to $\prec_{ct}$ in every subsequent call to $incrementCounter()$.
\end{proof}
}

\subsubsection{Algorithm Complexity}
The local memory of a processor implementing the counter increment is not different in order to the labeling algorithm's, since converting to the counter structures only adds an integer (the sequence number).
Hence the \emph{space complexity} of the algorithm is $\bigO(n^3)$ in counters.
The upper bound on \emph{stabilization time} in the number of counter increments that are required to reach a period of practically infinite counter increments can be deduced by Theorem~\ref{th:countersIncrMonotonic}. For some $t$ such that $0\leq t\leq 2^\tau$ in an execution with $\bigO(n\cdot\beta\cdot t)$ counter increments (recall that $\beta=n^3cap+2n^2-2n$), there is a practically infinite period of ($2^\tau$) monotonically increasing counter increments in which the label does not change.


\subsubsection{MWMR Register Emulation}
Having a practically-self-stabilizing counter increment algorithm, it is not hard to implement a \emph{practically-self-stabilizing MWMR register emulation}.
Each counter is associated with a value and the counter increment procedure essentially becomes a write operation: once the maximal counter is found, 
it is increased and associated with the new value to be written, which is then communicated to a majority of processors. The read operation 
is similar: a processor first queries all processors about the maximum counter they are aware of. It collects responses from a majority
and if there is no maximal counter, it returns $\bot$ so the processor needs to attempt to read again (i.e., the system hasn't converged to a
maximal label yet). If a maximal counter exists,
it sends this together with the associated value to all the processors, and once it collects a majority of responses, it returns the counter with the associated value (the second phase is a 
required to preserve the consistency of the register (c.f.~\cite{ABD,RAMBO}).

\remove{
\subsection{Increment Counter Algorithm} 
\label{subsec:CounterI}
In this subsection, we explain how we can enhance the labeling scheme presented in the previous subsection
to obtain a practically (infinite) self-stabilizing counter increment algorithm.


\subsubsection{Preliminaries}
\Paragraph{Counters}
To achieve this task, we now need to work with practically unbounded {\em counters}. 
A counter $cnt$ is a triplet $\langle lbl, seqn, wid\rangle$, where $lbl$ is an epoch label as defined
in the previous subsection, $seqn$ is a $\tau$-bit integer sequence number and $wid$ is the identifier of the processor that last incremented the counter's sequence number, i.e., $wid$ is the counter \emph{writer}. 
Then, given two counters $cnt_i, cnt_j$ we define the relation $cnt_i \prec_{ct} cnt_j$ $\equiv$ $(cnt_i.lbl \prec_{lb} cnt_j.lbl)$ 
$\lor$ $((cnt_i.lbl = cnt_j.lbl) \wedge (cnt_i.seqn < cnt_j.seqn))$ $\lor$ $((cnt_i.lbl = cnt_j.lbl) \wedge (cnt_i.seqn = cnt_j.seqn)
\wedge (cnt_i.wid < cnt_j.wid))$. 
Observe that when the labels of the two counters are incomparable, the counters are also incomparable. 


\bl{The relation $\prec_{ct}$ defines a total order (as required by practically unbounded counters) for counters with the same label, thus, only when processors share a globally maximal label, (i.e., the system runs within a ``stable" epoch). 
Conceptually, if the system stabilizes to use a global maximal label, then the pair of the sequence number and the processor identifier (of this sequence number) can be used as an unbounded counter, as used, for example, in MWMR register implementations~\cite{LSFTC97,RAMBO}. 
As we have shown in Theorem~\ref{th:oneGreatest4All}, 
starting from an arbitrary configuration, we eventually reach a configuration where the active processors have adopted the same maximal label. 
In this case, 
processors can increment a shared counter also when attempting to do so concurrently. 
In what follows we argue that starting from any initial configuration, eventually the counter algorithm supports such increments.
Essentially, the counter increment algorithm enhances the labeling algorithm to take care of the counter increment once such a maximal label exists in the system. 
}

\Paragraph{Enhancing the labeling algorithm to handle counters}
Recall that in the labeling algorithm each processor $p_i$ was maintaining two main structures of pairs of labels: array $max[]$ that stored the local maximal labels of each other processor (based on the message exchange) and $storedLabels[]$, an array of queues of label pairs that each processor maintains in an attempt to clean up obsolete labels created by itself or other processors. 
These structures now need to contain counters instead of just labels and are renamed to $maxC[]$ and $storedCnts[]$ (see line~\ref{algCt:var} of Algorithm~\ref{alg:counter operations}). 
Each label can yield many different counters with different $\langle seqn, wid \rangle$. 
Therefore, in order to avoid increasing the size of these queues (with respect to the number of elements stored), we only keep  the highest sequence number observed for each label (breaking ties with $wid$s). 
We denote a counter pair by $\langle mct, cct\rangle$, with this being the extension of a label pair $\langle ml, cl\rangle$, where $cct$ is a canceling counter for $mct$, such that either $cct.lbl \not \prec_{lb} mct.lbl$ (i.e., the counter is canceled), or $cct.lbl = \bot$. 

Also, note that if there are counters in the system that are corrupt (being in the initial arbitrary configuration), then they can only force a change of label if their sequence number is {\em exhausted} (i.e., $seqn \geq 2^\tau$). 
Exhausted counters are treated by the counter algorithm in a way similar to the canceled labels in the labeling algorithm; an exhausted counter $mct$ in a counter pair $\langle mct, cct\rangle$ is canceled, by setting $mct.lbl = cct.lbl$ (i.e., the counter's own label cancels it) and hence making the counter non-legit (thus it cannot be used as a local maximal counter in  $maxC_i[i]$).
This cannot increase the number of labels that are created due to the initially corrupted ones, as shown in the correctness proof that  follows. 

Another issue worth mentioning, is that the system is allowed to revert back to a previous legit label $x$, in case the current maximal label $y$ becomes canceled. 
Label $x$ might have been used before to create counters, so it is required to store the last sequence number written. If $x$ is legit the system should not propose a new label and instead revert to $x$. 
Otherwise, the queues might grow with no bound. 
We enable reverting to such an $x$, by imposing that each processor only stores a single instance of counters with the same label inside $storedCnts[]$, namely the one with the maximal sequence number $(seqn,wid)$.
This is performed by storing the highest value of a counter that we hear about, as performed in line~\ref{algCt:qWriteUpdateStructs} upon a successful quorum write of a new sequence value, upon a receipt of any write request (line~\ref{algCt:qWriteKeepMaxCnt}) and in every receipt of a counter through $receive()$ by the definition of $process()$. Namely, in every possible appearance of a counter to the local state of a processor.




\Paragraph{Quorums}
We define a \emph{quorum set} $\mathbb{Q}$ based on processors in $P$, as a set of processor subsets of $P$ (named \emph{quorums}), that ensure a non-empty intersection of every pair of quorums.
Namely, for all quorum pairs $Q_{i}, Q_j \in \mathbb{Q}$ such that $Q_i,Q_j \subset P$, it must hold that $Q_{i} \cap Q_j \neq \emptyset$.
This \emph{intersection property} is useful to propagate information among servers and exploiting the common intersection without having to write a value $v$ to all the servers in a system, but only to a single quorum, say $Q$.
If one wants to retrieve this value, then a call to \emph{any} of the quorums (not necessarily $Q$), is expected to return $v$ because there is least one processor in every  quorum that also belongs to $Q$. 
In the counter algorithm we exploit the intersection property to retrieve the currently greatest counter in the system, increment it, and write it back to the system, i.e., to a quorum therein.
Note that majorities form a special case of a quorum system.

\subsubsection{Counter Increment Algorithm}

\label{app:InC}
A pseudocode of the counter increment algorithm appears in Algorithm~\ref{alg:counter operations}.
The algorithm shows periodic counter operations (lines~\ref{algCt:transmit}--\ref{algCt:diffuseEnd}) --extending those of the labeling algorithm-- and the counter increment operations (lines~\ref{algCt:beginIncrement}--\ref{algCt:qWriteAck}).
The idea follows the multiple writer multiple reader atomic register described in~\cite{DBLP:dblp_conf/ftcs/LynchS97}.
%

\Paragraph{\bl{High-level description}}
\trnsfr{
The counter increment algorithm uses the same structures and procedures as the labeling algorithm, but now with counters instead of labels. To increment the counter, a processor $p_i$ first sends a request to all other processors querying the counter they consider as the global maximum and awaits for responses from a majority. Using a similar procedure as the labeling algorithm it (eventually) finds the maximal epoch label and the maximal sequence number it knows for this label. In other words, it collects counters and finds the counter(s) with the largest global label; there can be more than one such counter, in which case it returns the one with the highest sequence number, breaking symmetry with the sequence number processor identifiers. 
Then it checks whether this
 maximal sequence number is {\em exhausted}, that is, the sequence number is equal or larger than $2^{\tau}$ (this could be, for example, due to the arbitrary values in the configuration the system starts in). 
When this is the case, it proceeds to find a new maximal label  until it finds one that is not exhausted and uses the maximal sequence number it knows for this epoch label. 
Then the processor increments the sequence number by one, sets its identifier as the writer of the sequence number and sends the new counter to all processors, and awaits for acknowledgment from a majority (this is, in spirit, similar to the two-phase write operation of MWMR register implementations, focusing on the sequence number rather than on an associated value). 
}

\begin{algorithm*}[p!]
   \caption{Counter Increment; code for $p_i$}

\label{alg:counter operations}
\begin{scriptsize}
{\bf Variables:} 
A label $lbl$ is extended to the triple $\langle lbl, seqn, wid \rangle$ called a \textit{counter} where $seqn$, is the sequence number related to $lbl$, and $wid$ is the identifier of the creator of this $seqn$. 
A counter pair $\langle mct, cct\rangle$ extends a label pair. $cct$ is a canceling counter for $mct$, such that $cct.lbl \not \prec_{lb} mct.lbl$ or $cct.lbl = \bot$. 
We rename structures $max[]$ and $storedLabels[]$ of Alg.~\ref{alg:WFR} to $maxC[]$ and $storedCnts[]$ that hold counter pairs instead of label pairs.\label{algCt:var} \\
{\bf Operators:} $process(\langle \bullet, \bullet\rangle)$ - executes the lines~\ref{ln:exposeStore} to~\ref{ln:useOwnLabel} of Algorithm~\ref{alg:WFR} adjusted for counter structures and handling counters.
For counter pairs with the same $mct$ label, only the instance with the greatest counter w.r.t. $\prec_{ct}$ is retained. In the case where one counter is cancelled we keep the cancelled.
For ease of presentation we assume that a counter with a label created by $p_i$ in line~\ref{ln:useOwnLabel} of Algorithm~\ref{alg:WFR}, is initiated with a $seqn = 0$ and $wid=i$. 
A call of $process()$ (without arguments) essentially ignores lines~\ref{ln:exposeStore} and~\ref{ln:lastSentCancel} of Alg.~\ref{alg:WFR}.\\ 
$enqueue(ctp)$ - places a counter pair $ctp$ at the front of a queue. If $ctp.mct.lbl$ already exists in the queue, it only maintains the instance with the greatest counter w.r.t. $\prec_{ct}$, placing it at the front of the queue. If one counter pair is canceled then the canceled copy is the one retained. \label{algCt:operations}\\ 
{\bf Notation:} Let $y$ and $y'$ be two records that include the field $x$. Denote  $y$ $=_{x}$ $y'$ $\equiv$ $(y.x$ $=$ $y'.x)$.\\
{\bf Macros:} $exhausted(ctp)$ $=$ $(ctp.mct.seqn$ $\geq$ $2^{\tau})$, 	$legit(ctp)=(ctp.cct = \bot \rangle )$\\	
	$retCntrQ(ct) :$ {\bf return} $(storedCnts[ct.lbl.lCreator])$\\
	$legitCounters()$ $=$ $\{maxC[j].mct: \exists p_j \in P \land legit(maxC[j])\} $\\
	$cancelExhausted(ctp) :$ {$ctp.cct \gets ctp.mct$}\\  
	$cancelExhaustedMaxC() :$ 	\lForEach{$p_j\in P,\ c \in maxC[j]: exhausted(c)$}{$cancelExhausted(maxC[j])$}  
	 \label{algCt:cancExh}
	$getMaxSeq():$ {\bf return} $max_{wid} (\{max_{seqn}(\{ctp:ctp.mct \in legitCounters() \land maxC[i] =_{mct.lbl} ctp\})\})$
	

\tcp{Lines~\ref{algCt:transmit} to~\ref{algCt:diffuseEnd} run in the background.}
{\bf upon} $transmitReady(p_j \in P \setminus \lbrace p_i \rbrace)$ {\bf do transmit}($\langle maxC[i], maxC[j]\rangle$)\label{algCt:transmit}\; 
\vspace{.1em}
{\bf upon} $receive(\langle sentMax, lastSent \rangle)$ {\bf from} $p_j$\label{algCt:diffuse} 
	\Begin{
		\lForEach{$p_j\in P, ctp \in storedCnts[j]: legit(ctp)\land exhausted(ctp)$}{$cancelExhausted(ctp)$}\label{algCt:exhstoredCnts}
		\lIf{$(\exists ctp' \in \langle sentMax, lastSent\rangle: exhausted(ctp'))$}{$cancelExhausted(ctp')$}\label{algCt:exhInputCnts}
		$cancelExhaustedMaxC()$;\label{algCt:exhMaxC}
		$process(\langle sentMax, lastSent \rangle)$\;
	}\label{algCt:diffuseEnd}

{\bf procedure} $incrementCounter()$\label{algCt:beginIncrement} \Begin{
	$quorumRead()$\label{algCt:qRead}\;
	\lRepeat{$legit(maxC[i]) \land \neg exhausted(maxC[i])$}{\label{algCt:findNonExhaustedMaxBegin}
		$findMaxCounter()$;\label{algCt:findNonExhaustedMax}
	}\label{algCt:findNonExhaustedMaxEnd}
	{\bf let} $newCntr = \langle maxC[i].mct.lbl, maxC[i].mct.seqn + 1, i\rangle$\label{algCt:cntIncr}\; 
	\If{$quorumWrite(newCntr)$\label{algCt:qWrite}}
		{$maxC[i] \gets newCntr$; $retCntrQ(maxC[i].mct).enqueue(maxC[i])$;\label{algCt:qWriteUpdateStructs}}
}\label{algCt:end}

{\bf procedure} $quorumRead()$ \label{algCt:qReadDef}\Begin{
	\lForEach {$p_j \in P$}{{\bf send} $quorumMaxRead()$}
	\While{waiting for responses from a quorum}{\label{algCt:qDataBookkeep}
		{\bf upon receipt of} $max^j$ {\bf from} $p_j$ {\bf do}
			$maxC[j] \gets max^j$; 
	}
\label{algCt:quorumReadEnd}
}

{\bf upon request for} $quorumMaxRead()$ {\bf from} $p_j$ {\bf do} $\lbrace findMaxCounter()$; {\bf send} $maxC_i[i]$ {\bf to} $p_j; \rbrace$\label{algCt:sendMeYourMax}\\

\vspace{.1em}
{\bf procedure} $findMaxCounter()$ \label{algCt:findMaxCntr}\Begin{		
		$cancelExhaustedMaxC()$;\label{algCt:cancelExhFindMax}  
		$process()$\label{algCt:tidyQ}\; 
		$maxC[i] \gets getMaxSeq()$;\label{algCt:assignNewMaxCntr}
}

{\bf procedure} $quorumWrite(maxC_i[i])$ \label{algCt:qWriteSend}
\Begin{
	\lForEach{$p_j \in P$}{{\bf send} $quorumMaxWrite(maxC_i[i])$} 
	{\bf wait for $ACK$ from a quorum}
\label{algCt:waitQWrite}
}

{\bf upon request for} $quorumMaxWrite(max^j)$ \label{algCt:qWriteReturn}{\bf from} $p_j$ \Begin{ 
	$maxC_i[j] \gets \{ctr \in  \{max^j, maxC_i[j]\}: max_{ct}(\{ctr\}.mct)\}$\;
	\lIf{$max_j=_{lbl.lCreator} i$}{$storedCnts_i[i].enqueue(maxC_i[i])$}\label{algCt:qWriteKeepMaxCnt} 
	\lIf{$exhausted(maxC_i[j])$}{$cancelExhausted(maxC_i[j])$}\label{algCt:qWriteExhTest} 
	{\bf send $ACK$ to} $p_j$\label{algCt:qWriteAck}\;
}

\end{scriptsize}
\end{algorithm*} 

\Paragraph{Detailed description}
Each processor $p_i$ uses the token-based communication to transmit to every other processor $p_j$ its own maximal counter and the one it currently holds for $p_j$ in $maxC_i[j]$ (line~\ref{algCt:transmit}).
Upon receipt of such an update from $p_j$, $p_i$ first performs canceling of any exhausted counters in $storedCnts[]$ (line~\ref{algCt:exhstoredCnts}), in $maxC[]$ (line~\ref{algCt:exhMaxC}) and in the received couple of counter pairs (line~\ref{algCt:exhInputCnts}).
Having catered for exhaustion, it then calls $process(\langle \bullet, \bullet\rangle)$ with the received two counter pairs as arguments.

The algorithm uses the enhanced counter structures $maxC[n]$ and $storedCnts[n]$ which are maintained in the same way as in the labeling algorithm with some additional operations.
We define the operator $enqueue(ctp)$ (line~\ref{algCt:operations}) to add a counter pair $ctp$ to a queue of these structures if a corresponding counter with the same $lbl$ doesn't exist, or to keep only one of the two instances if it exists. 
There are two enqueuing rules: (1) if at least one of the two counters is cancelled we keep a canceled instance, and (2) if both  counters are legitimate we keep the greatest counter with respect to $\langle seqn, wid\rangle$.
The counter is placed at the front of the queue. 

The $process()$ operator calls lines~\ref{ln:exposeStore} to~\ref{ln:useOwnLabel} of Algorithm~\ref{alg:WFR} adjusted for counter structures and handling counters.
Thus, mentions to either labels or label structures in the labeling algorithm now refer to counters and counter structures.
When adding to the counter queues the two enqueuing rules mentioned for $enqueue()$ (above) hold.
For ease of presentation we assume that a counter with a label created by $p_i$ in line~\ref{ln:useOwnLabel} of Algorithm~\ref{alg:WFR}, is initiated with a $seqn = 0$ and $wid=i$. 
A call to $process()$ (without arguments) essentially ignores lines~\ref{ln:exposeStore} and~\ref{ln:lastSentCancel} of Algorithm~\ref{alg:WFR} and executes the rest of the lines performing bookkeeping tasks. 
After this call to $process()$, any exhausted counters from the initial arbitrary configuration, are enqueued as canceled to $storedCnts[]$. 
Therefore, they can never be readopted in case they are proposed with a \hbox{non-exhausted} counter.

The increment counter algorithm executed in lines~\ref{algCt:beginIncrement} to~\ref{algCt:end} follows the  logic of a writer in a MWMR register emulation.
Processor $p_i$ inquires the system for the counter they believe as greatest (line~\ref{algCt:qRead}) by calling procedure $quorumRead()$ (lines \ref{algCt:qReadDef}--\ref{algCt:quorumReadEnd}).
The responses contain the counter ($max_j$) that the responding processor $p_j$ regards as the greatest (line~\ref{algCt:sendMeYourMax}).
$p_i$ aggregates the responses in its $maxC[]$ array.
Note that there can be background counter diffusion as well.
The $quorumRead()$ returns only when all the processors of one of the quorums have sent their responses (excluding responses from diffusion).

When the $quorumRead()$ completes, the $findMaxCounter()$ procedure is called repeatedly until a counter that is not canceled or exhausted is found; all counters that are exhausted must eventually become canceled. 
The function $findMaxCounter()$ cancels any exhausted counters in $maxC[]$ (while it holds the input from the quorum), and then calls $process()$ (line~\ref{algCt:tidyQ}) to perform bookkeeping based on the new information and to provide a valid label. 
When the system is stabilized this label should not change.
Any corrupt exhausted counter that might not have been canceled in the $storedCnts[]$ will, through the new call on $process()$, become canceled, making $p_i$ immune from adopting it if it is proposed by other processors as valid.
The $getMaxSeq()$ macro returns the maximal per $\prec_{ct}$, legit, non-exhausted  counter that it finds locally inside $maxC_i[]$. 
On exiting the loop (lines~\ref{algCt:findNonExhaustedMaxBegin}--\ref{algCt:findNonExhaustedMaxEnd}), the counter in $maxC_i[i]$ is the greatest of the counters returned by the quorum and any other processor (through diffusion), or, in case such a counter was not found, it is a newly created counter. 
As already stated such a counter is initiated to $seqn=0$ and $wid = i$.

Following this, a local copy of $maxC_i[i]$ is incremented, i.e., the sequence number is increased by one, and $wid$ is set to the identifier of $p_i$ (line~\ref{algCt:cntIncr}). 
The processor then attempts a write to the system (line~\ref{algCt:qWrite}) expecting responses from a quorum  to return (line~\ref{algCt:qWriteSend}). 
Every processor $p_j$ receiving $p_i$'s quorum write request, places it in $maxC_j[i]$ if it is greater than the value it already has in $maxC_i[j]$ and cancels it if it is exhausted.
If the write fails for any reason to gather acknowledgments, the value does not get written to the local state as it does not satisfy the \emph{if} condition of line~\ref{algCt:qWrite}.


\subsubsection{Proof of correctness}
We now prove the correctness of the counter algorithm. Initially we prove, that starting from an arbitrary configuration the system eventually reaches to a global maximal label (as given in Theorem~\ref{th:oneGreatest4All}), even in the presence of exhausted counters. 
We then continue to show that given such a global maximal label, the related counters are guaranteed to increment monotonically.

\begin{lemma}
\label{th:incrCntrMaxLabel}
Consider two processors $p_i$ taking a practically infinite number of steps and a setting as described by Theorem~\ref{th:oneGreatest4All}, adjusted for labels rather than counters as described above. 
Algorithm~\ref{alg:counter operations} guarantees that, within a bounded number of steps, every processor $p_i$ holds a counter $ct$ in $maxC_i[i]$ that has $ct.lbl = \ell_{max}$ the globally maximal label and $\ell_{max}$ is not exhausted.
Moreover, $\ell_{max}$ is the greatest of all legitimate counter pair labels in $maxC_i[]$ and $storedCnts_i[]$.
\end{lemma}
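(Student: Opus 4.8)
The plan is to lift Theorem~\ref{th:oneGreatest4All} from the labeling algorithm to the counter algorithm, treating exhausted counters exactly as the labeling algorithm treats canceled labels. First I would observe that the $process()$ operator of Algorithm~\ref{alg:counter operations} is, by definition, lines~\ref{ln:exposeStore}--\ref{ln:useOwnLabel} of Algorithm~\ref{alg:WFR} run on counter pairs, where the $\prec_{lb}$-order on the $mct.lbl$ components plays the role of the label order; hence every structural invariant established in Corollary~\ref{th:noStaleInfo} and \Arguments~\ref{th:localNondecreasing}--\ref{th:boundedDiffusion} carries over verbatim provided we can show that the extra counter-specific machinery (exhaustion cancellation, the $getMaxSeq()$ selection, the quorum read/write) does not create new legitimate labels nor prevent obsolete ones from being discarded. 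The key point to isolate is that an exhausted counter pair, once it passes through $cancelExhausted$ (invoked in lines~\ref{algCt:exhstoredCnts}, \ref{algCt:exhInputCnts}, \ref{algCt:exhMaxC}, \ref{algCt:cancelExhFindMax}, \ref{algCt:qWriteExhTest}), has $cct.lbl = mct.lbl$ and is therefore non-legit; so it behaves identically to a label that has been canceled by an incomparable or greater label, and by the enqueue rule (``keep the canceled copy'') it stays canceled inside $storedCnts[]$, immune to re-adoption when re-proposed with a fresh non-exhausted sequence number.

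Next I would argue the crucial counting claim: the introduction of exhausted counters does not increase the number of labels each processor creates in the else-branch of line~\ref{ln:useOwnLabel}. An exhausted counter from the initial configuration contributes at most one label pair to a processor's $storedCnts[]$ (it is stored according to its creator and deduplicated by label), exactly as an arbitrary corrupt label pair did in \Argument~\ref{th:boundedCreatorSteps}; and because it becomes canceled on first contact with $cancelExhausted$, it can force at most one $nextLabel()$ response from its creator, which is already subsumed in the ``$remoteLabels_i$'' accounting of \Argument~\ref{th:boundedCreatorSteps}. Hence the bound $|L_i| \le n(n^2+m)$ and the queue sizes of Algorithm~\ref{alg:WFR} are still adequate, so no overflow of a useful (legitimate, non-exhausted) counter occurs. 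With this in hand, Theorem~\ref{th:oneGreatest4All} applies to the $mct.lbl$ components and yields, within a bounded number of steps, a legitimate label $\ell_{\max}$ with $maxC_i[i].mct.lbl = \ell_{\max}$ for every $p_i$ taking practically infinitely many steps, and $((maxC_i[j] \preceq_{lb} \ell_{\max}) \land (\forall \ell \in storedCnts_i[j]: legit(\ell)) \Rightarrow (\ell \preceq_{lb} \ell_{\max}))$.

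It then remains to show $\ell_{\max}$ is not exhausted. Suppose for contradiction that in the converged configuration the counter $ct$ with $ct.lbl = \ell_{\max}$ held in $maxC_i[i]$ satisfies $ct.seqn \ge 2^{64}$. The loop in lines~\ref{algCt:findNonExhaustedMaxBegin}--\ref{algCt:findNonExhaustedMaxEnd}, together with the background calls to $cancelExhaustedMaxC()$ and the $exhausted$ tests on received and stored counters, guarantees that any such $ct$ is canceled before it can be selected by $getMaxSeq()$ (which ranges only over $legitCounters()$), so $\ell_{\max}$ being simultaneously the $\preceq_{lb}$-greatest legitimate label and exhausted is impossible: once canceled, $\ell_{\max}$ is not in any processor's $legitCounters()$, contradicting $maxC_i[i].mct.lbl = \ell_{\max}$ with $legit(maxC_i[i])$. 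More carefully, a processor reaching line~\ref{algCt:cntIncr} has exited the loop with $legit(maxC_i[i]) \land \neg exhausted(maxC_i[i])$; if the only labels available were exhausted ones, $getMaxSeq()$ would return no legitimate non-exhausted counter and $process()$ (via the else-branch of line~\ref{ln:useOwnLabel}) would create a fresh label with $seqn=0$, which is non-exhausted and, being $\preceq_{lb}$-greater than everything in its domain's queue, eventually becomes the new global maximum by the diffusion argument of \Argument~\ref{th:boundedDiffusion}; this new label is then $\ell_{\max}$, and it is not exhausted. Thus the converged $\ell_{\max}$ is necessarily non-exhausted, and by \Argument~\ref{th:boundedDiffusion} it dominates every legitimate counter-pair label in $maxC_i[]$ and $storedCnts_i[]$.

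The main obstacle I anticipate is making the last step fully rigorous without circularity: the diffusion/convergence argument of Theorem~\ref{th:oneGreatest4All} presupposes a bounded number of label creations, yet exhausted counters trigger fresh creations, so one must carefully re-run the pigeonhole argument of Theorem~\ref{th:oneGreatest4All} with ``exhausted'' folded into the notion of a non-legitimate pair, verifying that the total number of creations remains bounded even when every corrupt exhausted counter forces exactly one new label — this is the bookkeeping that the paragraph ``Enhancing the labeling algorithm to handle counters'' alludes to with ``some extra care is needed,'' and it is where the proof must spend its real effort.
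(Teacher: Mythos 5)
Your proposal is correct and follows essentially the same route as the paper's proof: treat exhaustion-cancellation as the counter analogue of label cancellation, check that this adds no new creations or adoptions beyond the accounting of Lemma~\ref{th:boundedCreatorSteps} (so queue sizes still suffice), and then invoke Theorem~\ref{th:oneGreatest4All} on the label components, with the observation that exhausted counters are canceled on first contact and hence can never be adopted or remain the legitimate maximum. Your explicit contradiction argument that the converged $\ell_{\max}$ is non-exhausted is slightly more detailed than the paper's treatment, but it is the same underlying mechanism (immediate cancellation plus the diffusion argument), so no substantive difference.
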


\begin{proof}
The proof follows the flow of the labeling algorithm proof, and provides minor amendments wherever the use of counters (instead of labels) challenges the correctness of the arguments.
We show how the counter operations ensure that we reach to the globally maximal label $\ell_{max}$ becoming adopted by all the processors that take a practically infinite number of steps in execution $R$.
We only require that $lbl = \ell_{max}$ while $seqn$ and $wid$ may differ.

\emph{Key observation.} Upon a receive event (lines~\ref{algCt:diffuse}--\ref{algCt:diffuseEnd}) of the increment counter algorithm,   lines~\ref{algCt:exhstoredCnts}, \ref{algCt:exhInputCnts} and~\ref{algCt:exhMaxC} cancel any exhausted counter pairs appearing as legitimate in $storedCnts[]$, $maxC[]$ and among the two received counter pairs by setting their $mct$ as their $cct$. 
Increment counter procedures also have incoming counters. 
We note that any exhausted non-canceled counters stored in $maxC_i[]$ by a $quorumRead()$, are canceled by the immediate call of $cancelExhaustedMaxC()$ in line~\ref{algCt:cancelExhFindMax} (through the call on $findMaxCntr()$ of line~\ref{algCt:findNonExhaustedMax}).
Incoming counters through $quorumWrite()$ are also immediately checked for exhaustion on line~\ref{algCt:qWriteExhTest}.

In line with \Argument~\ref{th:noStaleInfo} we require that a full execution of a receive event has taken place, i.e., all lines~\ref{algCt:diffuse} to \ref{algCt:diffuseEnd} have been executed at least once.
We now prove that all lemmas up to \Argument~\ref{th:boundedCreatorSteps} in the labeling scheme's correctness proof remain unaltered if we extend labels to counters and assume that the arbitrary state contains \emph{exhausted} counters. 
The case of adopting an exhausted label which is then canceled, is an additional case in the  body of the proof of \Argument~\ref{th:boundedCreatorSteps} since all the other assumptions remain the same.
Consider some processor $p_i \in P$ taking an infinite number of steps in execution $R$ and assigning the label $\ell_x$ of an exhausted counter $ct_x$ as $maxC_i[i]$. 
This implies that $\ell_x$ was not canceled when line~\ref{ln:adopt} of Algorithm~\ref{alg:WFR} was executed. 
By our key observation, any counter in the local state is checked for exhaustion and canceled immediately. 
By the assumption that at least one iteration of \emph{receive} has taken place, we deduce that  $\ell_x$ was adopted while canceled contradicting the conditions of line~\ref{ln:adopt} of Algorithm~\ref{alg:WFR} and the labeling algorithm proof.
Thus, after a single iteration of \emph{receive} it is impossible to adopt an exhausted label.

Exhausted counters cannot therefore increase adoptions and they pose no requirement for increasing the counter queue size, since we only keep a single instance of this canceled object.
We note that once the canceling operations on exhausted counters take place, the call to $process$ ensures that the canceled copies of these counters are retained in the $storedCnts[]$. 
Any new occurrences of these counter labels in $maxC[]$ are canceled by the corresponding canceled copies in $storedCnts[]$.
From the arguments for label pair diffusion, which are identical for the counter pairs being diffused, any processor holding a counter $ct_x$ as its local maximal counter that is exhausted in the local state of some other active processor $p_j$, eventually stops using $ct_x$ in favor of a counter with a different non-exhausted label. 
Following the results of the labeling algorithm, we deduce that our cancellation policy on the exhausted counters, enables Theorem~\ref{th:oneGreatest4All} to also include the use of counters without any need to locally keep more counters than there are labels.
By this theorem, we deduce that, eventually, any processor taking a practically infinite number of steps in $R$ will have a counter with the globally maximal label $\ell_{max}$.
\end{proof}

\begin{theorem}
\label{th:countersIncrMonotonic}
Given an execution $R$ of the counter increment algorithm in which at least a majority of processors take a practically infinite number of steps, the algorithm ensures  that counters eventually increment monotonically.
\end{theorem}

\begin{proof}
Given a suffix $R'$ of the execution $R$ in which Lemma~\ref{th:incrCntrMaxLabel} holds throughout, we define $ct_{max}$ to be the counter with the globally maximal label that is the greatest in the system with respect to $\langle seqn, wid \rangle$. 
There are two cases:\\
\textbf{Case 1:} \textbf{$\mathbf{ct_{max}}$ is the result of a call to the $\mathbf{incrementCounter()}$ procedure.}
Since this procedure only returned when $quorumWrite(ct_{max})$ took place (line~\ref{algCt:waitQWrite}), therefore a quorum acknowledged the writing of this value. 
By the intersection property of the quorums, this counter was made known to at least one processor of every quorum.
If there are concurrent writings of counters with the same $seqn$ then the one with the greatest $wid$ ensures monotonicity.
Any subsequent call to $incrementCounter()$ and thus to $quorumRead()$ will, again by the intersection property of the quorums, return at least one instance of $ct_{max}$, since there is at least one processor in every quorum that acknowledged this counter.\\
\textbf{Case 2:} \textbf{$\mathbf{ct_{max}}$ comes from the arbitrary state.}
By \Arguments~\ref{th:riskEmpty}, \ref{th:emptyRisk} and \ref{th:boundedDiffusion}, the risk of having a label that remains hidden and that can cause a cancellation eventually becomes zero. 
We have previously used this proof to enforce that all exhausted counters eventually become canceled or are eliminated from the system. 
In the same vein we treat the case where $ct_{max}$ is a remote counter that was not written to a quorum but may be revealed at some point to the system.
Note that such a counter has the global maximal label and can indeed be adopted as a highest counter, since the adoption of this counter does not violate the monotonicity of counters, even if we go from one sequence number to a much greater one.

We also note, that this counter may have a sequence number near exhaustion.
By the arguments of Case 1, the increments after this counter is adopted are monotonic and this will cause exhaustion of the counter requiring a label change in a number of increment steps that is not practically infinite.
We have to mention here that this event does not increase the number of label creations, as the number of such counters that can cause eventual cancellation by exhaustion (after not practically infinite counter increments) is accounted for in the number of labels that can exist in the initial arbitrary state. 
The proof follows from our treatment of exhausted counters of Lemma~\ref{th:incrCntrMaxLabel}.

Recall that our algorithm allows processor $p_i$ to readopt a counter $cnt_i$ with $p_i$'s own label that has a different label creator than the one it used in the previous iteration of the labeling algorithm.
Readoptions are only possible when $cnt_i$ has not been canceled. 
In the case of such a readoption it is implied that $cnt_i$ was dropped in favor of a counter $cnt'$ with higher a $lCreator$ identifier that was eventually canceled.
This implies that $cnt'$ must come from the initial arbitrary configuration.
Hence these ``breaks'' in monotonicity can only occur a bounded number of times in the execution, since counters such as $cnt'$ are bounded in number and are handled by the Labeling algorithm.

Our algorithm stores every incoming counter with a label that was created by $p_i$ in the $storedCnts_i[i]$ queue and by keeping the instance with the greatest $\langle seqn, wid\rangle$, (see lines~\ref{algCt:exhMaxC}, \ref{algCt:qWriteUpdateStructs} and \ref{algCt:qWriteKeepMaxCnt}).
So if $p_i$ is to backstep to $cnt_i$, then the greatest instance that $p_i$ has learned about $cnt_i$ is adopted from $storedCnts_i[i]$.
The only way for a new value of $cnt_i$ to be missed by $p_i$, is for $p_i$ to not hear of a quorum read incrementing $p_i$ before $cnt'$ was adopted. 
Again, as explained above, this is attributed to the bounded number of remnant counters from the arbitrary configuration that are dealt by the Labeling and Counter algorithms as Lemma~\ref{th:incrCntrMaxLabel} describes.

Now, under a legal execution where Lemma~\ref{th:incrCntrMaxLabel} holds, Case 2 can only occur a bounded number of times (since the counters in the initial arbitrary state are bounded in number). Furthermore, Case 1 is eventually true for the rest of the execution.
In any case, the increment of the counter is monotonic with respect to $\prec_{ct}$ in every subsequent call to $incrementCounter()$.
\end{proof}

\subsubsection{\bl{Algorithm Complexity}}
\bl{The local memory of a processor implementing the counter increment is not different in order to the labeling algorithm's, since converting to the counter structures only adds an integer (the sequence number).
Hence the \emph{space complexity} of the algorithm is $\bigO(n^3)$ in counters.
The upper bound on \emph{stabilization time} in the number of counter increments that are required to reach a period of practically infinite counter increments can be deduced by Theorem~\ref{th:countersIncrMonotonic}. For some $t$ such that $0\leq t\leq 2^\tau$ in an execution with $\bigO(n\cdot\beta\cdot t)$ counter increments (recall that $\beta=n^3cap+2n^2-2n$), there is a practically infinite period of ($2^\tau$) monotonically increasing counter increments in which the label does not change.
}


\subsubsection{MWMR Register Emulation}
Having a \bl{practically-}self-stabilizing counter increment algorithm, it is not hard to implement a \emph{\bl{practically-}self-stabilizing MWMR register emulation}.
Each counter is associated with a value and the counter increment procedure essentially becomes a write operation: once the maximal counter is found, 
it is increased and associated with the new value to be written, which is then communicated to a majority of processors. The read operation 
is similar: a processor first queries all processors about the maximum counter they are aware of. It collects responses from a majority
and if there is no maximal counter, it returns $\bot$ so the processor needs to attempt to read again (i.e., the system hasn't converged to a
maximal label yet). If a maximal counter exists,
it sends this together with the associated value to all the processors, and once it collects a majority of responses, it returns the counter with the associated value (the second phase is a standard requirement for preserving the consistency of the register (c.f.~\cite{ABD,RAMBO}).
}

\section{Virtually Synchronous Stabilizing Replicated State Machine} 
\label{sec:VS} 
%
\bl{Group communication systems (GCSs) that guarantee the virtual synchrony property, essentially suggest that processes that remain together in consecutive groups (called \emph{views}) will deliver the same messages in the desired order~\cite{DBLP:conf/replication/Birman10}. 
This is particularly suited to maintain a replicated state machine service, where replicas need to remain consistent, by applying the same changes suggested by the environment's requests.}
\trnsfr{
A key advantage of multicast services (with virtual synchrony) is the ability to reuse the same view during many multicast rounds, and which allows every automaton step to require just a single multicast round, as compared to other more expensive solutions. 
}

\bl{
GCSs provide the VS property by implementing two main services: a reliable multicast service, and a membership service to provide the membership set of the view, whilst they also assume access to unbounded counters to use as unique view identifiers.
We combine existing self-stabilizing versions of the two services (with adaptations where needed), and we use the counter from the previous section to build the first (to our knowledge) practically self-stabilizing virtually synchronous state machine replication.
While the ideas appear simple, combining the services is not always intuitive, so we first proceed to a high-level description of the algorithm and the services, and then follow the algorithm with a more technical description and the correctness proof.
}

\subsection{Preliminaries}
\bl{
The algorithm progresses in state replication by performing multicast rounds, when a \emph{view}, a tuple composed of a members $set$ taken from $P$, and of a unique identifier ($ID$) that is a counter as defined in the previous section, is installed. 
This view must include a \emph{primary component} (defined formally in Definition~\ref{def:Primary}), namely it must contain a majority of the processors in $P$, i.e., $n/2+1$.
In our version, a processor, the \emph{coordinator}, is responsible: (1) to progress the multicast service which we detail later, (2) to change the view when its failure detector suggests changes to the composition of the view membership.
Therefore, the output of the coordinator's failure detector defines the set of view members; this helps to maintain a consistent membership among the group members, despite inaccuracies between the various failure detectors. 
}

\bl{
On the other hand, the counter increment algorithm that runs in the background allows the coordinator to draw a counter for use as a view identifier
and in this case, the counter's writer identifier ($wid$) is that of the view's coordinator.
This defines a simple interface with the counter algorithm, which provides an identical output. 
Pairing the coordinator's member set with a counter as view identity we obtain a $view$.
Of course as we will describe later, reaching to a unique coordinator may require issuing several such view proposals, of which one will prevail.
We first suggest a possible implementation of a failure detector (to provide membership) and of a reliable multicast service over the self-stabilizing FIFO data link given in Section~\ref{s:sys}, and then proceed to an algorithm overview.
}

\begin{definition}
\label{def:Primary}
We say that the output of the (local) failure detectors in execution $R$ includes a {\em primary partition} when it includes a supporting majority of processors $P_{\majority} : P_{\majority} \subseteq P$, that (mutually) never suspect at least one processor, i.e., $\exists p_\ell \in P$ for which $|P_{\majority}|$ $>$ $\lfloor n/2\rfloor$ and $(p_i \in (P_{\majority}$ $\cap$ $FD_\ell))$ $\Longleftrightarrow$ $(p_\ell \in (P_{\majority}$ $\cap$ $FD_i))$ in every $c \in R$, where $FD_x$ returns the set of processors that according to $p_x$'s failure detector are active. 
\end{definition}

\Paragraph{\bl{Failure detector}}
\bl{We employ the self-stabilizing failure detector of~\cite{DBLP:conf/netys/BlanchardDBD14} which is implemented as follows. 
Every processor $p$ uses the token-based mechanism to implement a heartbeat (see Section~\ref{s:sys}) with every other processor, and maintain a heartbeat integer counter for every other processor $q$ in the system. Whenever processor $p$ receives the token from processor $q$ over their data link, processor $p$ resets the counter's value to zero and increments all the integer counters associated with the other processors by one, up to a predefined threshold value $W$.
Once the heartbeat counter value of a processor $q$ reaches $W\!,$~the failure detector of processor $p$ considers $q$ as inactive. In other words, the failure detector at processor $p$ considers processor $q$ to be active, if and only if the heartbeat associated with $q$ is strictly less than $W.$ 
}

\bl{
As an example, consider a processor $p$ which holds an array of heartbeat counters for processors $p_i, p_j, p_k$ such that their corresponding values are $\langle 2, 5, W-1\rangle$.
If $p_j$ sends its heartbeat, then $p$'s array will be changed to $\langle 3, 0, W\rangle$.
In this case, $p_k$ will be suspected as crashed, and the failure detector reading will return the set $p_i, p_j$ as the set of processors considered correct by $p$.
}

\bl{
Note that our virtual synchrony algorithm, employs the same implementation but has weaker requirements than~\cite{DBLP:conf/netys/BlanchardDBD14} that solve consensus, and thus they resort to a failure detector at least as strong as $\Omega$~\cite{DBLP:journals/jacm/ChandraHT96}. 
Specifically, in Definition~\ref{def:Primary} we pose the assumption that \emph{just a majority of the processors} do not suspect at least one processor of $P$ for sufficiently long time, in order to be able to obtain a long-lived coordinator. 
This is different, as we said before, to an eventually perfect failure detector that ensures that after a certain time, no active processor suspects any other active processor.} 

\bl{
Our requirements, on the other hand, are stronger than the weakest failure detector required to implement atomic registers (when more than a majority of failures are assumed), namely the $\Sigma$ failure detector~\cite{DBLP:conf/podc/Delporte-GalletFGHKT04}, since virtual synchrony is a more difficult task.
In particular, whilst the $\Sigma$ failure detector eventually outputs a set of only correct processors to correct processors, we require that this set in at least half of the processors, will contain at least one common processor.
In this perspective our failure detector seems to implement a \emph{self-stabilizing} version of a slightly stronger failure detector than $\Sigma$. 
It would certainly be of interest for someone to study what is the weakest failure detector required to achieve practically-self-stabilizing virtually synchronous state replication, and whether this coincides with our suggestion. 
}


\Paragraph{\trnsfr{Reliable multicast implementation}} 
\trnsfr{ 
The coordinator of the view controls the 
exchange of messages (by multicasting) within  the view. 
The coordinator requests, collects and combines input from the group members, and then it multicasts the updated information. 
Specifically, when the coordinator decides to collect inputs, it waits for the token (see Section~\ref{s:sys}) to arrive from each group participant. 
Whenever a token arrives from a participant, the coordinator uses the token to send the request for input to that participant, and waits the token to return with some input (possibly $\bot$, when the participant does not have a new input). 
Once the coordinator receives an input from a certain participant with respect to this multicast invocation, the corresponding token will not carry any new requests to receive input from the same participant; of course, the tokens continue to move back and forth to sustain the heartbeat-based failure detector.
When all inputs are received, the processor combines them and again uses the token to carry the updated information. Once this is done, the coordinator can proceed to the next input collection. 
}

\bl{We provide the pseudocode for the practically-stabilizing replicate state machine implementation as Algorithm~\ref{alg:multVirSyn}, a high level description, and proceed to a line-by-line description and correctness analysis.}

\begin{algorithm*}[p!]

   \caption{A \bl{practically-}self-stabilizing automaton replication using virtual synchrony, 
   code for processor $p_i$}
\label{alg:multVirSyn}
\begin{scriptsize}

\noindent {\bf Constants:} \nllabel{lnVS:const}
$PCE$ (periodic consistency enforcement) number of rounds between global state check\;

\noindent {\bf Interfaces:}
$fetch()$ next multicast message, $apply(state, msg)$ applies the step $msg$ to $state$ (while producing side effects), $synchState(replica)$ returns a replica consolidated state, $synchMsgs(replica)$ returns a consolidated array of last delivered messages, $failureDetector()$ returns a vector of processor pairs $\langle pid, crdID\rangle$, $inc()$ returns a counter from the increment counter algorithm\; \nllabel{ln:inter}

\noindent {\bf Variables:} \nllabel{ln:var}
$rep[n]=\langle view$ $=$ $\langle ID$, $set \rangle$, $status$ $\in$ $\{{\sf Multicast}$, ${\sf Propose}$, ${\sf Install}\}$, $(multicast$ $round$ $number)$ $rnd$, $(replica)$ $state$, $(last$ $delivered$ $messages)$ $msg[n]$ $(to$ $the$ $state$ $machine)$, $(last$ $fetched)$ $input$ $(to$ $the$ $state$ $machine)$, $propV$ $=$ $\langle ID$, $set \rangle$, $(no$ $coordinator$ $alive)$ $noCrd$, $(recently$ $live$ $and$ $connected$ $component)$ $FD \rangle$ : an array of state replica of the state machine, where $rep[i]$ refers to the one that processor $p_i$ maintains.
A local variable $FDin$ stores the $failureDetector()$ output. $FD$ is an alias for $\{FDin.pid \}$, i.e. the set of processors that the failure detector considers as active. Let $crd(j)=\{FDin.crdID: FDin.pid=j\}$, i.e. the id of $p_j$'s local coordinator, or $\bot$ if none.

\noindent {\bf Do forever} \Begin{

{\bf let} $FDin = failureDetector()$\; \nllabel{ln:FD}

{\bf let} $\seemCrd$ $=$ $\{ p_\ell$ $=$ $rep[\ell].propV.ID.wid$ $\in$ $FD$ $:$ $(|rep[\ell].propV.set|$  $>$ $\lfloor n/2\rfloor)$ $\land$ $(|rep[\ell].FD|$ $>$ $\lfloor n/2\rfloor)$ $\land$ $( p_\ell$ $\in$ $rep[\ell].propV.set)$ $\land$ $(p_k$ $\in$ $rep[\ell].propV.set$ $\leftrightarrow$ $p_\ell$ $\in$ $rep[k].FD )$ $\land$ $((rep[\ell].status$ $=$ ${\sf Multicast})$ $\rightarrow$ $(rep[\ell].(view$ $=$ $propV) \land crd(\ell)=\ell)) \land ((rep[\ell].status = {\sf Install}) \rightarrow crd(\ell)=\ell)\}$\; \nllabel{ln:seemCrd}

{\bf let} $\valCrd$ $=$  $\{ p_\ell$ $\in$ $seemC‏rd$ $:$ $(\forall p_k$ $\in$ $seemC‏rd$ $:$ $rep[k].propV.ID$ $\preceq_{ct}$ $rep[\ell].propV.ID) \}$\; \nllabel{ln:valCrd}

$noCrd$ $\gets$ $(|\valCrd|$ $\neq$ $1)$; $crdID \gets \valCrd$\; \nllabel{ln:noCrd}

\lIf{$(|FD|>\lfloor n/2\rfloor)$ 
$\land$ $(((|\valCrd|$ $\neq$ $1)$ $\land$ $(|\{ p_k \in FD $ $:$ $p_i$ $\in$ $rep[k].FD$ $\land$ $rep[k].noCrd \}|$ $>$ $\lfloor n/2\rfloor))$ 
$\lor$ $((\valCrd$ $=$ $\{p_i\})$ $\land$ $(FD \neq propV.set) \land (|\{p_k\in FD:$ $rep[k].propV = propV\}|$ $> \lfloor n/2 \rfloor)))$}
{$(status, propV)$ $\gets$ $({\sf Propose}$, $\langle inc()$, $FD\rangle)$} \nllabel{ln:incrCntr}

\ElseIf{$(\valCrd$ $=$ $\{p_i\})$ $\land$ $(\forall$ $p_j$ $\in$ $view.set$ $:$ $rep[j].(view$, $status$, $rnd)$ $=$ $(view$, $status$, $rnd))$ $\lor$ $((status$ $\neq$ ${\sf Multicast})$ $\land$ $(\forall$ $p_j$ $\in$ $propV.set$ $:$ $rep[j].(propV,status)=(propV,{\sf Propose}))$ \nllabel{ln:switch}}{
\If{$status={\sf Multicast}$\nllabel{ln:mulC}}{
$apply(state, msg)$;
$input\gets fetch()$;\\ \nllabel{ln:fetchCrd}
\lForEach{$p_j \in P$}{{\bf if} $p_j \in view.set$~{\bf then} $msg[j]\gets rep[j].input$ {\bf else} $msg[j]\gets \bot$\nllabel{ln:collect}}
$rnd \gets rnd+1$\;\nllabel{ln:rndIncr}
}

\lElseIf{$status={\sf Propose}$\nllabel{ln:proC}}{$(state,status, msg)\gets (synchState(rep), {\sf Install}, synchMsgs(rep))$}
\lElseIf{$status={\sf Install}$\nllabel{ln:insC}}{$(view, status, rnd)\gets (propV,{\sf Multicast}$, 0)}}

\ElseIf{$\valCrd=\{p_\ell\} \land \ell\neq i \land ((rep[\ell].rnd = 0 \lor rnd < rep[\ell].rnd \lor rep[\ell].(view\neq propV))$\nllabel{ln:repF}}{

\If{$rep[\ell].status={\sf Multicast}$\nllabel{ln:optCond}}{
\lIf{$rep[\ell].state = \bot$ }{$rep[\ell].state \gets state$ \nllabel{ln:optR} $/*$ PCE optimization, line~\ref{ln:opt} $*/$} 
$rep[i] \gets rep[\ell]$; \nllabel{ln:replicate}
$apply(state,rep[\ell].msg)$; $/*$ for the sake of side-effects $*/$ \DontPrintSemicolon\\
$input \gets fetch()$; \nllabel{ln:fetchFol}
}
\lElseIf{$rep[\ell].status = {\sf Install} $\nllabel{ln:adoptRep}}{$rep[i] \gets rep[\ell]$}
\lElseIf{$rep[\ell].status = {\sf Propose}$}{$(status, propV) \gets rep[\ell].(status, propV)$}
\nllabel{ln:adoptProp}
}

{\bf let} $m = rep[i]$ $/*$ sending messages: all to coordinator and coordinator to all $*/$ \;\nllabel{ln:mPrep} 

\lIf{$status={\sf Multicast} \land rnd (\bmod~PCE) \neq 0$ }{$m.state$ $\gets$ $\bot$ $/*$ PCE optimization, line~\ref{ln:optR} $*/$ } \nllabel{ln:opt}

{\bf let} $sendSet$ $=$ $(seemC‏rd$ $\cup$ $\{ p_k$ $\in$ $propV.set$ $:$ $\valCrd$ $=$ $\{ p_i \} \}$ $\cup$ $\{ p_k$ $\in$ $FD$ $:$ $noCrd$ $\lor$ $(status$ $=$ ${\sf Propose}) \})$ \nllabel{ln:sendSet}

\lForEach{$p_j$ $\in$ $sendSet$}{$send(m)$} \nllabel{ln:send}
} 

\noindent {\bf Upon message arrival} $m$ {\bf from} $p_j$ {\bf do} $rep[j]\gets m$\; \nllabel{ln:receive}

\end{scriptsize}
\end{algorithm*}

\subsection{Virtual Synchrony Algorithm}
\label{subsec:VSalgDetails}
%

\subsubsection{\trnsfr{High-level algorithm description}}

\trnsfr{
Each participant maintains a replica of the state machine and the last processed (composite) message. Note that we bound the memory used to store the history of the replicated state machine 
by deciding to have the (encapsulated influence of the history represented by the) current state of the replicated state machine. In addition, each participant maintains the
last delivered (composite) message to ensure common reliable multicast, as the coordinator may stop being active prior to ensuring that all members received a copy of the last multicast message. 
Whenever a new coordinator is installed, the coordinator inquires all members (forming a majority) for the most updated state and delivered message.
Since at least one of the members, say $p_i$, participated in the view in which the last completed state machine transition took place, $p_i$'s information will be recognized as associated with
the largest counter, adopted by the coordinator that will in turn assign the most updated state and available delivered message to all the current group members, in essence satisfying the virtual synchrony property.}

\trnsfr{
After this, the coordinator, as part of the multicast procedure, will collect inputs received from the environment before ensuring that all group members apply these inputs to the replica state machine. 
Note that the received multicast message consists of input (possibly $\bot$) from each of the processors, thus, the processors need to apply one input at a time, 
the processors may apply them in an agreed upon sequential order, say from the input of the first processor to the last. Alternatively, the coordinator may request one input at a time in a round-robin fashion and multicast it. Finally, to ensure that the system stabilizes when started in an arbitrary
configuration, every so often, the coordinator assigns the state of its replica to the other members. 
}

\trnsfr{
If the system reaches a configuration with no coordinator, e.g., due to an arbitrary configuration that the system starts in, or due to the coordinator's crash. 
Each participant detecting the absence if a coordinator, seeks for potential candidates based on the exchanged information.
A processor $p$ regards a processor $q$ as a candidate, if $q$ is active according to $p$'s failure detector, and there is a majority of processors that also think so (all these are based on $p$'s knowledge, which due to asynchrony might not be up to date). 
When there is more than one such candidate, processor $p$ checks whether there is a candidate that has proposed a view with a highest identifier (i.e., counter) among the candidates. 
If there is one, then $p$ considers this to be the coordinator and waits to hear from it (or learn that it is not active). 
}

\trnsfr{
If there is none such, and if based on its local knowledge there is a majority of processors that also do not have a coordinator, then processor $p$ acquires a counter from the counter increment algorithm and proposes a new view, with view ID, the counter, and group membership,  the set of processors that appear active according to its failure detector. 
As we show, if $p$ receives an ``accept" message from {\em all} the processors in the view, then it proceeds to install the view, unless another processor who has obtained a higher counter does so. 
In a transition from one view to the next, there can be several processors attempting to become the coordinator (namely, those who according to their knowledge have a supporting majority). 
Still, by exploiting the intersection property of the supporting majorities we prove that each of these processors will propose a view at most once, and out of these, one view will be installed (i.e., we do not have never-ending attempts for new views to be installed). 
}

\bl{
As an aside, we note that GCSs that provide VS often leverage on the system's ability to preserve (when possible) the coordinator during view transitions rather than venturing to install a new one, a certainly more expensive procedure.
Our solution naturally follows this approach through our assumption of a supportive majority (Definition~\ref{def:Primary}), where coordinators enjoy the support of a majority of processors by never being supported throughout a very long period. 
During such a period, our algorithm persists on using the same coordinator, even when views change.
}

\subsubsection{\trnsfr{Detailed algorithm description}} The existence of coordinator $p_\ell$ is in the heart of Algorithm~\ref{alg:multVirSyn}. 
Processors that belong to and accept $p_\ell$'s view proposal are called the \emph{followers} of $p_\ell$.
The algorithm determines the availability of a coordinator and acts towards the election of a new one when no valid such exists (lines~\ref{ln:FD} to~\ref{ln:incrCntr}). The pseudocode details the coordinator-side (lines~\ref{ln:switch} to~\ref{ln:insC}) and the follower-side (lines~\ref{ln:repF} to~\ref{ln:adoptRep}) actions. At the end of each iteration the algorithm, defines how $p_\ell$ and its followers exchange messages (lines~\ref{ln:opt} to~\ref{ln:receive}).  



\Paragraph{The processor state and interfaces} 
The state of each processor includes its current $view$, and $status = \{{\sf Propose},{\sf Install},{\sf Multicast}\}$, which refers to usual message multicast operation when in ${\sf Multicast}$, or view establishment rounds in which the coordinator can ${\sf Propose}$ a new view and proceed to ${\sf Install}$ it once all preparations are done (line~\ref{ln:var}). During multicast rounds, $rnd$ denotes the round number, $state$ stores the replica, $msg[n]$ is an array that includes the last delivered messages to the state machine, which is the $input$ fetched by each group member and then aggregated by the coordinator during the previous multicast round. During multicast rounds, it holds that $propV=view$. However, whenever $propV\neq view$ we consider $propV$ as the newly proposed view and $view$ as the last installed one. Each processor also uses $noCrd$ and $FD$ to indicate whether it is aware of the absence of a recently active and connected valid coordinator, and respectively, of the set of processor present in the connected component, as indicated by its local failure detector. The processors exchange their state via message passing and store the arriving messages in the replica's array, $rep[n]$ (line~\ref{ln:receive}), where $rep[i].(view$, $\ldots$, $noCrd)$ 
is an alias to the aforementioned variables and $rep[j]$ refers to the last arriving message from processor $p_j$ containing $p_j$'s $rep[j]$. Our presentation also uses subscript $_k$ to refer to the content of a variable at processor $p_k$, e.g., $rep_k[j].view$, when referring to the last installed view that processor $p_k$ last received from $p_j$. 

Algorithm~\ref{alg:multVirSyn} assumes access to the application's message queue via $fetch()$, which returns the next multicast message, or $\bot$ when no such message is available (line~\ref{ln:inter}). 
It also assumes the availability of the automaton state transition function, $apply(state, msg)$, which applies the aggregated input array, $msg$, to the replica's $state$ and produces the local side effects. 
The algorithm also collects the followers' replica states and uses $synchState(replica)$ to return the new state. 
The function $failureDetector()$ provides access to $p_i$'s failure detector, and the function $inc()$ (counter increment) fetches a new and unique (view) identifier, $ID$, that can be totally ordered by $\preceq_{ct}$ and $ID.wid$ is the identity of the processor that incremented the counter, resulting to the counter value $ID$ (hence view $ID$s are counters as defined in Section~\ref{subsec:CounterI}). 
Note that when two processors attempt to concurrently increment the counter, due to symmetry breaking, one of the two counters is the largest. 
Each processor will continue to propose a new view based on the counter written, but then (as described below) the one will the highest counter will succeed (line~\ref{ln:valCrd}). 

\Paragraph{Determining coordinator availability} 
%
\remove{
Algorithm~\ref{alg:multVirSyn} takes an agile approach to message multicasting with atomic delivery guarantees. \sloppy{Namely, a new view is installed whenever the coordinator sees a change to its local failure detector, $failureDetector()$, which $p_i$ stores in $FD_i$ (line~\ref{ln:FD}).} Processor $p_i$ can see the set of processors, $seemCrd_i$, that each ``seems'' to be the view coordinator, because $p_i$ stored a message from $p_\ell \in FD_i$ for which $p_\ell$ $=$ $rep[\ell].propV.ID.wid$. Note that $p_i$ cannot consider $p_\ell$ as a (seemly) coordinator when $p_\ell$'s proposal view does not include a majority, or if $p_\ell$ is not a member in the view it claims to coordinate. In the case of ${\sf Multicast}$ rounds, their view fields must match their view proposal fields (line~\ref{ln:seemCrd}). Also, using the failure detector heartbeat
exchange, processors communicate the identifier of the processor they consider to be their coordinator, or $\bot$ if none. As shown in the correctness proof, this helps to detect initially corrupted states where a processor $p_i$ might consider processor $p_j$ to be its coordinator, but processor
$p_j$ does not consider itself to be the coordinator. 

The algorithm considers a processor as the valid coordinator, if it belongs to $seemCrd$ and has the $\preceq_{ct}$-greatest view identifier among the set of seemly coordinators (line~\ref{ln:valCrd}). Note that the set $valCrd_i$ either includes a single processor, $p_\ell$ which $p_i$ considers to be a valid coordinator, or $p_i$ does not consider any processor to be a valid coordinator that was recently live and connected (line~\ref{ln:noCrd}). In the latter case, $p_i$ will not propose a new view before its (local) failure detector indicates that it is within the primary component and that a supportive majority of recently live and connected processors also do not observe the availability of a valid coordinator (line~\ref{ln:incrCntr}). Note that in the case where $p_i$ is a valid coordinator, it will create and propose a new view whenever the last proposed view does not match the set of processors that were recently live and connected according to its (local) failure detector. In such a case no other processor but $p_i$ may propose, because it is the only one that retains a majority of processors that have accepted the previous view.
}

\trnsfr{
The algorithm takes an agile approach to multicasting with atomic delivery guarantees. \sloppy{Namely, a new view is installed whenever the coordinator sees a change to its local failure detector, $failureDetector()$, which $p_i$ stores in $FD_i$ (line~\ref{ln:FD}).} 
Nevertheless, we might reach a configuration without a  view coordinator as a result of an arbitrary initial configuration, or of a coordinator becoming inactive. 
Using the failure detector heartbeat exchange, processors can detect such initially corrupted states.
Each participant that detects that it has no coordinator, seeks for potential candidates based on the exchanged information.
}

\trnsfr{
Processor $p_i$ can see the set of processors, $seemCrd_i$, that each \emph{seems} to be the view coordinator, because $p_i$ stored a message from $p_\ell \in FD_i$ in which $p_\ell$ $=$ $rep[\ell].propV.ID.wid$. Note that $p_i$ cannot consider $p_\ell$ as a (seemly) coordinator unless the conditions in line~\ref{ln:seemCrd} hold.
Intuitively, such a processor must be active according to $p_i$'s failure detector, and there is a majority of processors that also think so. 
Note that all these are based on local knowledge, which due to asynchrony might not be up to date. 
%
%
The next step is for $p_i$ to consider the processor in $seemCrd_i$ with the $\preceq_{ct}$-greatest view identifier (line~\ref{ln:valCrd}) as the valid coordinator. 
Here, set $valCrd_i$ is either a singleton or empty (line~\ref{ln:noCrd}). 
If $p_i$ considers some processor $p_\ell$ as a valid coordinator, it waits to hear from $p_\ell$ (or learn that it is not active). 
We call $p_i$ a \emph{follower} of $p_\ell$.
If there is no such processor, $p_i$ will only propose a new view if its failure detector indicates that there exists a supportive majority of active processors that are also without a valid coordinator (line~\ref{ln:incrCntr}). 
If such a majority exists, $p_i$ acquires a counter from the counter increment algorithm and proposes a new view, with the counter as the view ID, and the set of processors that appear active according to its failure detector as the group membership. 
}

\trnsfr{
As we show, if $p_i$'s view is accepted from {\em all} the processors in the view, then it proceeds to install the view, unless another processor who has obtained a higher counter does so. 
In a transition from one view to the next, there can be several processors attempting to become the coordinator (namely, those who according to their knowledge have a supporting majority). 
Still, by exploiting the intersection property of the supporting majorities we prove that each of these processors will propose a view at most once, and out of these, one view will be installed (i.e., we do not have never-ending attempts for new views to be installed). 
To satisfy the VS property, no new multicast message is delivered to a new  view, before the coordinator of this new view has collected all the participants' last delivered messages (of their prior views) and has resent the messages appearing not to have been delivered uniformly. 
}

\Paragraph{The coordinator-side}  
Processor $p_i$ is aware of its valid coordinatorship when $(valCrd_i$ $=$ $\{p_i\})$ (line~\ref{ln:switch}). 
It takes action related to its role as a coordinator when it detects the round end, based on input from other processors. 
During a normal ${\sf Multicast}$ round, $p_i$ observes the round end once for every view member $p_j$ it holds that $(rep_i[j].(view$, $status$, $rnd)$ $=$ $(view_i$, $status_i$, $rnd_i))$ in line~\ref{ln:switch}. 
In the cases of ${\sf Propose}$ and ${\sf Install}$ rounds, the algorithm does not need to consider the round number, $rnd$.

Depending on its $status$, the coordinator $p_i$ proceeds once it observes the successful round conclusion. 
At the end of a normal ${\sf Multicast}$ round (line~\ref{ln:mulC}), the coordinator increments the round number (line~\ref{ln:rndIncr}) after applying the changes to its local replica (line~\ref{ln:fetchCrd}) and aggregating the followers' input (line~\ref{ln:collect}). 
The coordinator continues from the end of a ${\sf Propose}$ round to an ${\sf Install}$ round after using the most recently received replicas to install a synchronized state of the emulated automaton (line~\ref{ln:proC}). 
At the end of a successful ${\sf Install}$ round, the coordinator proceeds to a ${\sf Multicast}$ round after installing the proposed view and the first round number (line~\ref{ln:insC}). 
(Note that implicitly the coordinator creates a new view if it detects that the round number is exhausted ($rnd > 2^{\tau}$), or if there is another member of its view that has a greater round number than the one this coordinator has. This can only be due to corruption in the initial arbitrary state which affected $rnd$ part of the state.)

\Paragraph{The follower-side} 
Processor $p_i$ is aware of its coordinator's identity when $(\valCrd_i$ $=$ $\{p_\ell\})$ and $i \neq \ell$ (line~\ref{ln:repF}). 
Being a follower, $p_i$ only enters this block of the pseudocode when it receives a new message, i.e., the first message round when installing a new view $(rep[\ell].rnd$ $=$ $0)$, the first time a message arrives $(rnd < rep[\ell].rnd)$ or a new view is proposed $(rep[\ell].(view\neq propV))$.

During normal ${\sf Multicast}$ rounds (lines~\ref{ln:optCond}--\ref{ln:fetchFol}) the follower $p_i$ adopts the coordinator's replica, applies the aggregated message of this round to its current automaton state so that it produces the needed side-effects, and then fetches new messages from the environment. 
Note that, in the case of a ${\sf Propose}$ round (line~\ref{ln:adoptProp}), the algorithm design stops $p_i$ from overwriting its round number, thus allowing the coordinator to know what was the last round number that it delivered during the last installed view.
This is only overwritten on upon the installation of the new view~(line~\ref{ln:adoptRep}). 

\Paragraph{The exchanging message and PCE optimization} 
Each processor periodically sends its current replica (line~\ref{ln:send}) and stores the received ones (line~\ref{ln:receive}). 
As an optimization, we propose to avoid sending the entire replica state in every 
${\sf Multicast}$ round. Instead, we consider a predefined constant, $PCE$ (periodic consistency enforcement), that determines the maximum number of ${\sf Multicast}$  rounds during which the followers do not transmit their replica state to the coordinator and the coordinator does not send its state to them (lines~\ref{ln:optR} and~\ref{ln:opt}). Note that the greater the $PCE$'s size, the longer it takes to recover from transient faults. Therefore, one has to take this into consideration when extending the approach of periodic consistency enforcement to other elements of replica, e.g., in $view$ and $propV$, one might want to reduce the communication costs that are associated with the $set$ field and the epoch part of the $ID$ field.

\subsection{Correctness Proof of Algorithm~\ref{alg:multVirSyn}}
\label{app:VS}
%
%

\remove{
\bl{An intuition to how the algorithm preserves Virtual Synchrony lies in that,} \trnsfr{
once a processor does not have a coordinator, it stops participating in group multicasting, and prior to delivering a new multicast message in a new  view, the algorithm assures that the coordinator of this new view has collected all the participants' last delivered messages (in their prior views) and resends the messages appearing not to have been delivered uniformly. 
To do so, each participant keeps the last delivered message and the view identifier that delivered this message. 
We show that this, together with the intersection property of majorities, (and after taking care of some subtle issues,) provides the virtual synchrony property. 
Starting from an arbitrary configuration, we show that if there is no valid coordinator, eventually a processor proposes a new view and, therefore, a valid coordinator is eventually elected. 
%
\remove{
To assure this, processors continuously exchange through the failure detector's token their coordinator's identifier (or $\bot$ if there's no such).
This helps to detect initially corrupted states when, say a processor $p_i$ might consider $p_j$ as its coordinator, but $p_j$ does not consider itself to be the coordinator. 
}
%
Combining the above with the self-stabilization of the counter increment algorithm, the data links, the failure detector and multicast, we are able to guarantee reaching a legal execution in which the virtual synchrony property is always satisfied.
}
The correctness proof shows that starting from an arbitrary state in an execution $R$ of Algorithm~\ref{alg:multVirSyn} and once the primary partition property (Definition~\ref{def:Primary}) holds throughout $R$, we reach a configuration $c \in R$ in which some processor with supporting majority $p_\ell$ will propose a view including its supporting majority. 
This view is either accepted by all its member processors or in the case where $p_\ell$ experiences a failure detection change, it can repropose a view. 
We conclude by proving that any execution suffix of $R$ that begins from such a configuration $c$ will preserve the virtual synchrony property and implement state machine replication. We begin with some definitions. 
}

\bl{
The correctness proof shows that starting from an arbitrary state in an execution $R$ of Algorithm~\ref{alg:multVirSyn} and once the primary partition property (Definition~\ref{def:Primary}) holds throughout $R$, we reach a configuration $c \in R$ in which some processor with supporting majority $p_\ell$ will propose a view including its supporting majority. 
This view is either accepted by all its member processors or in the case where $p_\ell$ experiences a failure detection change, it can repropose a view. 
}

\bl{We conclude by proving that any execution suffix of $R$ that begins from such a configuration $c$ will preserve the virtual synchrony property and implement state machine replication. We begin with some definitions. 
Intuitively, the latter part of the proof  
is deduced as follows: 
once a processor does not have a coordinator, it stops participating in group multicasting, and prior to delivering a new multicast message in a new  view, the algorithm assures that the coordinator of this new view has collected all the participants' last delivered messages (in their prior views) and resends the messages appearing not to have been delivered uniformly. 
To do so, each participant keeps the last delivered message and the view identifier that delivered this message. 
This, together with the intersection property of majorities,  provides the virtual synchrony property. We begin with some definitions.} 

Once the system considers processor $p_\ell$ as the view coordinator (Definition~\ref{def:Primary}) its supporting majority can extend the support throughout $R$ and thus $p_\ell$ continues to emulate the automaton with them. 
Furthermore, there is no clear guarantee for a view coordinator to continue to coordinate for an unbounded period when it does not meet the criteria of Definition~\ref{def:Primary} throughout $R$. 
Therefore, for the sake of presentation simplicity, the proof considers any execution $R$ with only \emph{definitive suspicions}, i.e., once processor $p_i$ suspects processor $p_j$, it does not stop suspecting $p_j$ throughout $R$. 
The correctness proof implies that eventually, once all of $R$'s suspicions appear in the respective local failure detectors, the system elects a coordinator that has a supporting majority throughout~$R$.

Consider a configuration $c$ in an execution $R$ of Algorithm~\ref{alg:multVirSyn} and a processor $p_i \in P$.
We define the \emph{local (view) coordinator} of $p_i$, say $p_j$, to be the only processor that, based on $p_i$'s local information, has a proposed view satisfying the conditions of lines~\ref{ln:seemCrd} and~\ref{ln:valCrd} such that $valCrd=\{p_j\}$.
$p_j$ is also considered the \emph{global (view) coordinator} if 
for all $p_k$ in $p_j$'s proposed view ($propV_j$), it holds that $valCrd_k=\{p_j\}$.
When $p_i$ has a (local) coordinator then $p_i$'s local variable $noCrd$ $=$ ${\sf False}$, whilst when it has no local coordinator, $noCrd$ $=$ ${\sf True}$. 
Moving to the proof, we consider the following useful remark on Definition~\ref{def:Primary} of page~\pageref{def:Primary}. 

\begin{remark}
Definition~\ref{def:Primary} suggests that we can have more than one processor that has supporting majority.
In this case, it is not necessary to have \emph{the same} supporting majority for all such processors. 
Thus for two such processors $p_i, p_j$ with respective supporting majorities $P_{\majority}(i)$ and $P_{\majority}(j)$ we do not require that $P_{\majority}(i) = P_{\majority}(j)$, but $P_{\majority}(i) \cap P_{\majority}(j) \neq \emptyset$ trivially holds.
\end{remark}

\begin{lemma}
\label{th:locCoord}
Let $R$ be an execution with an arbitrary initial configuration, of Algorithm~\ref{alg:multVirSyn} such that Definition~\ref{def:Primary} holds.
Consider a processor $p_i \in P_{\majority}$ which has a local coordinator ${p_k}$, such that $p_k$ is either inactive or it does not have a supporting majority throughout $R$.
There is a configuration $c \in R$, after which $p_i$ does not consider ${p_k}$ to be its local coordinator.
\end{lemma}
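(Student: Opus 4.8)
I would argue by contradiction, distinguishing the two disjuncts of the hypothesis on $p_k$. Throughout I use the standing assumption of the proof that $R$ has only definitive suspicions, so every local failure detector is eventually constant; fix a suffix $R'$ of $R$ after which all failure detectors are constant and all packets present in the arbitrary initial configuration have been flushed. Note that $p_i$ is active (it lies in $P_{\majority}$, which supports $p_\ell$ throughout $R$), and that $valCrd_i\subseteq seemCrd_i$ with every member of $seemCrd_i$ forced to lie in $FD_i$ by the guard $p_\ell=rep[\ell].propV.ID.wid\in FD$ of line~\ref{ln:seemCrd}; hence ``$p_i$ considers $p_k$ its local coordinator'' implies $p_k\in FD_i$.

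If $p_k$ is inactive, then after $p_k$'s last step it never again returns a token over its data link to $p_i$, so $p_i$'s heartbeat counter for $p_k$ is never reset; as $p_i$ keeps taking steps and receiving heartbeats from the active majority, this counter keeps incrementing until it reaches the threshold $W$ and then stays there forever (it is reset only by a token from $p_k$). So there is a configuration $c$ after which $p_k\notin FD_i$, hence after which $p_k\notin seemCrd_i\supseteq valCrd_i$, which is the claim.

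Now suppose $p_k$ is active but has no supporting majority throughout $R$, and assume towards a contradiction that $valCrd_i=\{p_k\}$ at infinitely many configurations. Since $FD_i$ is constant on $R'$ and $valCrd_i=\{p_k\}$ forces $p_k\in FD_i$, we have $p_k\in FD_i$ throughout $R'$. Whenever $p_k\in seemCrd_i$ with $rep_i[k].status\in\{{\sf Multicast},{\sf Install}\}$, the conjunct $crd(\ell)=\ell$ of line~\ref{ln:seemCrd} together with line~\ref{ln:noCrd} (which relays $crdID\gets valCrd$) forces $valCrd_k=\{p_k\}$, so $p_k$ is genuinely its own valid coordinator with a proposed view $propV_k$ of majority size. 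A short argument using the view-membership and failure-detector conjuncts of line~\ref{ln:seemCrd} (evaluated at both $p_i$ and $p_k$) then shows $p_i\in propV_k.set\subseteq sendSet_k$ (line~\ref{ln:sendSet}), so $p_i$'s stored replica $rep_i[k]$ converges on $R'$ to $p_k$'s current replica, and hence $Q:=rep_i[k].propV.set=propV_k.set$. The core step is to show $Q$ is a supporting majority for $p_k$ on $R'$: every $p_j\in Q$ is active (else $p_k$'s heartbeat eventually suspects $p_j$, forcing $p_k$ to re-propose via line~\ref{ln:incrCntr}, contradicting stability of $rep_i[k]$ on $R'$), each such $p_j$ therefore feeds its replica to $p_k$, and the biconditional conjunct of line~\ref{ln:seemCrd} evaluated on these now-accurate copies gives $p_j\in Q\iff p_k\in FD_j$ for all $p_j$ --- precisely the condition of Definition~\ref{def:Primary} for $p_k$ and $Q$, contradicting the hypothesis. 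The remaining subcase, in which $p_k$ stays in status ${\sf Propose}$ throughout $R'$, is handled separately: then by the round-completion test of line~\ref{ln:switch} some $p_j\in propV_k.set$ never accepts, which either makes $p_k$'s own $seemCrd_k$-test fail (so $crd(k)\ne k$ is relayed and $p_i$ drops $p_k$) or exhibits a member witnessing that $Q$ is not supporting --- again a contradiction.

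The main obstacle is the accuracy bookkeeping inside this hard case: showing that in the stabilized suffix $p_i$'s copy $rep_i[k]$ and, through the coordinator $p_k$, the entries $rep_k[j].FD$ for $p_j\in propV_k.set$ reflect current states rather than adversarial leftovers from the initial configuration. This needs a careful case analysis of which processors appear in each $sendSet$ (line~\ref{ln:sendSet}), using that the failure-detector token perpetually carries the reported coordinator identities $crd(\cdot)$, so that a stale $rep_i[j]$ cannot keep $p_k$ in $seemCrd_i$ forever; that $p_k$ re-proposes only finitely often (from eventual constancy of $FD_k$) and the ${\sf Propose}$-stuck subcase are the other points requiring care.
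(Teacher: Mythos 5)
Your Case~1 (inactive $p_k$, heartbeat counter reaching $W$) matches the paper, and your observation that the heartbeat-relayed coordinator identity $crd(k)$ eventually reflects $p_k$'s true state is also the mechanism the paper uses. The gap is in the ``core step'' of your active case. To contradict the hypothesis you must manufacture a genuine supporting majority for $p_k$, and for that you need the copies $rep_k[j].FD$ and $rep_i[j].FD$, for every $p_j\in propV_k.set$, to become accurate; the algorithm gives no such guarantee. A member $p_j$ that suspects $p_k$ and follows some other coordinator $p_m$ in status ${\sf Multicast}$ has $sendSet_j=seemCrd_j$ (line~\ref{ln:sendSet}), which need not contain $p_k$ (it suspects $p_k$) nor $p_i$, so $p_j$ may never send its replica to either of them; and the failure-detector token carries only the pairs $\langle pid, crdID\rangle$, not the $FD$ sets. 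Hence your claims that ``each such $p_j$ therefore feeds its replica to $p_k$'' and that the biconditional conjunct of line~\ref{ln:seemCrd} can be evaluated ``on now-accurate copies'' are unsupported, and these are exactly what your contradiction rests on. Two further weak points: the assertion that suspicion of some $p_j\in Q$ forces $p_k$ to re-propose ignores that line~\ref{ln:incrCntr} additionally requires a majority to have adopted $propV_k$, so re-proposal is not automatic; and the supporting majority you would obtain lives only in the stabilized suffix $R'$, whereas Definition~\ref{def:Primary} quantifies over every configuration of $R$ --- bridgeable under definitive suspicions, but you never argue it.

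The paper's proof avoids any need for third-party accuracy by reasoning about $p_k$'s own state. Since $p_k$ lacks a supporting majority, $p_k$'s own evaluation of the conditions of line~\ref{ln:seemCrd} eventually fails, so $p_k$ stops regarding itself as coordinator; then either $noCrd_k$ holds, in which case $p_k$ broadcasts its replica to its $FD$ and $p_i$ receives a copy whose $FD$ field (or view/status fields) violates $p_i$'s test in line~\ref{ln:seemCrd}, or $p_k$ adopts a different coordinator, in which case the relayed $crd(k)\neq k$ directly falsifies the last conjuncts of line~\ref{ln:seemCrd} at $p_i$. Your proposal uses the $crd(k)$ token only in the opposite direction (to infer $valCrd_k=\{p_k\}$) and then heads into the unproved bookkeeping; as written, the central step of the hard case is missing, not merely deferred.
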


\begin{proof} There are the two possibilities regarding processor $p_k$.

\noindent \textbf{Case 1:} We first consider the case where ${p_k}$ is inactive throughout $R$. 
By the design of our failure detector, $p_i$ is informed of $p_k$'s inactivity such that line~\ref{ln:FD} will return an $FD_i$ to $p_i$ where ${p_k} \notin FD_i$. 
The threshold $W$ that we set for our failure detector determines how soon $p_k$ is suspected.
By the first condition of line~\ref{ln:seemCrd} we have that $p_k \notin FD_i \Rightarrow p_k \notin seemCrd \Rightarrow p_k \notin valCrd_i $, i.e., $p_i$ stops considering $p_k$ as its local coordinator.  
By definitive suspicions, that $p_i$ does not stop suspecting $p_k$ throughout~$R$. 

We now turn to the case where ${p_k}$ is active, however it does not have a supporting majority throughout $R$, but $p_i$ still considers $p_k$ as its local coordinator, i.e. $valCrd_i=\{p_k\}$. Two subcases exist:

\noindent \textbf{Case 2(a):} $p_k$ considers itself to have a supporting majority, and  $p_i \in propV_k$. 
Note that the latter assumption implies that $p_k$ is forced by lines \ref{ln:mPrep} - \ref{ln:send} to propagate $rep_k[k]$ to $p_i$ in every iteration.
By the failure detector, there exists an iteration where $p_k$ will have $|FD_k = n/2 +1|$ and is informed that some $p_j \in propV_k$ has $p_k \notin FD_j$ and so the condition of line~\ref{ln:seemCrd} ($FD> \lfloor n/2 \rfloor$) fails for $p_k$, which stops being the coordinator of itself.
If $p_k$ does not find a new coordinator, hence $noCrd_k = {\sf True}$, then $p_k$ propagates its $rep_k[k]$ to $p_i$. 
But this implies that $p_i$ receives $rep_k[k]$ and stores it in $rep_i[k]$. 
Upon the next iteration of this reception, $p_i$ will remove $p_k$ from its $seemCrd$ set because $p_k$ does not satisfy the condition $|rep_i[k].FD|< \lfloor n/2 \rfloor$ of line~\ref{ln:seemCrd}. 
We conclude that $p_i$ stops considering $p_k$ as its local coordinator if $p_k$ does not find a new coordinator.
Nevertheless, $p_k$ may find a new coordinator before propagating $rep_k[k]$. If $p_k$ has a coordinator other than itself, then it only propagates $rep_k[k]$ to its coordinator and thus $p_i$ does not receive this information. 
We thus refer to the next case:

\noindent \textbf{Case 2(b):} $p_k$ has a different local coordinator than itself. 
This can occur either as described in Case 2(a) or as a result of an arbitrary initial state in which $p_i$ believes that $p_k$ is its local coordinator but $p_k$ has a different local coordinator. 
We note that the difficulty of this case is that $p_k$ only sends $rep_k[k]$ to its coordinator, and thus the proof of Case 2(a) is not useful here.
As explained in Algorithm~\ref{alg:multVirSyn}, the failure detector returns a set with the identities ($pid$) of all the processors it regards as active, as well as the identity of the local coordinator of each of these processors. 
As per the algorithm's notation, the coordinator of processor $p_k$ is given by $crd(k)$. 
Since $p_i$'s failure detector regards $p_k$ as active, then $crd(k)$ is indeed updated (remember that $p_i$ receives the token with $p_k$'s $crd(k)$ infinitely often from $p_k$), otherwise $p_k$ is removed from $FD$ and is not a valid coordinator for $p_i$.
But $p_k$ does not consider itself  as the coordinator (by the assumption of Case 2(b)), and thus it holds that $crd(k)\neq k$.
Therefore, in the first iteration after $p_i$ receives $crd(k)\neq k$, one of the last two conditions of line~\ref{ln:seemCrd} fails (depending on what is the view status that $p_i$ has in $rep_i[k]$) so $p_k \not \in seemCrd_i$ and thus $valCrd_i \neq \{p_k\}$. 
We conclude that any such $p_k$ stops being $p_i$'s coordinator and by the assumption of definitive suspicions we reach to the result. 
It is also important to note that $p_k$ never again satisfies all the conditions of line~\ref{ln:incrCntr} to create a new view.
\end{proof}

We now define the notion of ``propose'' more rigorously to be used in the sequel.
\begin{definition}
\label{def:propose}
Processor $p_{\ell} \in P$ with $status = {\sf Propose}$, is said to \emph{propose} a view $propV_{\ell}$, if in a complete iteration of Algorithm~\ref{alg:multVirSyn}, $p_{\ell}$ either satisfies $valCrd_{\ell} = \{p_{\ell}\}$ or satisfies all the conditions of line~\ref{ln:incrCntr} to create $propV_{\ell}$.
A proposal is completed when $propV_{\ell}$ is propagated through lines~\ref{ln:mPrep}--\ref{ln:send} to all the members of $FD_\ell$.
\end{definition}

\noindent The above definition does not imply that $p_{\ell}$ will continue proposing the view $propV$, since the replicas received from other processors may force $p_{\ell}$ to either exclude itself from $valCrd_\ell$ or create a new view (see Lemma~\ref{th:boundProp}).
If the view is installed, then the proposal procedure will stop, although $propV_\ell$ will still be sent as part of the replica propagation at the end of each iteration.
Also note that the origins of such a proposed view are not defined. 
Indeed it is possible for a view that was not created by $p_\ell$ but bears $p_\ell$'s creator identity to come from an arbitrary state and be proposed, as long as all the conditions of lines \ref{ln:seemCrd} and~\ref{ln:valCrd} are met.

\begin{lemma}
\label{th:VSstab}
If the conditions of Definition~\ref{def:Primary} hold throughout an execution $R$ of Algorithm~\ref{alg:multVirSyn}, then starting from an arbitrary configuration in which there is no global coordinator, the system reaches a configuration in which at least one processor with a supporting majority will propose a view (with ``propose'' defined as in Definition~\ref{def:propose}).
\end{lemma}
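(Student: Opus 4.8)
The plan is to argue by contradiction: assume that throughout $R$ no processor that has a supporting majority ever proposes a view in the sense of Definition~\ref{def:propose}, and derive a contradiction with Definition~\ref{def:Primary} and the absence of a global coordinator in the initial configuration. First I would pass to a suffix $R'$ of $R$ on which every local failure detector has stopped changing (possible since all suspicions are definitive, so each failure detector changes only finitely often) and, invoking Lemma~\ref{th:locCoord}, on which no processor $p_i \in P_{\majority}$ regards as its local coordinator any processor that is inactive or lacks a supporting majority. Thus on $R'$ the only processor any $p_i \in P_{\majority}$ can hold as a local coordinator is one that possesses a supporting majority.

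Next I would eliminate, case by case, the surviving ways in which some $p_i \in P_{\majority}$ could keep a valid local coordinator $p_k$ (which, by the previous step, has a supporting majority). If $p_k$ ever has $valCrd_k = \{p_k\}$ while $status_k = \mathsf{Propose}$, then $p_k$ is proposing by Definition~\ref{def:propose} --- contradiction. If $valCrd_k = \{p_k\}$, $status_k = \mathsf{Multicast}$ and $FD_k \neq propV_k.set$, then (since $p_k$ keeps sending its replica to all of $propV_k.set$ via lines~\ref{ln:mPrep}--\ref{ln:send} and its supporting majority echoes it) the second disjunct of line~\ref{ln:incrCntr} eventually fires at $p_k$, so $p_k$ proposes --- contradiction. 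If $valCrd_k = \{p_k\}$ and $status_k = \mathsf{Install}$, the coordinator-side block (lines~\ref{ln:switch},~\ref{ln:insC}) drives $p_k$ to $\mathsf{Multicast}$ with $view_k = propV_k$ once the members echo its replica, reducing to the preceding cases. The only remaining possibility is that $p_k$ behaves as a stable, self-consistent $\mathsf{Multicast}$ coordinator of $propV_k$ with $FD_k = propV_k.set$; but then, using the last two implications of line~\ref{ln:seemCrd} together with the $crd(\cdot)$ field carried on the failure-detector token --- exactly as in Case~2(b) of the proof of Lemma~\ref{th:locCoord} --- every member of $propV_k$ eventually sets $valCrd = \{p_k\}$, so $p_k$ is a global coordinator, contradicting the hypothesis that the initial configuration has none: under the contradiction hypothesis no view is ever installed, so $propV_k$ is one of the finitely many views present from the start and the tuple $(view, propV, status, rnd)$ of the members cannot be freshly consolidated into global-coordinator shape. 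Hence on a suffix of $R'$ every $p_i \in P_{\majority}$ has $noCrd_i = \mathsf{True}$ and $valCrd_i \neq \{p_i\}$.

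Finally I would exhibit a proposer. Take $p_\ell$ and $P_{\majority}$ as in Definition~\ref{def:Primary}. On that suffix, $|FD_\ell| > \lfloor n/2 \rfloor$ because $P_{\majority} \subseteq FD_\ell$ (the mutual never-suspect relation), and $|valCrd_\ell| \neq 1$; moreover, since each $p_k \in P_{\majority}$ permanently has $p_\ell \in FD_k$ (again Definition~\ref{def:Primary}) and has $noCrd_k = \mathsf{True}$, and since the data link delivers the heartbeat/replica infinitely often, $p_\ell$ eventually sees $|\{ p_k \in FD_\ell : p_\ell \in rep_\ell[k].FD \wedge rep_\ell[k].noCrd \}| > \lfloor n/2 \rfloor$. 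Thus the first disjunct of line~\ref{ln:incrCntr} becomes enabled at $p_\ell$ and $p_\ell$ executes $(status, propV) \gets (\mathsf{Propose}, \langle inc(), FD_\ell\rangle)$ --- i.e., $p_\ell$, a processor with a supporting majority, proposes a view. This contradicts the hypothesis and proves the lemma; that $p_\ell$ does not keep re-proposing indefinitely is the content of Lemma~\ref{th:boundProp}. The main obstacle I expect is the penultimate case of the second step: ruling out a ``ghost'' coordinator $p_k$ handed a self-consistent view together with status $\mathsf{Multicast}$ (or $\mathsf{Install}$) in the arbitrary initial configuration. Showing that such a $p_k$ must in fact be a genuine global coordinator, and hence contradicts the lemma's hypothesis rather than coexisting with it, calls for a simultaneous and careful bookkeeping of the echoed replica fields $view$, $propV$, $status$, $rnd$, the $crd(\cdot)$ entries on the failure-detector token, and the three consistency implications of line~\ref{ln:seemCrd}, all under the standing fact that the contradiction hypothesis freezes the set of views in the system; the remaining arguments are routine manipulations with majorities, the quorum intersection property, and the infinitely-often delivery guarantee of the self-stabilizing data link.
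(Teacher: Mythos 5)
Your overall skeleton matches the paper's: argue by contradiction, use Lemma~\ref{th:locCoord} to clear away local coordinators that are inactive or lack a supporting majority, and then show that the first disjunct of line~\ref{ln:incrCntr} eventually fires at a processor $p_\ell$ given by Definition~\ref{def:Primary}, so that $p_\ell$ proposes. That part is sound and is essentially Case~1(a)/Case~2 of the paper's proof.

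The genuine gap is exactly the case you flag yourself: a surviving local coordinator $p_k$ that \emph{does} have a supporting majority and sits in a self-consistent ${\sf Multicast}$ (or ${\sf Install}$) state inherited from the arbitrary initial configuration. Your attempted contradiction there is with the wrong hypothesis: the lemma only assumes there is no global coordinator \emph{in the initial configuration}, so showing that $p_k$ eventually becomes a global coordinator later in $R$ contradicts nothing. Your patch (``under the contradiction hypothesis no view is ever installed, so the members' tuples cannot be freshly consolidated into global-coordinator shape'') does not hold: becoming a global coordinator requires only that every member of $propV_k.set$ comes to have $valCrd=\{p_k\}$, and this can happen purely by propagation of $p_k$'s replica and of the $crd(\cdot)$ token from the arbitrary state, with no view ever being installed or created and no processor ever entering status ${\sf Propose}$ via line~\ref{ln:incrCntr}. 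So that branch of your case analysis never closes. The paper resolves this residual case differently: when the stale coordinator $p_k$ has a supporting majority and carries a view proposal (possibly originating in the arbitrary configuration, which the remark after Definition~\ref{def:propose} explicitly admits as a proposal once the conditions of lines~\ref{ln:seemCrd} and~\ref{ln:valCrd} are met), the lemma's conclusion is declared to hold \emph{trivially} --- $p_k$ is itself a majority-supported processor proposing/propagating a view --- rather than being refuted. In short, in this case you should discharge the lemma's conclusion through $p_k$, not seek a contradiction; as written, your proof has a hole precisely where you predicted the main obstacle would be.
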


\begin{proof}
By Definition~\ref{def:Primary}, at least one processor with supporting majority exists.
Denote  one such processor as $p_\ell$. 
Assume for contradiction that throughout $R$, no processor $p_\ell$ with supporting majority proposes a view. 
%
%
$p_\ell$ either has a local coordinator (that is not global) or does not have a coordinator.\\ 
\textbf{Case 1: $\mathbf{p_{\ell}}$ does not have a coordinator ($\mathbf{noCrd_{\ell} = {\sf True}}$)}.
If $p_{\ell}$ does not propose a view (as per the ``propose'' Definition~\ref{def:propose}), this  is because it does not hold a proposal that is suitable and it does not satisfy some condition of line~\ref{ln:incrCntr} which would allow it to create a new view.
The first condition of line~\ref{ln:incrCntr}, $(|FD|>\lfloor n/2\rfloor)$ is always satisfied by our assumption that $p_\ell$ is not suspected by a majority throughout $R$.
In the second condition, both (i) $((|valCrd_\ell| \neq 1)$ $\land$ $(|\{ p_i \in FD_{\ell} $ $:$ $p_{\ell}$ $\in$ $rep_{\ell}[i].FD_{\ell}$ $\land$ $rep_{\ell}[i].noCrd \}|$ $>$ $\lfloor n/2\rfloor))$ and (ii) $((valCrd_\ell = \{p_\ell\})$ $\land$ $(FD_\ell \neq propV_\ell.set) \land (|\{p_i\in FD:$ $rep[i].propV = propV\}|$ $> \lfloor n/2 \rfloor))$ must fail due to our assumption that $p_\ell$ never proposes.
Indeed (ii) fails since $noCrd_\ell = {\sf True} \Rightarrow valCrd_\ell \neq \{p_\ell\}$.
If the first expression also fails, this implies that throughout $R$, $p_\ell$ does not know of a majority of processors with $noCrd = {\sf True}$ and so it cannot propose a new view.

Let's assume that only one processor $p_j \in P_{\majority}(\ell) \subseteq FD_\ell$ is required to switch from $noCrd_j= {\sf False}$ to $\sf True$ in order for $p_\ell$ to gain a majority of processors without a coordinator.
But if $noCrd_j= {\sf False}$ then $p_j$ must already have a coordinator, say $p_k$.
We have the following two subcases:\\
\emph{Case 1(a):} $p_k$ does not have a supporting majority.
Lemma~\ref{th:locCoord} guarantees that $p_j$ stops considering $p_k$ as its local coordinator.
Thus $p_j$ eventually goes to $noCrd= {\sf True}$ and by the propagation of its replica, $p_\ell$ receives the required majority to go into proposing a view.
But this contradicts our initial assumption, so we are lead to the following case.\\
\emph{Case 1(b):} $p_k$ has a supporting majority and a view proposal $propV_k$ from the initial arbitrary configuration but is not the global coordinator. 
But this implies that the Lemma trivially holds, and so the following case must be true.\\
\noindent\textbf{Case 2: $\mathbf{p_{\ell}}$ has a coordinator, say $\mathbf{p_{k'}}$.}
The two subcases of whether $p_{k'}$ has a supporting majority or not, are identical to the two subcases 1(a) and 1(b) concerning $p_k$ that we studied above.
Thus, it must be that either $p_\ell$ will eventually propose a label, or that $p_{k'}$ has a proposed view, thus contradicting our assumption and so the lemma follows.
~\end{proof}

\vspace{.4em}
Lemma~\ref{th:VSstab} establishes that at least one processor with supporting majority will propose a view in the absence of a valid coordinator. 
We now move to prove that such a processor will only propose one view, unless it experiences changes in its $FD$ that render the view proposal's membership obsolete.
The lemma also proves that any two processors with supporting majority will not create views in order to compete for the coordinatorship.

\begin{lemma}[\bl{Closure and Convergence}]
\label{th:boundProp}
If the conditions of Definition~\ref{def:Primary} hold throughout an execution $R$ of Algorithm~\ref{alg:multVirSyn}, then starting from an arbitrary configuration, the system reaches a configuration in which any processor $p_\ell$ with a supporting majority proposes a view $propv_\ell$, 
and cannot create a new proposed view in $R$ unless $FD_\ell \neq propV_\ell.set$
and a majority of processors has adopted $propV_\ell$.
As a consequence, the system reaches a configuration in which one processor with supporting majority is the global coordinator until the end of the execution.
\end{lemma}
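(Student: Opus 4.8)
The plan is to combine three earlier ingredients: the stabilization of the counter increment algorithm (Theorem~\ref{th:countersIncrMonotonic}), which after a bounded prefix makes $inc()$ return strictly increasing, totally ordered identifiers and bounds the supply of ``corrupt'' counters coming from the arbitrary initial configuration; the ``definitive suspicions'' convention of this section, under which every $FD_i$ is monotonically shrinking and hence changes at most $n-1$ times; and Lemmas~\ref{th:locCoord} and~\ref{th:VSstab}, which respectively guarantee that a stale or unsupported coordinator is eventually abandoned and that at least one processor with a supporting majority does propose a view. The ``unless'' clause of the statement is read directly off line~\ref{ln:incrCntr}: a processor $p_\ell$ creates a new view through the first disjunct only when $|valCrd_\ell|\neq 1$ and a majority reports $noCrd={\sf True}$, and through the second disjunct only when $FD_\ell\neq propV_\ell.set$ and a majority has adopted $propV_\ell$ --- exactly the stated condition.

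First I would show that each $p_\ell$ with a supporting majority $P_{\majority}(\ell)$ creates only a bounded number of views in $R$. Once $p_\ell$ has propagated a proposal $propV_\ell$ that $P_{\majority}(\ell)$ has adopted and no competing proposal of $\preceq_{ct}$-greater identifier is circulating, $p_\ell$ satisfies $valCrd_\ell=\{p_\ell\}$ (by lines~\ref{ln:seemCrd}--\ref{ln:valCrd}), which disables the first disjunct of line~\ref{ln:incrCntr}; and the second disjunct can fire only right after $FD_\ell$ has changed, which by definitive suspicions happens at most $n-1$ times. What remains is to bound how often a competing $\preceq_{ct}$-greater proposal can knock $p_\ell$ out of $valCrd_\ell$ and re-enable the first disjunct: such a proposal is either issued by one of the finitely many processors with a supporting majority (each of which, inductively, proposes boundedly often) or carries a ``corrupt'' identifier, of which Theorem~\ref{th:countersIncrMonotonic} guarantees only boundedly many can ever appear; and Lemma~\ref{th:locCoord} ensures $p_\ell$ cannot be trapped forever as a follower of an unsupported candidate. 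Hence after a bounded prefix of $R$ no processor with a supporting majority creates any new view.

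Next I would single out the global coordinator. After that prefix let $\mathcal{V}$ be the finite, nonempty (Lemma~\ref{th:VSstab}) set of views proposed by processors with a supporting majority, and let $v^{\ast}\in\mathcal{V}$ have the $\preceq_{ct}$-greatest identifier --- well defined since $\preceq_{ct}$ is a total order on stabilized counters --- proposed by $p_m$ with supporting majority $P_{\majority}(m)$. Since $propV_m.set=FD_m\supseteq P_{\majority}(m)$ after $FD_m$ stabilizes, each $p_k\in P_{\majority}(m)$ (and $p_m$ itself) receives $p_m$'s replica infinitely often via lines~\ref{ln:sendSet}--\ref{ln:send} and, by the tests of line~\ref{ln:seemCrd}, places $p_m$ in its $seemCrd$ set. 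Any candidate in $seemCrd_k$ with an identifier $\preceq_{ct}$-greater than $v^{\ast}.ID$ must lack a supporting majority throughout $R$ (else it would contradict the maximality of $v^{\ast}$) and is therefore eventually removed from $seemCrd_k$ by the reasoning of Lemma~\ref{th:locCoord}. Thus eventually $valCrd_k=\{p_m\}$ for every $p_k\in propV_m.set$, i.e., $p_m$ is the global coordinator; and since $FD_m=propV_m.set$ from then on, the ``unless'' clause never fires again for $p_m$, so by the coordinator-side code (lines~\ref{ln:switch}--\ref{ln:insC}) $p_m$ drives the ${\sf Propose}\to{\sf Install}\to{\sf Multicast}$ transitions and remains the global coordinator until the end of $R$.

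The main obstacle will be the bound in the second paragraph, i.e., proving that the leader-election churn terminates. This is where one must argue carefully that an adversarial interleaving cannot use a steady trickle of corrupt high-valued counters to repeatedly dislodge a supported processor from $valCrd$ and force fresh $inc()$ calls; the key is that Theorem~\ref{th:countersIncrMonotonic} caps the number of such corrupt counters while Lemma~\ref{th:locCoord} and the monotone shrinking of the $FD$ sets rule out any other source of unbounded re-proposal. The remainder is essentially bookkeeping on the two disjuncts of line~\ref{ln:incrCntr} and the membership conditions of lines~\ref{ln:seemCrd}--\ref{ln:valCrd}.
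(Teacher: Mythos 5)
Your overall architecture (Lemma~\ref{th:VSstab} for the existence of a proposal, total order of the $inc()$ identifiers, ``the $\preceq_{ct}$-greatest supported proposal wins and its proposer stays global coordinator'') matches the paper's, but the step you yourself flag as ``the main obstacle'' --- bounding the re-proposal churn --- is exactly where the paper uses an idea that is absent from your argument, and your substitute does not close it. The paper's mechanism is the intersection property of majorities: once the $\preceq_{ct}$-greatest proposal $propV_\ell$ of a supported processor has been adopted by a majority, any competing supported proposer $p_{\ell'}$ can never again satisfy the first disjunct of line~\ref{ln:incrCntr}, because every processor in $propV_{\ell}.set \cap propV_{\ell'}.set$ has a coordinator (namely $p_\ell$), so $p_{\ell'}$ cannot collect a majority reporting $noCrd$, nor a majority still holding $propV_{\ell'}$; hence each supported processor proposes at most once unless its own $FD$ reading changes. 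This is precisely the content of the ``unless'' clause: it is not enough to read it off the second disjunct of line~\ref{ln:incrCntr}; one must prove that the first disjunct cannot re-fire for a processor that has already proposed, and that is what the intersection argument does (the paper's Cases 2(a)--(d) walk through the four mutual-$FD$-membership configurations to verify it).

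Your replacement --- ``a competing greater proposal is issued either by one of the finitely many supported processors, each of which inductively proposes boundedly often, or carries a corrupt identifier, of which there are boundedly many'' --- is circular: the bound on $p_\ell$'s proposals is justified by a bound on the other supported processors' proposals, which is the very statement being proved, and no well-founded order (e.g., on view identifiers) is set up to break the cycle. Moreover, Theorem~\ref{th:countersIncrMonotonic} caps the counters left over from the arbitrary initial state, but it does not prevent supported processors from repeatedly generating fresh, legitimately larger counters via $inc()$ in response to one another --- that is exactly the ping-pong that the intersection argument excludes and a counting argument cannot. Your third paragraph (the maximal proposal $v^{\ast}$ and the stability of $p_m$) is fine and mirrors the paper's Case 1/Case 2(a), but it presupposes the bounded-churn claim of your second paragraph, so the gap propagates to the conclusion.
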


\begin{proof}
We distinguish the following cases:\\
\textbf{Case 1: Only one processor with supporting majority exists.} Assume there is only a single processor $p_\ell$ that has a supporting majority throughout $R$. 
According to Lemma~\ref{th:VSstab}, $p_\ell$ must eventually propose a view $propV_{\ell}$, based on the current $FD_{\ell}$ reading (line~\ref{ln:FD}) which becomes the $propV_{\ell}.set$. 
By Lemma~\ref{th:locCoord}, any other processor without a supporting majority will eventually stop being the local coordinator of any 
$p_j \in propV_\ell.set$
and since such processors do not have a supporting majority, the first condition of line~\ref{ln:incrCntr} will prevent them from proposing. 


Processor $p_\ell$ continuously proposes $propV_\ell$ until all processors in $propV_\ell.set$ have sent a replica showing that they have adopted $propV_\ell$ as their $propV$.
Every processor that is alive throughout $R$ and in $FD_\ell$ should receive this replica through the self-stabilizing reliable communication.
The only condition that may prevent $p_j$ to adopt $propV_\ell$ is if for some $p_r \in rep_j[\ell].propV_\ell.set$ it holds that $p_\ell \not \in rep_j[r].FD$ (line~\ref{ln:seemCrd}). 
Plainly put, $p_j$ believes that $p_r$ suspects~$p_\ell$.

\textbf{Case 1(a):} If $p_j$'s information is correct about $p_r$, then $p_r \not \in P_{\majority}(\ell)$. 
Thus at some point $p_\ell$ will suspect $p_r$ and exclude $p_r$ from $FD_\ell$. 

\textbf{Case 1(b):} If $p_j$'s information is false --remnant of some arbitrary state-- then  $p_\ell \in FD_{r}$ and since $p_r$, by the last condition of line~\ref{ln:sendSet}, sends $rep_r[r]$ infinitely often to $p_j$, then $rep_j[r]$ will be corrected and $p_j$ will accept $propV_\ell$.

Since $p_\ell$ has a majority $P_{\majority}(\ell) \subseteq propV_{\ell}.set$, then at least a majority of processors have received $propV_\ell$ and eventually accept it.
If some processor $p_j \in propV_\ell$ does not adopt $p_\ell$'s proposal in $R$, it is eventually removed from $FD_{\ell}$ and thus does not belong to the supporting majority of $p_\ell$ (as detailed in Case 1(a) above).
%
%
%
%
By the above we note that $p_\ell$ is able to get at least the supporting majority $P_{\majority}(\ell)$ to accept its view if not all of the members in $propV_{\ell}.set$.
In the last case it can proceed to the installation of the view.
If there is any change in the failure detector of $p_\ell$ before it installs a view, $p_\ell$ can satisfy the second case of line~\ref{ln:incrCntr}, to create a new updated view.
Note that in the mean time no processor other than $p_\ell$ can satisfy the conditions of that line, and thus it is the only processor that can propose and become the coordinator.

\noindent Thus $p_\ell$ eventually becomes the coordinator if it is the single majority-supported processor.

\vspace{.4em}

%
%
 
%
%
\noindent\textbf{Case 2: More than one processor with supporting majority.} Consider two processors $p_\ell, p_{\ell'}$ that have a supporting majority such that each creates a view (line~\ref{ln:incrCntr}).
By the correctness of our counter algorithm, $inc()$ returns two distinct and ordered counters to use as view identifiers. 
Without loss of generality, we assume that $propV_\ell$ proposed by $p_\ell$ has the greatest identifier of all the counters created by calls to $inc()$.
We identify the following four subcases:

\textbf{Case 2(a):} $p_\ell \in FD_{\ell'} \land p_{\ell'} \in FD_{\ell}$. 
In this case $p_{\ell'}$ will propose its view $propV_{\ell'}$ and wait for all $p_i \in propV_{\ell'}.set$ to adopt it (line~\ref{ln:switch}). 
Whenever $p_\ell$ receives $propV_{\ell'}$, it will store it but will not adopt it, since $propV_{\ell'}.ID \preceq_{ct} propV_\ell.ID$ (line~\ref{ln:valCrd}).
The proposal $propV_\ell$ is also propagated to every $p_i \in propV_\ell.set$. 
Since there is no greater proposed view identifier than $propV_\ell.ID$, this is adopted by all $p_i \in propV_\ell$ which also includes $p_{\ell'}$ as well. 
Thus any processor with supporting majority that belonged to the proposed set of $p_\ell$ will propose at most once, and 
$p_\ell$ will become the sole coordinator.
Note that if $p_{\ell'}$ is prevented from adopting $propV_\ell$ for some time, this is due to reasons detailed and solved in Case~1 of the previous lemma.
The case where the failure detection reading changes for $p_\ell$ is also tackled as in Case~1 of this lemma, by noticing that if $p_\ell$ manages to get a majority of processors of $propV.set$ then $p_{\ell}$ will change its proposed view without losing this majority.

\textbf{Case 2(b):} $p_\ell \not \in FD_{\ell'} \land p_{\ell'} \not \in FD_{\ell}$.
Since both processors were able to propose, this implies that a majority of processors that belonged to each of $p_\ell$'s and $p_{\ell'}$'s supporting majority had informed that they had no coordinator (line~\ref{ln:incrCntr}).
Each of $p_\ell$ and $p_{\ell'}$, proposes its view to its $propV.set$, and waits for acknowledgments from \emph{all} the processors in  $propV.set$ (line~\ref{ln:switch}), in order to install the view.
Since $p_{\ell} \not \in FD_{\ell'}$, $p_{\ell'}$ does not consider $propV_{\ell}$ a valid proposal (line~\ref{ln:seemCrd}) and retains its own proposal that it propagates. The same is done by $p_\ell$. 
Since $p_{\ell}$ has the greatest label, any 
$p_i \in propV_{\ell}.set \cap propV_{\ell'}.set$
might initially adopt $propV_{\ell'}$ but it will eventually choose the greatest $propV_{\ell}$.
If $p_{\ell'}$'s proposal was accepted by all members of $propV_{\ell'}$ then this means that $p_{\ell'}$ became the global coordinator but will then lose the coordinatorship  to $p_{\ell}$ because $propV_{\ell}$ has a greater view identifier. 

What is more crucial, is that $p_{\ell'}$ cannot make another proposal, since it will not have a majority of processors that do not have a coordinator. 
This is deduced from  the intersection property of the two majorities ($propV_{\ell}.set$ and $propV_{\ell'}.set$). 
Since any processor $p_k$ in the intersection 
$propV_{\ell}.set \cap propV_{\ell'}.set$
has $p_\ell$ as its coordinator, $p_{\ell'}$ does not satisfy the condition $|\{ p_k \in FD_{\ell'} $ $:$ $p_{\ell'}$ $\in$ $rep[k].FD$ $\land$ $rep[k].noCrd \}|$ $>$ $\lfloor n/2\rfloor$ of line~\ref{ln:incrCntr}, and thus cannot propose a new view. 
Processor $p_\ell$ will install its view and remains the sole coordinator. 
Also, $p_\ell$ is the only one that can change its view due to failure detector change since it manages to get a majority of processors in $propV_{\ell}.set$ as opposed to $p_{\ell'}$.


\textbf{Case 2(c):} $p_\ell \in FD_{\ell'} \land p_{\ell'} \not \in FD_{\ell}$. Here we note that since $p_\ell$ has the greatest counter but has not included $p_{\ell'}$ to its $propV_{\ell}.set$, it should eventually be able to get all the processors in $propV_{\ell}.set$ to follow $propV_{\ell}$ by using the arguments of Case 2(a).
In the mean time $p_{\ell'}$ will, in vain, be waiting for a response from $p_\ell$ accepting $propV_{\ell'}$.
We note that $p_{\ell'}$  will not be able to initiate a new view once $propV_\ell$ is accepted, since it will not be able to gather a majority of processors with either $noCrd = {\sf True}$ or proposed view $propV_{\ell'}$. 

\textbf{Case 2(d):} $p_\ell \not \in FD_{\ell'} \land p_{\ell'}  \in FD_{\ell}$. This case is not symmetric to the above due to our assumption that $p_{\ell}$ is the one that has drawn the greatest view identifier from $inc()$.
Here $propV_{\ell}.set$ includes $p_{\ell'}$ so $p_{\ell}$ waits for a response from $p_{\ell'}$ to proceed to the installation of $propV_{\ell}$.
On the other hand, $p_{\ell'}$  will be waiting for responses from the processors in $propV_{\ell'}.set$.
Any 
$p_i \in propV_{\ell}.set \cap propV_{\ell'}.set$
cannot keep $propV_{\ell}$ (even if initially it has accepted it, since it does not satisfy condition $p_{\ell'}$ $\in$ $rep[\ell].propV.set$ $\Leftrightarrow$ $p_\ell$ $\in$ $rep[\ell'].FD $ of line~\ref{ln:seemCrd}.
Thus $p_i$ accepts  $propV_{\ell'}$ instead of $propV_{\ell}$, $p_\ell$ cannot propose a different view since it will not be able to get a majority of processors that have $propV_{\ell}$.


By the above exhaustive examination of cases, we reach to the result.
Note that the above proof guarantees both convergence and closure of the algorithm to a legal execution, since $p_\ell$ remains the coordinator as long as it has a supporting majority.
\end{proof}

\begin{theorem}
Starting from an arbitrary configuration, any execution $R$ of Algorithm~\ref{alg:multVirSyn} satisfying Definition~\ref{def:Primary}, simulates automaton replication preserving the virtual synchrony property.
\label{th:VSgood}
\end{theorem}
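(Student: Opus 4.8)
The plan is to split the argument into a \emph{convergence} part and a \emph{closure} part, the standard shape of a self-stabilization proof. For convergence I would first collect the facts already proved: by Theorem~\ref{th:oneGreatest4All} together with Lemma~\ref{th:incrCntrMaxLabel} and Theorem~\ref{th:countersIncrMonotonic}, after a bounded prefix the $inc()$ interface returns counters drawn from a common epoch that are totally ordered by $\preceq_{ct}$ and strictly increasing; the self-stabilizing data link provides reliable FIFO channels; the heartbeat construction provides the majority-based failure detector of Definition~\ref{def:Primary}; and the token-based scheme provides uniform multicast. Then Lemma~\ref{th:boundProp} gives a configuration $c\in R$ after which exactly one processor $p_\ell$ with a supporting majority is the \emph{global coordinator}, it remains the coordinator for the rest of $R$ (changing only $propV$, and only when $FD_\ell$ changes while it still retains a majority), and no other processor ever again satisfies line~\ref{ln:incrCntr}. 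I would also argue here that residual garbage in the replica array is flushed: Lemma~\ref{th:locCoord} eliminates phantom coordinators, a bogus out-of-range or too-large $rep[j].rnd$ forces $p_\ell$ to create a fresh view (per the coordinator-side description), and stale $crd(\cdot)$ pointers are corrected through the heartbeat exchange, so after $c$ every follower's tuple $(view,status,rnd,state,msg)$ is one that $p_\ell$ actually produced.

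For closure I would fix the suffix $R'$ of $R$ that begins at $c$ and establish two things. \textbf{(i) Within one installed view} the coordinator-driven ${\sf Multicast}$ rounds realise atomic, uniform delivery: in each round $p_\ell$ waits (line~\ref{ln:switch}) until every $p_j\in view.set$ reports $(view,status,rnd)=(view_\ell,{\sf Multicast},rnd_\ell)$, aggregates the members' $input$ fields into $msg[n]$ (line~\ref{ln:collect}), increments $rnd$, and propagates; each follower applies exactly this $msg$ via $apply(state,\cdot)$ and then copies $rep[\ell]$ (lines~\ref{ln:optCond}--\ref{ln:adoptRep}), so all view members apply the identical sequence of aggregated messages, producing the same side effects exactly once — the monotone $rnd$ together with the ``enter only on genuinely new information'' guard (line~\ref{ln:repF}) rules out skips and duplicates, and this also yields the liveness clause of the task: while a majority stays alive, every environment input fetched by $fetch()$ is eventually aggregated and applied by the majority. \textbf{(ii) Each view transition preserves VS}: when $p_\ell$ proposes $propV$ it moves to ${\sf Propose}$ and, crucially, followers do \emph{not} overwrite their $rnd$, so $p_\ell$ collects from a majority their last-installed $view$, $rnd$, $state$ and $msg$; by the intersection property at least one reporting member belonged to the view in which the last completed transition occurred and therefore carries the $\preceq_{ct}$-largest view identifier, so $synchState$/$synchMsgs$ consolidate to that member's state and last-delivered messages, which $p_\ell$ then pushes to all of $propV.set$ in the ${\sf Install}$ round before any new ${\sf Multicast}$ round starts. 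Hence any two processors that are members of two consecutive views deliver identical message sets in those views — exactly the virtual synchrony requirement — and the emulated automaton's state is transported across views intact. I would finish by noting that the periodic consistency enforcement ($PCE$, lines~\ref{ln:optR} and~\ref{ln:opt}) only rebroadcasts $p_\ell$'s full state every $PCE$ rounds and so affects stabilization time but not safety in $R'$, where states are already consistent.

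The step I expect to be the main obstacle is (ii), the rigorous VS-across-views argument: one must pin down precisely that the view identifier attached to the last delivered message set serves as a witness that survives into the new view via some member of the intersecting majority, and that $synchState$/$synchMsgs$ are specified so that they select that witness rather than a stale replica inherited from the arbitrary configuration. This forces careful bookkeeping of which fields of $rep[j]$ a follower may and may not overwrite during ${\sf Propose}$ rounds (so that the coordinator can trust the reported $rnd$), and a careful treatment of the corner case where $p_\ell$ re-proposes because $FD_\ell$ changed mid-transition — one has to show the re-proposal still carries forward the consolidated state and cannot ``lose'' a message already delivered to some member. The within-view part (i) and the convergence part are comparatively routine given Lemmas~\ref{th:locCoord},~\ref{th:VSstab} and~\ref{th:boundProp}.
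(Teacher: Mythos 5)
Your proposal is correct and follows essentially the same route as the paper: converge via Lemma~\ref{th:boundProp} to a configuration with a single global coordinator, then argue closure by (a) within-view multicast rounds where the coordinator waits for matching $(view,status,rnd)$ from all members before advancing, and (b) view transitions where followers freeze their replicas during ${\sf Propose}$, the majority-intersection property carries a member of the old view into the new one, and $synchState$/$synchMsgs$ consolidate its state before ${\sf Install}$. The only cosmetic difference is that the paper splits your step (ii) into two cases (same coordinator vs.\ coordinator change), but the substance of the argument is identical.
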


\begin{proof}
%
Consider a finite prefix $R'$ of $R$. 
Assume that in this prefix Lemma~\ref{th:boundProp} holds, i.e., we reach a configuration in which a processor $p_\ell$ has a supporting majority and is the global coordinator  with view $v$.
We define a \emph{multicast round} to be a sequence of ordered events: (i) $fetch()$ input and propagate to coordinator, (ii) coordinator propagates the collected messages of this round, (iii) messages are delivered and (iv) all view members $apply()$ side effects. 
The VS property is preserved between two consecutive rounds $r, r'$ that may belong to different views $v, v'$ (with possibly identical coordinators $p_\ell, p_{\ell'}$) respectively, if and only if $\forall p_i \in v.set \cap v'.set$ it holds that every $rep_i[i].input$ at round $r$ is in $rep_[{\ell'}].msg[i]$ of round $r'$.
%
%
Our proof is progressive: Claim~\ref{thVS:VSnoVchange} proves that VS is preserved between any two consecutive multicast rounds, Claim~\ref{thVS:VSwithVchange} that VS is preserved in two consecutive views with the same coordinator and Claim~\ref{thVS:VSchangeCrd} preservation in two consecutive view installations where the coordinator changes.

\begin{claim}
\label{thVS:VSnoVchange}
VS is preserved between $r$ and $r'$ where $v=v'$.
\end{claim}

\begin{proof}
Suppose that there exists an input and a related message $m$ in round $r$ that is not delivered within $r$.
We follow the multicast round $r$. First observe the following.

\noindent \textsl{Remark:} Within any multicast round, the coordinator executes lines~\ref{ln:fetchCrd} to \ref{ln:rndIncr} only once and a follower executes lines~\ref{ln:optCond} to~\ref{ln:fetchFol} only once, because the conditions are only satisfied the first time that the coordinator's local copy of the replica changes the round number.

By our Remark we notice that $fetch()$ is called only once per round to collect input from the environment.
This cannot be changed/overwritten since followers can never access $rep[i] \gets rep[\ell]$ of line~\ref{ln:replicate} that is the only line modifying the $input$ field, unless they receive a new round number greater than the one they currently hold. We notice that the followers have produced side effects for the previous round (using $apply()$) based on the messages and state of the previous round.
Similarly, the coordinator executes $fetch()$ exactly once and only before it populates the $msg$ array and after it has produced the side effects for the environment that were based on the previous messages (line~\ref{ln:fetchCrd}). 
Line~\ref{ln:collect} populates the $msg$ array with messages and including $m$.
The coordinator $p_{\ell}$then continuously propagates its current replica but cannot change it by the Remark and until condition  $(\forall$ $p_i$ $\in$ $v.set$ $:$ $rep_\ell[i].(view$, $status$, $rnd)$ $=$ $(view_\ell$, $status_\ell$, $rnd_\ell))$ (line~\ref{ln:switch}) holds again. 
This ensures that the coordinator will change its $msg$ array only when every follower has executed line~\ref{ln:replicate} which allows the aforementioned condition to hold. 

Any follower that keeps a previous round number does not allow the coordinator to move to the next round.
If the coordinator moves to a new round, it is implied that $rep[i] \gets rep[\ell]$ and thus message $m$ was received by any follower $p_i$, by our assumptions that the replica is propagated infinitely often and the data links are stabilizing.
Thus, by the assumptions, any message $m$ is certainly delivered within the view and round it was sent in, and thus the virtual synchrony property is preserved, whilst at the same time common state replication is achieved.
\end{proof}

\begin{claim}
\label{thVS:VSwithVchange}
VS is preserved between $r$ and $r'$ where $v\neq v'$ and $p_\ell = p_{\ell'}$.
\end{claim}

\begin{proof}
We now turn to the case where from one configuration $c_{safe}$ we move to a new $c_{safe}'$ that has a different view $v'$ but has the same coordinator $p_{\ell}$.
Once $p_\ell$ is in an iteration where the condition $FD \neq propV.set$ of line~\ref{ln:incrCntr} holds, a view change is required.
Since $p_\ell$ is the global coordinator holds, no other processor can satisfy the condition $(|\{p_k\in FD_{\ell}:$ $rep[k]_{\ell}.propV = propV_{\ell}\}|$ $> \lfloor n/2 \rfloor)$ of line~\ref{ln:incrCntr}, and so only $p_{\ell}$. 
For more on why this holds one can prefer to Lemma~\ref{th:VSstab}. 
Processor $p_\ell$ creates a new $propV$ with a new view ID taken from the increment counter algorithm, which is greater than the previous established view ID in $v.ID$.
The last condition of line~\ref{ln:switch} guarantees that $p_\ell$ will not execute lines ~\ref{ln:fetchCrd} to \ref{ln:insC} and thus will not change its $rep.(state, input, msg)$ fields, until all the expected followers of the proposed view have sent their replicas.
Followers that receive the proposal will accept it, since none of the conditions that existed change and so the new view proposal enforces that  $valCrd=\{p_\ell\}$.
Moreover, the proposal satisfies the condition of line~\ref{ln:repF} and the followers of the view enter status {\sf Propose} leading to the installation of the view.
What is important is that virtual synchrony is preserved since no follower is changing $rep.(state, input, msg)$ during this procedure, and moreover each sends its replica to $p_\ell$ by line~\ref{ln:sendSet}.
Once the replicas of all the followers have been collected, the coordinator creates a consolidated $state$ and $msg$ array of all messages that were either delivered or pending.
$p_\ell$'s new replica is communicated to the followers who adopt this state as their own (line~\ref{ln:adoptRep}).
Thus virtual synchrony is preserved and once all the processors have replicated the state of the coordinator, a new series of multicast rounds can begin by producing the side effects required by the input collected before the view change.\\
\end{proof}

\begin{claim}
\label{thVS:VSchangeCrd}
VS is preserved between $r$ and $r'$ where $v\neq v'$ and $p_\ell \neq p_{\ell'}$.
\end{claim}
\begin{proof}
We assume that $p_\ell$ had a supporting majority throughout $R'$. 
We define a matching suffix $R''$ to prefix $R'$, such that $R''$ results from the loss of supporting majority by $p_\ell$.
Notice that since Definition~\ref{def:Primary} is required to hold, then some other processor with supporting majority $p_{\ell'}$,  will by Lemma~\ref{th:VSstab} propose the view $v'$ with the highest view ID.
We note that by the intersection property and the fact that a view set can only be formed by a majority set, $\exists p_i \in v \cap v'$. 
Thus, the ``knowledge'' of the system, $(state, input, msg)$ is retained within the majority.

As detailed in step 2, if a processor $p_i$ had $noCrd = {\sf True}$ for some time or was in status {\sf Propose} it did not incur any changes to its replica.
If it entered the {\sf Install} phase, then this implies that the proposing processor has created a consolidated state that $p_i$ has replicated.
What is noteworthy is that whether in status ${\sf Propose}$ or ${\sf Install}$, if the proposer collapses (becomes inactive or suspected), the virtual synchrony property is preserved.
It follows that, once status {\sf Multicast} is reached by all followers, the system can start a practically infinite number of multicast rounds.

Thus, by the self-stabilization property of all the components of the system (counter increment algorithm, the data links, the failure detector and multicast) a legal execution is reached in which the virtual synchrony property is guaranteed and common state replication is preserved.
content...
\end{proof}
\end{proof}

\subsection{\bl{Algorithm Complexity}}
\bl{The local memory for this algorithm consists of $n$ copies of two labels, of the encapsulated state (say of size $|S|$ bits) and of other lesser size variables. 
These give a \emph{space complexity} of order $\bigO(n\beta \log\beta + n|S|)$; recall that $\beta=n^3cap+2n^2-2n$.
\emph{Stabilization time} can be provided by a bound on view creations.
It is, therefore, implicit that stabilization is dependent upon the stabilization of the counter algorithm, i.e., $\bigO(n\cdot\beta\cdot t)$, before processors can issue views with identifiers that can be totally ordered.
When this is satisfied, then Lemma~\ref{th:boundProp} suggests that $O(n)$ view creations are required to acquire a coordinator, namely, in the worse case where every processor is a proposer.
Once a coordinator is established then Theorem~\ref{th:VSgood} guarantees that there can be practically infinite multicast rounds (0 to $2^\tau$). 
}

\section{Conclusion}
\label{s:concl} 
State-machine replication (SMR) is a service that simulates finite automata by letting the participating processors to periodically exchange messages about their current state as well as the last input that has led to this shared state. Thus, the processors can verify that they are in sync with each other. A well-known way to emulate SMRs is to use reliable multicast algorithms that guarantee {\em virtual synchrony}~\cite{DBLP:conf/wdag/KhazanFL98,DBLP:journals/jsa/Bartoli04}. To this respect, we have presented the first \bl{practically-}self-stabilizing algorithm that guarantees virtual synchrony, and used it to obtain a \bl{practically-}self-stabilizing SMR emulation; within this emulation, the system progresses in more extreme asynchronous executions in contrast to consensus-based SMRs, like the one in~\cite{DBLP:conf/netys/BlanchardDBD14}. One of the key components of the virtual synchrony algorithm is a novel \bl{practically-}self-stabilizing counter algorithm, that establishes an efficient practically unbounded counter, which in turn can be directly used to implement a \bl{practically-}self-stabilizing MWMR register emulation; this extends the work in~\cite{Alon2014} that implements SWMR registers and can also be considered simpler and more communication efficient than the MWMR register implementation presented in~\cite{DBLP:conf/netys/BlanchardDBD14}.   
\label{s:dis}
%

\subsection*{Acknowledgements}
We thank Iosif Salem for providing comments to improve the readability of this paper. 
\bl{We also thank the anonymous reviewers whose constructive feedback has
helped us to significantly improve the presentation of our results and of the manuscript
in general.}



\bibliographystyle{plain}
\bibliography{vs,SelfStabil}

\end{document}